\pgfplotsset{compat=1.18}
\DeclareSymbolFont{extraup}{U}{zavm}{m}{n}
\DeclareMathSymbol{\varheart}{\mathalpha}{extraup}{86}
\DeclareMathSymbol{\vardiamond}{\mathalpha}{extraup}{87}
\newtheorem{example}{Example}
\newtheorem{problem}{Problem}
\newtheorem{theorem}{Theorem}
\newtheorem{lemma}{Lemma}
\newtheorem{definition}{Definition}
\newtheorem{remark}{Remark}
\newtheorem{observation}{Observation}
\newtheorem{proposition}{Proposition}
\newtheorem{corollary}{Corollary}
\newcommand{\simon}[1]{{\color{red}{{}#1}}}
\newtcolorbox[number within=section]{boxalgorithm}[2][]{%
    colframe=black,   
    colback=white,             
    coltitle=black,            
    fonttitle=\bfseries,       
    enhanced,                  
    colbacktitle=white,  
    title={Algorithm~\arabic{algocf}: #2},
    #1  
}
\newif\ifshowline
\title{Compression of Voxelized Vector Field Data by Boxes is Hard}
\author[1]{Simon Zhang\thanks{Department of Computer Science and Engineering, The Ohio State University. Email: zhang.680@osu.edu}}
\author{Anonymous Authors}
\date{}
\begin{document}

\maketitle
\begin{abstract}
    Voxelized vector field data consists of a vector field over a high dimensional lattice. The lattice consists of integer coordinates called voxels. The voxelized vector field assigns a vector at each voxel. This data type encompasses images, tensors, and voxel data. 
    
    Assume there is a nice energy function on the vector field. We consider the problem of lossy compression of voxelized vector field data in Shannon's rate-distortion framework. This means the data is compressed and then decompressed up to an error bound on the energy distortion at each voxel. 
    
    Our first result is that under certain conditions, lossy compression of voxelized vector fields is undecidable to compute. This is caused by having an infinite number of Euclidean vectors. We formulate this problem instead in terms of clustering the finite number of indices of a voxelized vector field by boxes. We call this problem the $(k,D)$\textsc{-RectLossyVVFCompression} problem. 

    We show four main results about the $(k,D)$\textsc{-RectLossyVVFCompression} problem. The first is that it is decidable. The second is that decompression for this problem is polynomial time tractable. This means that the only obstruction to a tractable solution of the  $(k,D)$\textsc{-RectLossyVVFCompression} problem lies in the compression stage. This is shown by the two hardness results about the compression stage. We show that the compression stage is NP-Hard to compute exactly and that it is even APX-Hard to approximate for $k,D\geq 2$. 
    
    Assuming $P\neq NP$, this shows  that when $k,D \geq 2$ there can be no exact polynomial time algorithm nor can there even be a PTAS  approximation algorithm for the  \\$(k,D)$\textsc{-RectLossyVVFCompression} problem.
\end{abstract}
\section{Introduction}
In 1959 Shannon formulated rate-distortion theory \cite{shannon1959coding}. In this theory we are given a domain $\mathcal{X}$ equipped with a distortion function $\rho: \mathcal{X}\times \mathcal{X} \rightarrow \mathbb{R}_{\geq 0}$. The distortion function on a pair $(x,y) \in \mathcal{X}\times \mathcal{X}$, denoted $\rho(x,y)$, determines the cost of replacing an element $x \in \mathcal{X}$ with the element $y \in \mathcal{X}$. In rate-distortion theory the goal is \textbf{lossy compression} of a random variable $X$ to an estimated random variable $\hat{X}$. The random variable $\hat{X}$ is called the reconstruction and differs from $X$ up to a bound on its distortion. The reconstruction is still supported on the support $\mathcal{X}$. Shannon formulates this as minimizing the \textbf{rate} $R$, or bits of information per sample from $X$ needed for the reconstruction $\hat{X}$~\cite{cover1999elements}.
A robust formulation for a sequence of random variables is expressed by the following optimization problem.
We call this the \textbf{lossy indexed compression} problem (\textsc{LossyIndexedCompression}):

Given an error bound $\varepsilon>0$ and $n$ non-i.i.d. random variables $X_1,...,X_n$, find the optimal rate $R_n^*$ satisfying:
\begin{subequations}
\begin{equation}
    R_n^*=\inf_{f_n,g_n}R_n
\end{equation}
\begin{equation}
   \text{ s.t. }  \rho(X_i, \hat{X}_i) \leq \varepsilon, \forall i=1,...,n 
\end{equation}
\begin{equation}
    (\hat{X}_1,...,\hat{X}_n)=g_n \circ f_n(X_1,...,X_n) 
\end{equation}
\end{subequations}
where 
\begin{equation}
    f_n:\mathcal{X}^n \rightarrow  \{1,...,2^{R_n}\}, g_n: \{1,...,2^{R_n}\} \rightarrow {\mathcal{X}}^n 
\end{equation}
Where the number $R_n$, is the \textbf{rate} we are optimizing and where the set $\{1,...,2^{R_n}\}$ is called the \textbf{codebook}, which consists of codewords.


We call the inequality constraint of the lossy indexed compression problem \textbf{index consistency} for distortion $\rho$ at error bound $\varepsilon$. 

Due to the non-i.i.d. assumption  and the index consistency criterion of the lossy indexed compression problem, the optimal rate is related to the Kolmogorov complexity~\cite{kolmogorov1965three} over index consistent sequences and their reconstructions. 

This is defined as the \emph{smallest number of bits of a code $Q^*$} so that on a prefix univeral Turing machine $U_T$:
\begin{equation}
    U_T(Q^*)=U_T(P;(x_1,...,x_n))=(\hat{x}_1,...,\hat{x}_n),\forall (x_1,...,x_n) \sim P(X_1,...,X_n)
\end{equation}
where $P$ is a program that when run on any $(x_1,...,x_n) \sim P(X_1,...,X_n)$ outputs an index consistent reconstruction $(\hat{x}_1,...,\hat{x}_n) \in \mathcal{X}^n$ for distortion $\rho$ and error bound $\varepsilon>0$.

Assuming the \textbf{decoder is Kolmogorov}, meaning: $g_n:=U_T$, for $U_T$  a prefix universal Turing machine, then we have:
\begin{equation}
    f_n: (x_1,...,x_n)\mapsto Q^*, (x_1,...,x_n) \sim P(X_1,...,X_n)
\end{equation}
\begin{equation}
    g_n: Q^*\mapsto U_T(Q^*)
\end{equation}
Since the length in bits of $Q^*$, denoted $\textsf{size}(Q^*)$ is shortest, it is unique. The codebook $\{1,...,2^{\textsf{size}(Q^*)}\}$ is in bijection with all possible bit strings of length $\textsf{size}(Q^*)$. This is the codomain and domain, $\{1,...,2^{\textsf{size}(Q^*)}\}$, of $f_n,g_n$ respectively. 
\begin{lemma}
In the \textbf{lossy indexed compression} problem with Kolmogorov decoder:
    \begin{equation}
     R_n^* \geq \textsf{size}(Q^*) -C
\end{equation}
where $C$ is dependent only on the choice of the prefix universal Turing machines.
\end{lemma}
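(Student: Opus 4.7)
My plan is to invoke the minimality in the definition of $Q^*$: any valid encoder--decoder pair with Kolmogorov decoder produces, for each input realization, a bit string that $U_T$ decodes to an index-consistent reconstruction, so its length cannot be shorter than $\textsf{size}(Q^*)$ up to the universal-machine overhead.

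First I would fix an arbitrary feasible pair $(f_n, g_n)$ with $g_n = U_T$ achieving rate $R_n$. For each realization $(x_1,\ldots,x_n)$, the codeword $f_n(x_1,\ldots,x_n)$ lives in $\{1,\ldots,2^{R_n}\}$ and thus admits a binary representation using at most $R_n$ bits; by feasibility of the pair, feeding this bit string into $g_n = U_T$ yields an index-consistent reconstruction $(\hat{x}_1,\ldots,\hat{x}_n)$ satisfying $\rho(x_i,\hat{x}_i)\leq \varepsilon$ for all $i$.

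Next I would convert this raw $R_n$-bit representation into a legitimate self-delimiting input for the prefix universal machine $U_T$. This conversion incurs at most an additive overhead $C$ depending only on the fixed parsing conventions of $U_T$ (for example, a fixed universal prefix encoding for the length header). By the definition of $Q^*$ as the shortest bit string for which $U_T$ outputs an index-consistent reconstruction of $(x_1,\ldots,x_n)$, we conclude $\textsf{size}(Q^*) \leq R_n + C$. Because this inequality holds for every feasible pair, taking the infimum over $(f_n,g_n)$ yields $\textsf{size}(Q^*) \leq R_n^* + C$, which rearranges to the claimed bound.

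The only delicate step is the bookkeeping of the constant $C$: the codebook $\{1,\ldots,2^{R_n}\}$ is merely an index set, whereas $\textsf{size}(Q^*)$ measures prefix-free inputs to $U_T$, so one must explicitly describe how a codebook index is parsed as a $U_T$-program. Different reasonable conventions change $C$ by additive constants, but crucially none can depend on $n$ or on the underlying distribution of $(X_1,\ldots,X_n)$---this is precisely the machine-invariance asserted in the lemma.
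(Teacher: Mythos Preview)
Your proposal is correct and is essentially an explicit unpacking of what the paper does: the paper's own proof is the single line ``This follows by the Fundamental theorem from~\cite{kolmogorov1965three},'' i.e., the invariance theorem for Kolmogorov complexity, and your argument spells out precisely that content (any feasible $U_T$-decodable code of rate $R_n$ yields, after a machine-dependent prefix overhead $C$, a program of length $\leq R_n + C$ that $U_T$ maps to an index-consistent reconstruction, hence $\textsf{size}(Q^*) \leq R_n + C$). The two are the same approach at different levels of detail.
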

\begin{proof}
    This follows by the Fundamental theorem from ~\cite{kolmogorov1965three}. 
\end{proof}
\begin{theorem}\label{thm: undecidable-lossy-compression}
    (Undecidability) 
        If there is a $\varepsilon>0$ and a $x \in \mathcal{X}$ where the set $B_x(\varepsilon)=\{\hat{x} \in \mathcal{X}: \rho(x,\hat{x})\leq \varepsilon\}$ is infinite, 
        then the \textbf{lossy indexed compression} problem with Kolmogorov decoder is undecidable.
\end{theorem}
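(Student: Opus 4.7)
The plan is to prove undecidability by reduction from the uncomputability of Kolmogorov complexity, in the style of the Chaitin-Berry diagonalization argument. The infinite-ball hypothesis is exactly what supplies the obstruction: if every ball $B_x(\varepsilon)$ were finite, the encoder could enumerate and test all candidate reconstructions and the problem would be decidable, so the source of undecidability must be the unbounded search space of valid reconstructions that $B_x(\varepsilon)$ provides.

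First, I would invoke the preceding lemma to replace ``compute $R_n^*$'' by ``compute $\textsf{size}(Q^*)$ up to the additive constant $C$'', since any algorithm solving the compression problem determines the length of the shortest $U_T$-program producing an index-consistent reconstruction, modulo $C$. I would then specialize to the single-sample family of instances $n=1$, $X_1 = x$ with $B_x(\varepsilon)$ infinite, where the feasible reconstructions are exactly the elements of $B_x(\varepsilon)$ and $\textsf{size}(Q^*) = \min_{\hat{x} \in B_x(\varepsilon)} K_{U_T}(\hat{x}) + O(1)$. The target of the reduction is to show that the mapping from instance to optimal rate cannot be computed by any Turing machine.

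Second, I would run the Chaitin-Berry diagonalization. Suppose for contradiction that a Turing machine $A$ correctly computes $R_n^*$ on all instances in this family. Using $A$ as an oracle, together with the counting bound $|\{y : K_{U_T}(y) \leq m\}| \leq 2^{m+1}$ and the infiniteness of $B_x(\varepsilon)$, for every $m$ one can locate some $\hat{x}_m \in B_x(\varepsilon)$ with $K_{U_T}(\hat{x}_m) > m$ via a program of size $O(\log m)$ that binary-searches for the threshold complexity. For $m$ sufficiently large this contradicts $K_{U_T}(\hat{x}_m) > m$, so no such $A$ exists.

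The main obstacle is the middle step: converting a hypothetical algorithm $A$ that outputs the optimal rate on each full instance into an oracle that reveals the Kolmogorov complexity of specific elements of $B_x(\varepsilon)$. My expectation is that this can be arranged by modifying the instance (varying $x$ within the ball, or using multiple samples whose joint index-consistency constraints force any valid reconstruction program to output a prescribed element $\hat{x}$, modulo an $O(1)$ overhead), so that each query to $A$ reports $K_{U_T}(\hat{x})$ up to constants. Once oracle access to individual Kolmogorov complexities inside the infinite ball is established, the Chaitin-Berry paradox applies essentially verbatim and yields the required contradiction.
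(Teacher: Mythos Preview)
Your plan is essentially the paper's: specialize to $n=1$ with $X_1=\delta_x$, identify the optimal rate (up to the additive constant $C$ from the preceding lemma) with $\min_{\hat{x}\in B_x(\varepsilon)} K_{U_T}(\hat{x}\mid x)$, and obtain a contradiction by Chaitin--Berry diagonalization over the infinite ball.

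Where you diverge is in the handling of what you call the ``main obstacle.'' The paper does not attempt to recover per-element Kolmogorov complexities from the full-problem oracle at all. Instead it factors the argument in two: first, an appendix lemma runs the Chaitin--Berry diagonalization directly to show that computing the optimal $Q^*$ over any infinite set is undecidable; second, the main proof simply observes that the family of instances $(\varepsilon,X_1)$ with $P(X_1)=\delta_x$ and $|B_x(\varepsilon)|=\infty$ is a \emph{subproblem} of \textsc{LossyIndexedCompression}, and invokes the general fact (Proposition on subproblems) that any subproblem Turing-reduces to the full problem. So an oracle for the full problem is already an oracle for the restricted one, and the appendix lemma finishes. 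Your proposed workaround---varying $x$ within the ball or stacking multiple samples to force the reconstruction to a prescribed $\hat{x}$ and thereby read off individual $K(\hat{x})$---is more machinery than the paper uses and is not needed once you frame the restricted instances as a subproblem.
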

\begin{proof}
Assume WLOG. $n=1$.

According to Lemma \ref{lemma: undecidable-kolmogorov} in the Appendix, for a single $x \in \mathcal{X}$, computing the optimal $Q^*$ over the infinite set $B_x(\varepsilon)=\{\hat{x} \in \mathcal{X}: \rho(x,\hat{x})\leq \varepsilon\}$ is undecidable.

This is a subproblem of the \textsc{LossyIndexedCompression} problem. This is because the instances are restricted to be of the form:
\begin{equation}
    (\varepsilon,X_1): P(X_1)=\delta_{x}, \lvert B_x(\varepsilon)\rvert=\infty
\end{equation}

Thus, we can Turing reduce this subproblem to the \textsc{LossyIndexedCompression}  problem, see Proposition \ref{prop: subproblem-Treduce} in Appendix. Thus the \textsc{LossyIndexedCompression} is undecidable.
\end{proof}
This result is caused by the infinite nature of $\mathcal{X}$ and the distortion $\rho$. If $\mathcal{X}$ involves real valued measurements over a metric, then the conditions of Theorem \ref{thm: undecidable-lossy-compression} hold true. For example, in this paper we are interested in understanding an indexed list of vectors from a vector field in the form of a voxelized vector field. Each vector consists of infinitely many possible real valued measurements. Their distortion is naturally measured by a distance. 

The index $\{1,...,n\}$ for the data, however, is of finite size. In this paper we will focus on the redundancies in this finite sized index for the lossy indexed compression problem. 

\textbf{Index Clustering: }
The lossy indexed compression problem requires the map of \emph{all $n$ assignments} $i \mapsto X_i$ to be compressed. 
To compress the $n$ assignments, certainly we could have a codebook consisting of codewords that represent a \emph{matching} consisting of exactly $n$ assignments $i\mapsto C_i$, $C_i$ a compression of $X_i$. Any encoder-decoder pair is index consistent with this codebook since none of the $n$ assignments are forgotten. 

However, if we consider compressing over the power set of $\{1,...,n\}$, then we can achieve an optimal compression rate no worse than the compression rate for this codebook. This can be viewed as form of ``clustering." 

This brings about the following Lemma:
\begin{lemma}\label{prop: power-expand}
    In the \textbf{lossy indexed compression problem}, any encoder $f_n: \mathcal{X}^n \rightarrow \{1,...,2^{R_n}\}$ can be expressed as the following \emph{power set expansion} composition:
\begin{equation}\label{eq: power-expand}
    f_n(X_1,...,X_n)=\Lambda(\{(\kappa(S),\phi(S,\{X_i: i \in S\})): S \subseteq \{1,...,n\}\})
\end{equation}
for some $\phi: 2^{\{1,...,n\}}\times 2^{\mathcal{X}^n} \rightarrow \mathcal{Y}$, some $\kappa: 2^{\{1,...,n\}}\rightarrow \mathcal{Z}$ and some $\Lambda: 2^{\mathcal{Z}\times \mathcal{Y}}\rightarrow \{1,...,2^{R_n}\}$. 

Furthermore, the optimal rate $R_n^*$ satisfies:
\begin{equation}
    R_n^*\leq R_n^{\text{match}}
\end{equation}
for encoders $f_n: \mathcal{X}^n \rightarrow \{1,...,2^{R_n^{\text{match}}}\}$ of the form:
\begin{equation}
    f_n(X_1,...,X_n)=\Lambda(\{(\kappa(\{i\}),\phi(\{i\},X_i)): i \in \{1,...,n\}\})
\end{equation}
for some $\phi:\{\{1\},...,\{n\}\}\times 2^{\mathcal{X}^n} \rightarrow \mathcal{Y}$, some $\kappa: \{\{1\},...,\{n\}\}\rightarrow \mathcal{Z}$ and some $\Lambda: 2^{\mathcal{Z}\times \mathcal{Y}}\rightarrow \{1,...,2^{R_n^{\text{match}}}\}$. 
\end{lemma}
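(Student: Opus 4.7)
My plan is to prove the two claims independently, with the first being a matter of constructing explicit witnesses and the second following by a general infimum-over-subclass argument.

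For the first claim, the goal is to show that the power set expansion form is fully expressive: any encoder $f_n$ can be recovered from it. The strategy is to exploit the fact that the full set $\{1,\ldots,n\}$ appears among the subsets $S$ over which the power set expansion ranges. I would define $\kappa$ to be the identity (or any injection) on $2^{\{1,\ldots,n\}}$, and define $\phi(S, \{X_i : i \in S\})$ to output a distinguished ``default'' symbol whenever $S \subsetneq \{1,\ldots,n\}$ and to output $f_n(X_1,\ldots,X_n)$ when $S = \{1,\ldots,n\}$. Then $\Lambda$ searches the input collection for the pair whose first coordinate is $\kappa(\{1,\ldots,n\})$ and returns its second coordinate. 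By construction this composition equals $f_n(X_1,\ldots,X_n)$. This establishes \eqref{eq: power-expand} as a canonical form, showing no expressive power is lost by writing encoders in power set expansion form.

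For the second claim, I would observe that the matching form is precisely the restriction of the power set expansion to the collection of singleton subsets $\{\{1\},\ldots,\{n\}\}$. Any matching encoder can therefore be extended to a power set encoder by defining $\kappa$ and $\phi$ trivially on non-singleton subsets and letting $\Lambda$ ignore those entries; this extension realizes the exact same map $\mathcal{X}^n \to \{1,\ldots,2^{R_n^{\text{match}}}\}$ and in particular preserves index consistency and rate. Hence the set of rates $R_n$ achievable by matching encoders is contained in the set of rates achievable by general encoders, and taking infima yields $R_n^* \leq R_n^{\text{match}}$.

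The main obstacle is largely notational rather than mathematical: the statement uses $\phi$ both with domain $2^{\{1,\ldots,n\}} \times 2^{\mathcal{X}^n}$ and, in the matching version, with domain restricted to singletons, so I would be careful to emphasize that the matching form is literally obtained by restricting the first argument of $\phi$ and $\kappa$ to singletons and then packaging the resulting singleton collection through $\Lambda$. Once that restriction is made explicit, the inequality $R_n^* \leq R_n^{\text{match}}$ is simply the statement that an infimum over a larger class is at most the infimum over a subclass.
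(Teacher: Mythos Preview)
Your argument for the second claim is essentially identical to the paper's: the matching form is a restriction of the general power set form, and since the lossy indexed compression problem minimizes over all encoder--decoder pairs, the infimum over the larger class is at most that over the subclass.

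For the first claim your construction differs from the paper's, and the difference matters. You place the computation of $f_n$ inside $\phi$ at the single subset $S=\{1,\ldots,n\}$ and let $\Lambda$ merely extract that value. The paper instead takes $\kappa(S):=S$ and $\phi(S,\{X_i:i\in S\}):=\{X_i:i\in S\}$ as pure pass-throughs and defines $\Lambda$ to apply $f_n$ to the reconstructed tuple $(Y_1,\ldots,Y_n)$. The advantage of the paper's choice is that $\Lambda$ sees the \emph{entire} family $\{(S,\{X_i:i\in S\}):S\subseteq\{1,\ldots,n\}\}$, and in particular all singleton pairs $(\{i\},\{X_i\})$, from which the indexed tuple $(X_1,\ldots,X_n)$ is recoverable. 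In your construction, by contrast, $\phi$ at $S=\{1,\ldots,n\}$ is handed only the \emph{unordered} set $\{X_1,\ldots,X_n\}$; since $f_n$ is an arbitrary function of the ordered tuple, $\phi$ does not in general have enough information to compute $f_n(X_1,\ldots,X_n)$. (Think of $n=2$, $X_1\neq X_2$, and $f_2$ not symmetric.) Your idea is easily repaired---either interpret $\{X_i:i\in S\}$ as an indexed family rather than a bare set, or move the evaluation of $f_n$ into $\Lambda$ as the paper does---but as written the step ``$\phi$ outputs $f_n(X_1,\ldots,X_n)$ when $S=\{1,\ldots,n\}$'' is not well-defined on the stated domain.
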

\begin{proof}
   For the first part, we can find the maps $\phi,\kappa,\Lambda$ to express $f_n$:  
\begin{equation}
    \phi(S,\{X_i: i \in S\}):= \{X_i: i \in S\}; \forall S \subseteq \{1,..,n\}
\end{equation} 
\begin{equation}
    \kappa(S):= S; \forall S \subseteq \{1,...,n\}
\end{equation}
\begin{equation}
    \Lambda(\{(S,\{Y_i: i \in S\}): S \subseteq \{1,..,n\}\}):= f_n(Y_1,...,Y_n) 
\end{equation}
The second part follows since this is a special case of the first part and the lossy indexed compression problem is a minimization problem. 
\end{proof}

We will show how this expansion over ``clusters" is meaningful for the lossy compression of voxelized vector fields. This will result in our formulation of the $(k,D)$\textsc{-RectLossyVVFCompression} problem, which is no longer undecidable.
\begin{figure}[!h]
    \centering
\begin{tikzpicture}[scale=0.8]
\begin{axis}[
  colormap/viridis,
  point meta min=0,
  point meta max=20,
  width=10cm,
  axis lines=box,
  ticks=none,
  scatter/use mapped color={draw=black, fill=mapped color},
  grid=major,
grid style={dashed, gray!50},
legend style={at={(0.5,-0.05)}, anchor=south}
]

\foreach \x in {-1,0,1}
  \foreach \y in {-1,0,1}
    \foreach \z in {1,2}{
      \pgfmathsetmacro{\vx}{-\y}
      \pgfmathsetmacro{\vy}{\x}
      \pgfmathsetmacro{\vz}{-\z}
      
      \pgfmathsetmacro{\E}{0.5*(\vx*\vx + \vy*\vy + \vz*\vz) + 10*\z}
      
      \pgfmathsetmacro{\norm}{sqrt(\vx*\vx + \vy*\vy + \vz*\vz+1)}
      \pgfmathsetmacro{\ux}{0.5*\vx/\norm}
      \pgfmathsetmacro{\uy}{0.5*\vy/\norm}
      \pgfmathsetmacro{\uz}{0.5*\vz/\norm}
      
      \addplot3[->, thick, quiver={u=\ux, v=\uy, w=\uz, scale arrows=0.5}, 
                mark=*, mark options={scale=1}, scatter,
                point meta=\E, samples=1]
                coordinates {(\x,\y,\z)};
    }
    \addlegendentry{\((m,v_x,v_y,v_z,x_x,x_y,x_z)\mapsto \frac{1}{2}m (v_x^2+v_y^2+v_z^2)+ mg x_z: g=9.8 \text{m/s}\)}
\end{axis}
\end{tikzpicture}
\caption{An illustration of a $(3,7)$\textbf{-dimensional Voxelized Vector Field} with the kinematic energy function providing the coloring at every voxel.}
\end{figure}
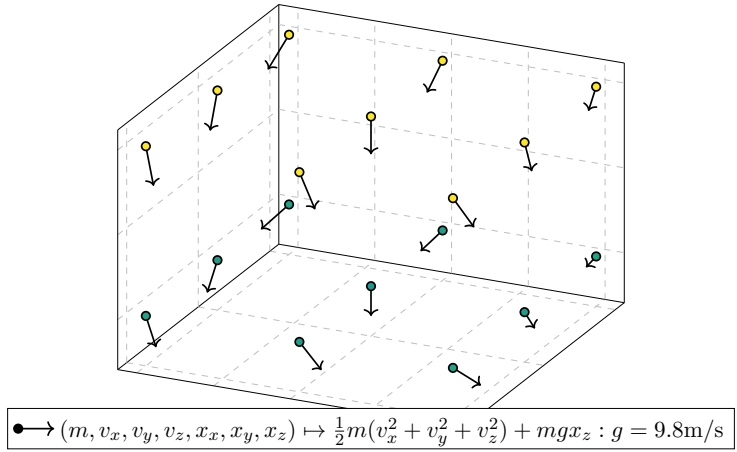
\subsection{Voxelized Vector Field Data }\label{sec: voxelized-vectorfield-data}
The image compression problem was our original motivation for the  \textbf{lossy indexed compression problem}. 
We would like to formulate this problem in terms of an arbitrary indexed set of vectors. 

We call an enumerable set of fixed-length real valued vectors that can be stored, retrieved and managed a \textbf{vector database}. When this enumeration can be multi-indexed, its storage format is a tensor data type, or multi-indexed array of real numbers. This data type includes traditional images~\cite{ahmed2006discrete}, matrices~\cite{strang2022introduction}, as well as voxel data~\cite{lefebvre2006perfect, zhao2020sdrbench} etc.

When the index size scales, the compression problem is known to be hard for tensors.
It is known that compressing a sparse matrix is NP-Hard~\cite{garey2002computers}. It is also known that compressing a 0-1 rectilinear polygon by rectangles is NP-Hard~\cite{garey2002computers}. We would like to investigate the hardness of a closely related compression problem involving multi-indexed vector databases.

In the following, we formulate a \textbf{data type} to represent multi-indexed vector database in the form of a discretized vector field. 

Let $n \in \mathbb{N}$ be the total number of vectors in a vector database $X$. As a vector database, each vector in $X$ can be enumerated  as $X_i \in \mathbb{R}^{D}, i \in [n]$ where $D \in \mathbb{N}$ is the length of each fixed-length vector. This is called \textbf{flat-indexing}. We call the integer $i$ the primary key in the flat-index for the vector record $X_i$.  The flat index up to set bijections is also called the support of $X$, denoted $\textsf{supp}(X)$.

Throughout this exposition, we can assume that the \emph{numerical inputs} $r \in \mathbb{R}$ to all algorithms  are from the rationals. These must have a finite bit string representation. 

Let $\textsf{Bit}(\bullet)$ denote the map from a number to a single bit string representation.  
For a number $r$, let $\textsf{size}(r)$ denote the number of bits returned by $\textsf{Bit}(r)$. 
 We assume that rational numbers  $q<1$ have  $\textbf{size}(q)=\Theta(\log(\frac{1}{q}))$ and that for all $n \in \mathbb{N}$, $\textsf{size}(n)=\Theta(\log(n))$. These include delimiters to denote the start and end of the numerical representation.

We will make the following assumptions on $X$:  
\begin{definition}
  A vector database $X$ is a $(k,D)$-\textbf{dimensional Voxelized Vector Field}, or $(k,D)$-\textsc{Voxelized VF} if the following properties hold:
  \begin{enumerate}
      \item
      (Underlying Vector Field Geometry)
The vector database represents finite samples from an injective vector field $V:\mathbb{R}^k \rightarrow \mathbb{R}^{D}$. Each record $X_i$ of $X$ is a vector $V(x)=X_i \in \mathbb{R}^D$ representing some unique point $x \in \mathbb{R}^k$.
\item
(Compactness of the Range)
    We will assume that each $X_i \in \mathbb{R}^{D}$ belongs to a hypercube denoted as $\Pi_{j=1}^{D} [a_j,b_j] \subseteq \mathbb{R}^{D}$.
\item
\label{ass: voxeldata}(Voxel Domain) 
$X$ is supported on a $k$-dimensional \textbf{voxel grid} $M$, see Appendix Section \ref{sec: more-connected-shapes}. Thus, the index $i \in [n]$ for each $X_i$ is in one-to-one correspondence with a unique coordinate, or \textbf{voxel}, $I(x)=(i_1,...,i_k)$ in $[n_1]\times \cdots \times [n_k]$ where $V(x)=X_i$ for some $x \in \mathbb{R}^k$.
\begin{itemize}
    \item We use the integer $i$ and the integer tuple $I(x)$ interchangeably due to the one-to-one correspondence.
\end{itemize}
  \end{enumerate}
    \end{definition}
    \begin{corollary} (Undecidability of \textsc{LossyIndexedCompression} on \textsc{Voxelized VVF} data) 
    
Let $\rho: \mathbb{R}^D \times \mathbb{R}^D \rightarrow \mathbb{R}_{\geq 0}$ be a nontrivial metric so that \begin{equation}
    \forall x,y \in \mathbb{R}^D: x \neq y\Rightarrow \rho(x,y)>0
\end{equation}
\textsc{Lossy Indexed Compression} with a Kolmogorov decoder and instances $(\varepsilon,(X_1,...,X_n))$ restricted to $(k,D)$\textsc{-Voxelized VF} $X$ with distribution $P(X_i)=\delta_{X_i}$ and $\varepsilon>0$ is undecidable.
\end{corollary}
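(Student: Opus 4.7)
The plan is to reduce from Theorem \ref{thm: undecidable-lossy-compression} by verifying its hypothesis is met in this restricted setting, and by realizing an undecidable instance as a legitimate $(k,D)$\textsc{-Voxelized VF} with Dirac distribution. Once both are established, the Turing reduction of Proposition \ref{prop: subproblem-Treduce} transfers undecidability from the subproblem to the corollary's restricted problem.

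First, I would verify the hypothesis of Theorem \ref{thm: undecidable-lossy-compression}: that some $\varepsilon>0$ and some $x\in\mathcal{X}$ yield an infinite $B_x(\varepsilon)$. By the Compactness of the Range property, the reconstruction space $\mathcal{X}$ contains the hypercube $\prod_{j=1}^D [a_j,b_j]\subseteq \mathbb{R}^D$, which is uncountable whenever at least one $a_j<b_j$ (a condition that is implicit in the data type carrying genuine real-valued information). Fix any $x$ in this hypercube and observe that $\mathcal{X}\supseteq \bigcup_{m=1}^\infty B_x(m)$; since a countable union of countable sets is countable, at least one $B_x(m)$ must be uncountable, hence infinite. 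Choosing $\varepsilon:=m$ gives the witness required by Theorem \ref{thm: undecidable-lossy-compression}.

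Next, I would realize this undecidable single-vector instance as an actual $(k,D)$\textsc{-Voxelized VF} with Dirac distribution. Take $n=1$ on the trivial voxel grid $[1]\times\cdots\times[1]$, set $X_1:=x$ with $P(X_1)=\delta_{x}$, and pick any injection $V:\mathbb{R}^k\to\mathbb{R}^D$ sending the lone voxel coordinate to $x$. All three defining properties (Underlying Vector Field Geometry, Compactness of the Range, Voxel Domain) are then trivially satisfied. The subproblem used inside the proof of Theorem \ref{thm: undecidable-lossy-compression} is precisely the $n=1$ case with infinite $B_x(\varepsilon)$, so the constructed Voxelized VF instance belongs to the corollary's restricted instance class and inherits its undecidability via Proposition \ref{prop: subproblem-Treduce}.

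The main (mild) obstacle is that the corollary allows \emph{any} nontrivial metric $\rho$, including pathological choices such as bounded metrics in which every sufficiently small ball is finite; for these, a naive ``take $\varepsilon$ tiny'' argument would fail. The union-bound argument above sidesteps this uniformly by possibly producing a large $\varepsilon$, which is still admissible since Theorem \ref{thm: undecidable-lossy-compression} only demands the existence of \emph{some} $\varepsilon>0$ with infinite ball. With nondegeneracy of the hypercube in hand, the reduction carries through to give undecidability of \textsc{LossyIndexedCompression} under the stated restriction.
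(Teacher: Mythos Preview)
Your proposal is correct and follows the same route as the paper: invoke Theorem~\ref{thm: undecidable-lossy-compression} by exhibiting an $x$ and $\varepsilon>0$ with $B_x(\varepsilon)$ infinite, then note the $n=1$ witness sits inside the restricted instance class. The paper's own proof is a two-liner that simply asserts ``since $\rho$ is a nontrivial metric, $B_x(\varepsilon)\subseteq\mathbb{R}^D$ is infinite'' and appeals to Theorem~\ref{thm: undecidable-lossy-compression}.

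Your version is actually more careful than the paper on one point: for an arbitrary metric the claim ``$B_x(\varepsilon)$ is infinite'' can fail for small $\varepsilon$ (e.g.\ the discrete metric), and your countable-union argument cleanly produces \emph{some} $\varepsilon$ that works, which is all Theorem~\ref{thm: undecidable-lossy-compression} needs. One simplification: you do not need the hypercube or the nondegeneracy caveat $a_j<b_j$ at all, since $\mathcal{X}=\mathbb{R}^D$ is already uncountable and $\mathbb{R}^D=\bigcup_{m\ge 1}B_x(m)$ for any metric; the Compactness-of-the-Range property constrains where the data vectors $X_i$ live, not the reconstruction space.
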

\begin{proof}
    Since $\rho$ is a nontrivial metric, $B_x(\varepsilon)\subseteq \mathbb{R}^D$ is infinite. The result thus follows by Theorem \ref{thm: undecidable-lossy-compression}.
\end{proof}
We would like to understand the data vectors $X_i=(x_1,...,x_D)$ from $X$ in terms of a multivariate continuous objective function $f: \mathbb{R}^D \rightarrow \mathbb{R}$ of its ``units" $x_j \in \mathbb{R}$. 

We require this function $f$ to be a  \textbf{piecewise multivariate polynomial} in the $D$ variables expressible as a string of symbolic input to an algorithm. This symbolic input should be interpretable as an algebraic expression consisting of a composition of binary sum $(+)$ and binary product $(\cdot)$ arithmetic operations. For example $f(x,y)=x\cdot y+y \cdot y \cdot y$.

Assume the symbolic string representation of a  piecewise multivariate polynomials can be computed in polynomial time in the \textbf{length}, denoted $\textsf{len}(f)$ of the string. We can assume that the degree of $f$ satisfies: $\textsf{deg}(f)\leq \textsf{len}(f)$. For the definition of the degree of a piecewise multivariate polynomial, see the Preliminaries in Section \ref{sec: preliminaries}. The number of bits to represent $f$ satisfies: $\textsf{size}(f)=8\textsf{len}(f)$, assuming an 8 bit ASCII code alphabet. 
\begin{example}
For example, the multivariate polynomial objective function over the vector space of $7$-dimensional vectors could be a kinematic energy function $f: \mathbb{R}^7 \rightarrow \mathbb{R}$, as the following algebraic expression:
\begin{equation}
    f(m,v_x,v_y,v_z,x_x,x_y, x_z) \triangleq \frac{1}{2} \cdot m\cdot v_x \cdot v_x+\frac{1}{2}m\cdot v_y\cdot v_y+\frac{1}{2}m\cdot v_z\cdot v_z+ m \cdot g \cdot x_z ; g\triangleq 9.8 \text{ m/s}
\end{equation}
This expression can be stored as a string of floating points, functional relations, floating point operations, and variable symbols. It can also be evaluated to compute the kinematic energy function in time linear in the string length.
\end{example}

\subsection{Preliminaries}\label{sec: preliminaries}
\textbf{Notation: }
A cartesian product between two sets: 
\begin{equation}
    S_1 \times S_2= \{(s_1,s_2): s_1 \in S_1, s_2 \in S_2\}
\end{equation}
The elements of the cartesian product are called tuples. The cartesian product of a set $S$ over itself $k$ times is written as: 
\begin{equation}
    S^k=\overbrace{(S \times \cdots \times S)}^k
\end{equation}
For any subset $U$ of a cartesian product $S^k$, the $q$-th dimension projection of $U$ is denoted 
\begin{equation}
    (U)_q\triangleq\{s_q \in S: (s_1,...,s_q,...s_k) \in U\}
\end{equation}
A slicing of $U$ in dimension $q \in [k]$ by $x \in S$ is defined as:
\begin{equation}
    \pi_q^x(U)\triangleq \{(s_1,...,x,...,s_k) \in U: (s_1,...,x,...,s_k) \in U\} 
\end{equation}

A cartesian product of a set $S$ can be over another set $I$. If we view each tuple of $S^I$ as an assignment of $i \in I$ to an element of $S$, this forms the set of all functions from $I$ to $S$.

A \textbf{partition} of a set $S$ is defined as a collection of subsets $U_i: U_i \subseteq S, \forall i \in [n]$ so that:
\begin{equation}
    \bigcup_{i\in [n]} U_i=S, U_i\cap U_j=\emptyset, \forall i,j \in [n]
\end{equation}
A \textbf{permutation} on a set $S$ is a bijection from $S$ to itself.

Let $\mathbb{R}$ denote the real numbers. Let $\mathbb{R}_{\geq 0}\triangleq\{x \in \mathbb{R}: x\geq 0\}$. Let $[a,b]\triangleq \{x: a\leq x \leq b\}, a,b \in \mathbb{R}$ denote a closed interval on the reals.
For the other number systems, let $\mathbb{N}\triangleq\{1,...\}$ denote the natural numbers, let $\mathbb{Z}\triangleq\{...,-1,0,1,...\}$ denote the integers, let $\mathbb{Q}\triangleq\{...,\frac{a}{b},...: a,b \in \mathbb{Z}\}$.  We can  similarly define intervals as follows: let $[n] \triangleq\{1,...,n\}, n \in \mathbb{N}, [a,b]\triangleq\{i \in \mathbb{Z}: a\leq i\leq b\}, a,b \in \mathbb{Z}$. We assume the logarithm function is in base 2: $\log_2: \mathbb{R}_{>0} \rightarrow \mathbb{R}$.

A finite \textbf{sequence} of real numbers is denoted by $(x_i)_{i \in  [n]} \in \mathbb{R}^{[n]}$. A \textbf{string} is a sequence of ``symbols." These symbols belong to some common ``alphabet" set. The concatenation of two strings $s_1,s_2$ is denoted $s_1\|s_2$.

A \textbf{(multivariate) polynomial} is a function on $D$ numbers of the form:
\begin{equation}
    (x_1,...,x_D) \mapsto \sum_{i: 0\leq i<T} c_i \Pi_{j=1}^D x_j^{d_{i,j}}, c_i \in \mathbb{R}, d_{i,j} \in \mathbb{N}, T\in \mathbb{N}
\end{equation}
where the numbers $c_i$ are called the coefficients. 
It is called a polynomial if it is in one variable. 
A multivariate polynomial defined on domain $X: X \subseteq \mathbb{R}^D$ is called \textbf{piecewise} if it is defined by $m$ multivariate polynomials $f_i: U_i \rightarrow B$ with $\{U_i\}_{i=1}^m, m \in \mathbb{N}$ forming a partition of $X$. The degree of a piecewise multivariate polynomial $f$ determined by $f_i, i=1,...,m$ is defined as:
\begin{equation}
    \textsf{deg}(f)\triangleq \max_{f_i}\max_{T_p: f_i=\sum_p c_pT_p, c_p \in \mathbb{R}} \sum_{k: T_p= {k=1}^D x_{k}^{j_k}} j_k
\end{equation}
The \textbf{algebraic numbers} are the set: $\tilde{\mathbb{Q}}\triangleq \{x \in \mathbb{R}: p(x)=0 \text{,  $p$ is an integer polynomial}\}$. 
These have a finite symbolic representation. The algebraic numbers satisfy the following:
\begin{equation}
    \mathbb{Q}\subseteq \tilde{\mathbb{Q}} \subseteq \mathbb{R}
\end{equation}

\begin{definition}
A \textbf{graph} is a pair of sets $(V,E)$. The set $V$ consists of nodes and $E \subseteq V \times V$ is called the set of edges.
\end{definition}
\subsubsection{Vector Fields}
A tuple of numbers is a finite sequence of numbers. A real vector is a tuple of real numbers of the same fixed length. The Euclidean space $\mathbb{R}^{k}$ consists of the set of all length $k$ real vectors. The $q$-th entry of a vector $v \in \mathbb{R}^k, q \leq k$ is denoted $v_q \in \mathbb{R}$.

We can map from $\mathbb{R}^k$ to $\mathbb{R}^D$. These maps are called vector fields, which we define below:
\begin{definition}
A \textbf{vector field} is defined as a map from $\mathbb{R}^k$ to a $D$-dimensional vector space $\mathbb{R}^D$. This is denoted $V: \mathbb{R}^k \rightarrow \mathbb{R}^D$.
\end{definition}
A vector field is only a map and does not have to have any algebraic properties.
\subsubsection{Clustering in Metric Spaces}
Let there be a metric $d: \mathbb{R}^{D} \times \mathbb{R}^{D} \rightarrow \mathbb{R}$ on $\mathbb{R}^D$, see Section \ref{sec: metric-spaces} in the Appendix. Define the diameter function $\textsf{diam}: 2^{\mathbb{R}^{D}} \rightarrow \mathbb{R}$ as follows:
\begin{equation}
    \textsf{diam}_d(S)\triangleq \sup_{x,y \in S}d(x,y)
\end{equation}
\begin{definition}
Let $I: J\rightarrow X$ be a bijection from an indexing set $J$ to the set of points $X \subseteq \mathbb{R}^{D}$.

An \textbf{Indexed Vietoris-Rips complex} at threshold $r$ on the indexed set of points  $X$ equipped with a metric $d$ is defined as the following collection of index sets, called index \emph{clusters}:
\begin{equation}
    \textsf{iVR}^d_r(X)\triangleq \{S \subseteq J: \textsf{diam}_d(\{I(j) \in X: j \in S\})\leq r\}
\end{equation}
\end{definition}
\subsection{Voxelized Vector Field Compression}\label{sec: VVFcomp-prob}
In the \textbf{lossy indexed compression problem}, we can define a distortion function $d_f$ as the Euclidean distance between the energy, defined by the piecewise multivariate polynomial $f$, of a vector and its reconstruction:
\begin{equation}
    d_f(X_i,\hat{X}_i)\triangleq \lvert f(X_i)-f(\hat{X}_i)\rvert_2
\end{equation}
This distortion function naturally measures the change in energy between a vector and its reconstruction at a common index $i$. We call this the \textbf{energy gap} distortion function.

We would like to define computable encoders and decoders for the \textbf{lossy indexed  compression problem}. We will thus understand the encoders and decoders in terms of index consistent compression and decompression algorithms respectively. As discussed in Lemma \ref{prop: power-expand}, the output codeword involves clustering assignments $i \mapsto X_i$ atmost at the granularity of $n$ individual assignments. 
We can thus define this compression-decompression process in the following:
\begin{definition}
A \textbf{clustered indexed algorithmic compression scheme} $\mathcal{M}$ over $(k,D)$-voxelized vector fields is a pair of algorithms $(\mathcal{C},\mathcal{D})$, called the \textcolor{purple}{compression} and \textcolor{orange}{decompression} algorithms. They are governed by the following \emph{three step protocol}:
\begin{enumerate}
\item[] Given a $(k,D)$-voxelized vector field $X$, a piecewise multivariate polynomial objective function $f: \mathbb{R}^D \rightarrow \mathbb{R}$, and an error bound $\varepsilon$. 
    \item The \textcolor{purple}{compression} algorithm $\mathcal{C}$ takes a triple $(X,f,\varepsilon)$ as input and outputs a codeword $\mathcal{C}(X,f,\varepsilon)$ from the codebook $\{1,...,2^{R_n}\}$ with
    \begin{itemize}
        \item $\mathcal{C}(X,f,\varepsilon)$  a function of the following clustering:
        \begin{equation}
            \{\{(i,X_i): i \in S\}: S \in \mathcal{S}(X,f,\varepsilon)\}; \mathcal{S}(X,f,\varepsilon)\subseteq 2^{\textsf{supp}(X)}
        \end{equation}
        where $\mathcal{S}(X,f,\varepsilon)$ is called an \textbf{index clustering} of $\textsf{supp}(X)$.
    \end{itemize}
    \item $\mathcal{C}(X,f,\varepsilon)$  is sent over a ``noiseless channel"~\cite{shannon1948mathematical} from $\mathcal{C}$ to $\mathcal{D}$. 
    \item $\mathcal{C}(X,f,\varepsilon)$ is then \textcolor{orange}{decompressed} by the decompression algorithm $\mathcal{D}$ into a $(k,D)$-voxelized vector field.
    \begin{equation}
    \hat{X}=\mathcal{D}\circ \mathcal{C}(X,f,\varepsilon)
    \end{equation}
\end{enumerate}
where the two algorithms $\mathcal{C},\mathcal{D}$ must satisfy: 
\begin{itemize}
    \item (Index Consistency) $(X,\hat{X})$ are \textbf{index consistent} for distortion $d_f$ at error bound $\varepsilon$.  
    \item  (No Learning) $\mathcal{M}$ cannot self update from the data $(X,f,\varepsilon)$.
\end{itemize}
\end{definition}
\begin{remark}
    This definition is a direct instantiation of the power set expansion encoder of Lemma \ref{prop: power-expand}. The index subsets $S \subseteq \{1,...,n\}$, or cluster indices belong to $\mathcal{S}(X,f,\varepsilon)$.
\end{remark}
\begin{remark}
     The lossy indexed compression problem is \emph{inherently a learning problem}. This is because the encoder-decoder pair can use the input data $(X,f,\varepsilon)$ in the optimization problem. 
     
     A clustered indexed algorithmic compression scheme  cannot learn over data. This means that the program size of the compression and decompression algorithms $\mathcal{C}$ and $\mathcal{D}$ are $O(1)$.

     A sufficient condition for the no learning criterion is an \emph{asymmetric causal relationship} from the encoder to the decoder. A learnable encoder would then no longer be able to update its code over samplings $(x_1,...,x_n) \sim P(X_1,...,X_n)$ since it cannot check for index consistency from the decoder.
\end{remark}
\begin{remark}
This definition is a generalized variant of a labeled-compression scheme~\cite{littlestone1986relating}  from classical learning theory. Let:
\begin{equation}
    \mathcal{S}(X,f,\varepsilon):=\{\{i\}: i \in  S, S \subseteq \textsf{supp}(X)\}
\end{equation}
be a collection of singletons.

The ``labeled samples" of a labeled-compression scheme for $X$ are of the form $(i,X_i)$, $i$ being a primary key on the flat index on $X$ and the ``kernel" of the labeled samples is the codeword $\mathcal{C}(X,f,\varepsilon)$ determined by the subset of labeled samples $(i,X_i): \forall \{i\} \in \mathcal{S}(X,f,\varepsilon)$. 
\end{remark}
As discussed in the introduction, the goal is to minimize the rate $R_n$. This can be converted to a maximization of the compression ratio of the codeword $\mathcal{C}(X,f,\varepsilon)$. This is defined as:
\begin{equation}
    \textsf{CR}(\mathcal{C}(X,f,\varepsilon),X)\triangleq \frac{\textsf{size}(X)}{\textsf{size}(\mathcal{C}(X,f,\varepsilon))}
\end{equation}
where the ``size" of a database $X$ is measured by the number of bits needed to store it with flat-indexing. 

Similarly, the ``size" of a codeword is the number of bits to store it as part of the codebook $\{1,...,2^{R_n}\}$ of the outputs of the compression algorithm $\mathcal{C}$. This is the (not necessarily optimal) rate $R_n$ of the clustered indexed algorithmic compression scheme. We can check that $\textsf{CR}(\mathcal{C}(X,f,\varepsilon),X)$ is maximized when ${\textsf{size}(\mathcal{C}(X,f,\varepsilon))}$ is minimized since ${\textsf{size}(X)}$ does not change. 

\subsection{Voxelized Vector Field Compression as a Box Covering Problem}\label{sec: vvfcompression-box-covering}
Let us understand a clustered indexed compression scheme through the \textbf{energy gap}  distortion function $d_f$.

We make the following observation about the \textbf{energy gap} distortion function $d_f$:
\begin{observation}\label{obs: VR-df}
For a $(k,D)$-dimensional voxelized vector field $X$ with objective function $f: \mathbb{R}^D \rightarrow \mathbb{R}$, the \textbf{energy gap} distortion function $d_f: \mathbb{R}^D\times \mathbb{R}^D \rightarrow \mathbb{R}_{\geq 0}$ is a metric and the Indexed Vietoris-Rips complex 
    $\textsf{iVR}^{d_f}_{r}(X)$ over metric $d_f$ is well defined for any $r < \infty$.
\end{observation}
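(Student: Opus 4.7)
The plan is to unpack the definition $d_f(X_i,\hat X_i)=|f(X_i)-f(\hat X_i)|_2$, note that because $f$ lands in $\mathbb{R}$ this is just the absolute value of a real number, and then verify the metric axioms by pulling them back through $f$ from the standard absolute-value metric on $\mathbb{R}$. After that, well-definedness of $\textsf{iVR}^{d_f}_r(X)$ reduces to showing that every diameter appearing in the definition is a finite real number, which follows from finiteness of the voxel grid.

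First I would check the axioms pointwise. Non-negativity is immediate from $|\cdot|\geq 0$; symmetry is immediate from $|a-b|=|b-a|$; reflexivity $d_f(x,x)=|f(x)-f(x)|=0$ is immediate; and the triangle inequality
\begin{equation}
|f(x)-f(z)|\leq |f(x)-f(y)|+|f(y)-f(z)|
\end{equation}
is the ordinary triangle inequality on $\mathbb{R}$ applied to the telescoping $f(x)-f(y)+f(y)-f(z)$. The only subtle axiom is identity of indiscernibles: a piecewise multivariate polynomial $f$ may agree at distinct inputs, so strictly speaking $d_f$ is a pseudometric on $\mathbb{R}^D$. I would handle this by appealing to the definition of metric in the referenced Section \ref{sec: metric-spaces}, which can accommodate the pseudometric case, or equivalently by passing to the quotient identifying points of equal energy; either way $\textsf{diam}_{d_f}$ and the set-membership condition used to build $\textsf{iVR}^{d_f}_r$ are unaffected.

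For well-definedness of the Vietoris-Rips complex, I would use the Voxel Domain assumption, which gives that $X$ is supported on a finite grid $[n_1]\times\cdots\times[n_k]$, so $|X|=n<\infty$. For any cluster $S\subseteq J$, the set $\{I(j)\in X: j\in S\}$ is finite. Since $f$ is a real-valued piecewise multivariate polynomial, $f(X_i)\in\mathbb{R}$ for every $i$, and hence $d_f(X_i,X_j)=|f(X_i)-f(X_j)|$ is a finite non-negative real for every pair. Therefore
\begin{equation}
\textsf{diam}_{d_f}\bigl(\{I(j): j\in S\}\bigr)=\max_{j,j'\in S} |f(I(j))-f(I(j'))|<\infty,
\end{equation}
so for any finite threshold $r<\infty$ the condition $\textsf{diam}_{d_f}(\cdot)\leq r$ is a well-defined Boolean test on subsets of $J$ and the collection $\textsf{iVR}^{d_f}_r(X)$ is an unambiguously specified subset of $2^{J}$.

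The main obstacle is the pseudometric issue around identity of indiscernibles; everything else is a transfer of standard absolute-value facts and a finiteness argument from the voxel grid assumption. I expect the actual write-up in the paper to either (i) use a definition of metric that is really a pseudometric, or (ii) briefly remark that $d_f$ is interpreted on energy classes, and then state the three remaining axioms plus finite-diameter argument as a short verification.
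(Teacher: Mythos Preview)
Your verification of the metric axioms is correct and more explicit than the paper's one-line assertion. You also correctly anticipated option (i): the paper's definition of metric in Section~\ref{sec: metric-spaces} lists only $d(x,x)=0$, symmetry, and the triangle inequality, so identity of indiscernibles is not required and $d_f$ is a metric in the paper's sense with no quotienting needed.

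Where you diverge is in the well-definedness argument for $\textsf{iVR}^{d_f}_r(X)$. You argue via finiteness of the voxel grid that all diameters are finite real numbers, hence the membership test is well-posed. The paper instead addresses a different hypothesis baked into its definition of the Indexed Vietoris-Rips complex: the indexing map $I:J\to X$ is required to be a \emph{bijection} onto the set of points $X\subseteq\mathbb{R}^D$. For that bijection to exist with $J=[n]$, the vectors $X_i$ must be pairwise distinct. The paper obtains this from property~1 of a $(k,D)$-voxelized vector field (the underlying vector field $V:\mathbb{R}^k\to\mathbb{R}^D$ is injective), phrased as ``no two vectors are equivalent in vector database $X$.'' Your argument, drawing on the Voxel Domain assumption (property~3), gives finitely many indices but does not by itself ensure the $X_i$ are distinct, so it does not discharge the bijection requirement. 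Conversely, the paper's argument does not mention finiteness of diameters at all. Each approach covers one of the two things one might worry about; to match the paper's definition exactly you should add the injectivity observation.
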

\begin{proof}
    The map $(X_i,X_j) \mapsto \lvert f(X_i)-f(X_j)\rvert_2$ is a metric. Furthermore, no two vectors are equivalent in vector database $X$ by definition of a $(k,D)$\text{-dimensional voxelized vector field}. Thus the Indexed Vietoris-Rips complex can be constructed for any diameter threshold.
\end{proof}
In Observation \ref{obs: VR-df}, we have that $\textsf{iVR}^{d_f}_r(X)$ for any $r<\infty$ is well defined. According to the definition of an Indexed Vietoris-Rips Complex, the clusters in this case must consist of a collection of connected components on $\textsf{supp}(X)$. These consist of orthogonal polyhedra, boxes etc. see Section \ref{sec: Connected-Shapes} for definitions of connected shapes on voxel grids.

Let us specify a restriction on the power expansion of the compression algorithm  belonging to a clustered indexed algorithmic compression scheme. This restriction chooses exact closed forms for the maps $\Lambda,\kappa,\phi$. In particular, these are chosen to restrict the compression algorithm to an index clustering. These restrictions isolate the clustering for the assignments $i \mapsto X_i$ for $i \in \textsf{supp}(X)$. 
\begin{definition}
    A collection of boxes $\mathcal{R} \subseteq 2^{\textsf{supp}(X)}$ \textbf{covers} the indices of a $(k,D)$-dimensional voxelized vector field $X$ if for every $i \in \textsf{supp}(X)$, there is a $R \in \mathcal{R}$ that has $i \in R$.

    Such a collection of boxes is called a \textbf{box cover}.
\end{definition}
Let us define a compression algorithm that outputs an encoding of a meaningful box cover:
\begin{definition}
    A \textbf{box compression algorithm} $\mathcal{C}$ on a $(k,D)$-dimensional voxelized vector field $X$, a piecewise multivariate polynomial $f$ and error bound $\varepsilon$ computes the following index consistent encoder on distortion $d_f$ with the following \emph{power set expansion}:
    \begin{equation}
        \mathcal{C}(X,f,\varepsilon):=\textsf{BitConcatWParams}(\{(\textsf{Encode}(R),\textsf{Summarize}_R(X,f,\varepsilon)): R \in \mathcal{S}(X,f,\varepsilon)\})
    \end{equation}
    where $\mathcal{S}(X,f,\varepsilon)$ is a \textbf{box cover} over $\textsf{supp}(X)$ and:
    \begin{enumerate}
        \item $\textsf{Encode}(R):= (c_1(R), c_2(R))$ where $c_1(R)$ and $c_2(R)$ are the coordinates in $\mathbb{Z}^k$ of the two corners of $R$, see Proposition \ref{prop: 2-corner-box-appendix}. 
        \item $\textsf{Summarize}_R$ maps the cluster of vectors $\{X_i \in \mathbb{R}^D: i \in R\}$ to a summary $\hat{f}(R) \in {\mathbb{Q}}$ satisfying:
         \begin{equation}
             \lvert f(X_i)-\hat{f}(R) \rvert_2 \leq \varepsilon^*, \forall i \in R
         \end{equation}
         by computing $O(\textsf{len}(f))$ number of algebraic operations on the vectors $X_{i} \in \mathbb{R}^D, i \in R$.
        \item $\textsf{BitConcatWParams}:=\textsf{RecTupleBit}\circ \textsf{ConcatParams}$ \text{ where: }
        \begin{equation}
            \textsf{ConcatParams}(\{(R,\bullet_R): R \in \mathcal{S}(X,f,\varepsilon)\}):=   (\overbrace{(\cdots ,( \textsf{Encode}(R),\bullet_R), \cdots )_{R \in \mathcal{S}(X,f,\varepsilon)}}^{\text{any order}}, (\varepsilon^*,\varepsilon, f))
        \end{equation}
        \begin{equation}
            \textsf{RecTupleBit}(L_1,...,L_n):=\begin{cases}
                (\textsf{RetTupleBit}(L_1),...,\textsf{RecTupleBit}(L_n)) & \text{ If }n>1\\
                (\widetilde{\textsf{Bit}}(x_1),...,\widetilde{\textsf{Bit}}(x_n)) & \text{ Else}
            \end{cases} 
        \end{equation}
        with $\widetilde{\textsf{Bit}}(\bullet):=\textsf{Bit}(\bullet)\| 11$ and the comma and the two parentheses are encoded by the length $2$ bit strings $00,01,10$ respectively. These encodings are concatenated all together to form a single bit string.
    \end{enumerate}
    \end{definition}
    \begin{definition}
    We call the tuple before composing with \textsf{RecTupleBit} in a box compression algorithm on input $(X,f,\varepsilon)$: \begin{equation}
        (\overbrace{(\cdots ,( \textsf{Encode}(R),\bullet_R), \cdots )_{R \in \mathcal{S}(X,f,\varepsilon)}}^{\text{any order}}, (\varepsilon^*,\varepsilon, f)) 
    \end{equation} as a \textbf{box cover with box summaries data structure}. 
\end{definition}
Since \textsf{RecTupleBit} is injective, we will interchangeably say that the output of a box  compression algorithm is a box cover with box summaries or a bit string code word in $[2^{R_n}]$
We have the following closed form for the \textbf{size} of a \textbf{box cover with summaries} data structure, which is the length of the bit string returned by a box compression:
\begin{proposition}\label{prop: size-boxcover-compression}
\begin{equation}
    \textsf{size}(\mathcal{C}(X,f,\varepsilon))= \lvert \mathcal{S}(X,f,\varepsilon)\rvert(10r+4\textsf{size}(n)+2v(\textsf{size}(f),n,\textsf{size}(X))+5r+\textsf{size}(\varepsilon^*)+\textsf{size}(\varepsilon)+\textsf{size}(f)
    \end{equation}
    where $r$ is the constant number of bits to represent any delimiter. The number $v$ is the maximum number of bits to represent any vector entry in $X$ using a piecewise multivariate polynomial $f$ for distortion $d_f$ at a lossy bound of $\varepsilon>0$. 
\end{proposition}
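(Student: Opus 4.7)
The plan is to count the total bits produced by $\textsf{RecTupleBit}$ applied to the box cover with box summaries data structure, decomposing the contribution into delimiter bits, box corner encodings, per-box summary encodings, and the trailing metadata tuple $(\varepsilon^*, \varepsilon, f)$. First, I note that $\textsf{RecTupleBit}$ uses three two-bit delimiter tokens (for commas and for the two parentheses) and that $\widetilde{\textsf{Bit}}(\bullet) = \textsf{Bit}(\bullet)\|11$ appends exactly $r=2$ delimiter bits to each encoded numerical atom; hence every opening/closing parenthesis, every comma, and every number terminator that $\textsf{RecTupleBit}$ emits costs a uniform $r$ bits, and the total count is a sum over these structural markers plus the raw bit-sizes of the underlying numerical atoms.

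Next I handle the per-box contribution. For each $R \in \mathcal{S}(X,f,\varepsilon)$, the emitted tuple $((c_1(R), c_2(R)), \bullet_R)$ is wrapped in two pairs of parentheses and contains two commas, and in the outer list under $\textsf{ConcatParams}$ each such tuple is separated from the next by a further comma; collecting these yields a constant $10r$ delimiter bits per box. By Proposition~\ref{prop: 2-corner-box-appendix}, a box is fully specified by two corners of the voxel grid, which, under the flat indexing of $\textsf{supp}(X)$, are integers in $[n]$ and so cost $\textsf{size}(n)$ bits each; with the $\widetilde{\textsf{Bit}}$ padding, the two corners contribute $4\textsf{size}(n)$ bits. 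The summary $\bullet_R = \hat{f}(R) \in \mathbb{Q}$ is produced by $O(\textsf{len}(f))$ sums and products applied to the rational entries of $\{X_i : i \in R\}$, each of which has bit-size at most $\textsf{size}(X)$; since each rational $+$ or $\cdot$ operation blows up the bit-size by at most the sum of its operand bit-sizes, $O(\textsf{len}(f))$ such operations produce a rational whose size is bounded by a function $v(\textsf{size}(f), n, \textsf{size}(X))$, and I use the factor $2v$ to absorb both the $\widetilde{\textsf{Bit}}$ padding and a loose constant from the bound.

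Summing per-box contributions over $|\mathcal{S}(X,f,\varepsilon)|$ and then appending the trailing metadata $(\varepsilon^*, \varepsilon, f)$ — whose wrapping parentheses and commas account for the remaining $5r$ bits, together with the three atomic contributions $\textsf{size}(\varepsilon^*), \textsf{size}(\varepsilon), \textsf{size}(f) = 8\textsf{len}(f)$ — yields the claimed closed form. The main obstacle is controlling the bound $v(\textsf{size}(f), n, \textsf{size}(X))$: one must argue that $O(\textsf{len}(f))$ rational arithmetic operations applied to vector entries drawn from $X$ cannot inflate the representation of $\hat{f}(R)$ beyond a single polynomial bound that holds simultaneously for every box $R$, so that one common function $v$ serves as a uniform upper bound in the per-box tally. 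Once that bound is in place, the remainder of the proof is a mechanical accounting of parentheses, commas, and numerical atoms under the $\textsf{RecTupleBit}$ encoding.
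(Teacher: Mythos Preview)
Your proposal is correct and takes essentially the same approach as the paper: a mechanical bit-count obtained by recursively unwinding the $\textsf{RecTupleBit}$ encoding (the paper does this by repeatedly applying its auxiliary size formula $\textsf{size}(\textsf{RecTupleBit}(L_1,\dots,L_n)) = r + \sum_i (\textsf{size}(L_i)+r)$, while you count parentheses, commas, and atoms directly). One remark: you identify controlling the bound $v(\textsf{size}(f),n,\textsf{size}(X))$ as the ``main obstacle,'' but the paper does not argue for this at all --- in the statement $v$ is simply a \emph{defined} parameter (the maximum bit-size of any summary value), so no polynomial growth argument is needed for the proposition as stated; your discussion there is extra work beyond what the proof requires.
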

\begin{proof}
    See Appendix Proposition \ref{prop: appendix-sizes}.
\end{proof}
We can now identify the following problem:
\begin{tcolorbox}
\begin{problem}\label{prob: lossycompression}$(k,D)$-\text{Lossy Voxelized Vector Field Box Covering Compression Problem}

$((k,D)\textsc{-RectLossyVVFCompression})$
\begin{itemize}
        \item[] \textbf{Input: } For any $(k,D)$-dimensional voxelized vector field $X$ with $n$ vectors, a piecewise multivariate polynomial objective function $f: \mathbb{R}^D \rightarrow \mathbb{R}$, and an $ \varepsilon>0: \varepsilon<1$. (\emph{All real numbers have  finite length bit string representations}
        \item[] \textbf{Output: } A box cover with box summaries data structure $\mathcal{C}(X,f,\varepsilon)$ maximizing:
        \begin{equation}
            \textsf{CR}(\mathcal{C}(X,f,\varepsilon),X)
        \end{equation}
        for some clustered indexed algorithmic compression scheme $\mathcal{M}=(\mathcal{C},\mathcal{D})$ 
    \end{itemize}
    \end{problem}
\end{tcolorbox}
\begin{remark}
    We can convert $(k,D)\textsc{-RectLossyVVFCompression}$ to its \textbf{decision version} by introducing an additional input $K \in \mathbb{R}$ and ask for a feasible clustered indexed algorithmic compression scheme $\mathcal{M}$  so that $\textsf{CR}(\mathcal{C}(X,f,\varepsilon),X)\geq \frac{\textsf{size}(X)}{K}$.
\end{remark}
\begin{theorem}\label{thm: decidable-kd-rectlossyvvfcompression}
    $(k,D)$\textsc{-RectLossyVVFCompression} is decidable.
\end{theorem}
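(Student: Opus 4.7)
The plan is to produce an exhaustive-search algorithm that is obviously terminating; efficiency is deferred to the later hardness results. The key observation is that, once $(X,f,\varepsilon)$ is fixed, every parameter that a box cover with box summaries data structure must specify lives in a set that can be searched in finite time.

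First I would enumerate the axis-aligned boxes over the voxel grid $[n_1]\times\cdots\times[n_k]$. By Proposition \ref{prop: 2-corner-box-appendix} each box is specified by two corner coordinates in $\mathbb{Z}^k$, so there are at most $n^2$ candidates (with $n=\prod_j n_j$), and every candidate cover $\mathcal{S}$ of $\textsf{supp}(X)$ is a subset of this finite list, giving at most $2^{n^2}$ candidate covers. For each candidate cover $\mathcal{S}$, I would compute every $f(X_i)$ exactly --- this is finite arithmetic on rationals, since $X$ has rational entries and $f$ is a finite composition of sums and products --- and then, for each $R\in\mathcal{S}$, set
\begin{equation}
M_R=\max_{i\in R}f(X_i),\qquad m_R=\min_{i\in R}f(X_i),\qquad \hat{f}(R)=\tfrac{1}{2}(M_R+m_R),
\end{equation}
with $\varepsilon^*=\max_{R\in\mathcal{S}}\tfrac{1}{2}(M_R-m_R)$. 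Then $|f(X_i)-\hat{f}(R)|_2\leq\varepsilon^*$ for every $i\in R$, and the cover is index-consistent for $d_f$ at error $\varepsilon$ iff $\varepsilon^*\leq\varepsilon$.

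Finally, for every feasible cover I would evaluate $\textsf{size}(\mathcal{C}(X,f,\varepsilon))$ via the closed form of Proposition \ref{prop: size-boxcover-compression}, compute $\textsf{CR}$, and return the cover achieving the maximum. Since the outer enumeration ranges over finitely many covers and each inner computation terminates, the algorithm halts on every valid input, which is exactly what decidability requires.

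The main obstacle I expect is defending that this combinatorial search is enough even though the summaries $\hat{f}(R)$ and the parameter $\varepsilon^*$ live a priori in the uncountable set $\mathbb{Q}$. The midpoint rule is the canonical choice: among all rational summaries it minimizes the per-box deviation, so no other rational perturbation can turn an infeasible cover into a feasible one. As a safety net, one can alternatively argue decidability by enumerating all bit strings of length at most $\textsf{size}(X)$ --- the singleton cover $\{\{i\}:i\in\textsf{supp}(X)\}$ with $\hat{f}(\{i\})=f(X_i)$ always yields a feasible data structure of size at most $\textsf{size}(X)$ --- and returning the shortest one that decodes into a valid feasible box cover with box summaries. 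Either route collapses the optimization to a finite search, proving the theorem.
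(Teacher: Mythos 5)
Your proposal is correct and shares the paper's overall strategy -- collapse the optimization to a finite exhaustive search and invoke exact rational arithmetic to check feasibility -- but it tames the a priori infinite space of summaries by a different mechanism. The paper's proof never fixes a canonical summary: it observes that, by the definition of a box compression algorithm, $\textsf{Summarize}_R$ must be an $O(\textsf{len}(f))$-length algebraic combination of the data vectors under $(+,\cdot)$, so the whole solution space is bounded by $O(2^n(2n)^{\textsf{size}(f)})$ pairs (index clustering, summary expression), and decidability follows by brute-force enumeration of that product. You instead enumerate only the covers (via the two-corner encoding of Proposition \ref{prop: 2-corner-box-appendix}) and eliminate the summary degree of freedom with the extremal property of the mid-range: it minimizes the worst-case deviation on a cluster, so if it fails no other summary can succeed. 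This is sound, and it is in fact the device the paper itself deploys later (Proposition \ref{prop: sol-2epsilon}, Lemma \ref{lemma: boxcoversum-sol}); your route is more constructive and does not lean on the definitional restriction of $\textsf{Summarize}_R$, while the paper's route is shorter and automatically respects that restriction. If you keep your route, you should still note that the mid-range is an admissible summary in the paper's sense (it is a constant number of algebraic operations on two of the values $f(X_i)$), so restricting to it stays inside the feasible set being optimized over.

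Three small repairs. First, $\mathbb{Q}$ is countable, not uncountable; the issue is only that it is infinite. Second, the data-structure definition requires $\varepsilon^*<\varepsilon$ strictly (this strictness is what makes the decompression of Theorem \ref{thm: rect-covering} work), so your feasibility test should be $\varepsilon^*<\varepsilon$ rather than $\varepsilon^*\leq\varepsilon$. Third, in your safety-net argument the claim that the singleton cover yields a data structure of size at most $\textsf{size}(X)$ does not follow from Proposition \ref{prop: size-boxcover-compression}: the per-box overhead $10r+4\,\textsf{size}(n)+2v$ together with $\textsf{size}(f)$, $\textsf{size}(\varepsilon)$, $\textsf{size}(\varepsilon^*)$ can exceed the flat-indexed size of $X$. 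The safety net is easily repaired -- compute the size $B$ of that concrete singleton-cover data structure and enumerate all codewords of length at most $B$ -- but as written the bound is unjustified. None of these affects the main argument, which establishes decidability.
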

\begin{proof}
    For any $(X,f,\varepsilon) \in I_{(k,D)\textsc{-RectVVFC}}$ since $X$ consists of $n$ vectors, $f$ has finite size $\textsf{size}(f)$, and $\textsf{size}(\varepsilon)$ is finite, there can only be $O(2^n(2n)^{\textsf{size}(f)})$ (a finite number) many possible solutions. 
    
    This is the cartesian product of the power set on $n$ of size $2^n$ and the set of possible polynomials formed by $O(\textsf{len}(f))$ algebraic combinations of the vectors $X_i, i \in [n]$ through the $(+), (\cdot)$ arithmetic operations:
    \begin{equation}
        O((2n)^{\textsf{size}(f)})=\lvert (\{+,\cdot\}\times \{X_i: i\in [n]\})^{O(\textsf{size}(f))}\rvert 
    \end{equation}
    Thus, the problem is decidable.
\end{proof}
We have the following relationship between $(k,D)\textsc{-RectLossyVVFCompression}$ and  \textsc{LossyIndexedCompression}:
\begin{proposition}\label{prop: sublang-rectvvf-lossyindcomp}
    Let $f:\mathbb{R}^D\rightarrow \mathbb{R}$ be a piecewise multivariate polynomial. The language 
    \begin{equation}
    \begin{split}
        L_{(k,D)\textsc{-RectVVFC}}=\{((\varepsilon,(X_1,...,X_n)),K): ((X,f,\varepsilon),K)\text{ an instance of }\\(k,D)\textsc{-RectLossyVVFCompression}: \textsf{CR}(\mathcal{C}(X,f,\varepsilon)\geq \frac{\textsf{size}(X)}{K} \} 
        \end{split}
    \end{equation} 
    associated to the decision version of $(k,D)\textsc{-RectLossyVVFCompression}$ is a sublanguage of \begin{equation}
        L_{\textsc{LossyIndComp}}=\{((\varepsilon,(X_1,...,X_n)),K): R_n^*\leq K\}
    \end{equation}
    associated to the decision version of $\textsc{LossyIndexedCompression}$ for distortion $d_f: \mathcal{X} \times \mathcal{X} \rightarrow \mathbb{R}_{\geq 0}$.
\end{proposition}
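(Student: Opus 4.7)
The plan is to show language inclusion by exhibiting, for each instance $((\varepsilon,(X_1,\ldots,X_n)),K) \in L_{(k,D)\textsc{-RectVVFC}}$, a witnessing encoder-decoder pair $(f_n,g_n)$ for $\textsc{LossyIndexedCompression}$ at distortion $d_f$ whose rate $R_n$ is at most $K$, which forces $R_n^* \le K$. The core observation is that a box compression algorithm, together with its accompanying decompression algorithm $\mathcal{D}$ in a clustered indexed algorithmic compression scheme $\mathcal{M} = (\mathcal{C},\mathcal{D})$, is literally an instance of the more general encoder-decoder pair admitted by \textsc{LossyIndexedCompression}; this is exactly the restricted power-set-expansion form provided by Lemma \ref{prop: power-expand}, with $\kappa, \phi, \Lambda$ specialized via \textsf{Encode}, \textsf{Summarize}, and \textsf{BitConcatWParams}, respectively.

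First I would unpack the hypothesis. If $((\varepsilon,(X_1,\ldots,X_n)),K)$ lies in $L_{(k,D)\textsc{-RectVVFC}}$, then by Problem \ref{prob: lossycompression} there exists a clustered indexed algorithmic compression scheme $\mathcal{M}=(\mathcal{C},\mathcal{D})$ with $\textsf{CR}(\mathcal{C}(X,f,\varepsilon),X) \ge \textsf{size}(X)/K$, equivalently $\textsf{size}(\mathcal{C}(X,f,\varepsilon)) \le K$. Next I would exhibit the associated encoder-decoder pair: set $f_n := \mathcal{C}(\,\cdot\,, f, \varepsilon)$ and $g_n := \mathcal{D}$. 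Since \textsf{RecTupleBit} outputs a bit string of length $\textsf{size}(\mathcal{C}(X,f,\varepsilon))$, the codomain of $f_n$ embeds into $\{1,\ldots,2^{R_n}\}$ for $R_n := \textsf{size}(\mathcal{C}(X,f,\varepsilon)) \le K$, and the domain of $g_n$ matches accordingly; the injectivity of \textsf{RecTupleBit} guarantees that $g_n$ is well defined on the image of $f_n$.

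Then I would verify the feasibility constraint. By the Index Consistency condition built into the definition of a clustered indexed algorithmic compression scheme, the reconstruction $\hat{X} = \mathcal{D}\circ \mathcal{C}(X,f,\varepsilon)$ satisfies $d_f(X_i,\hat{X}_i) \le \varepsilon$ for every $i \in [n]$, which is exactly the \textsc{LossyIndexedCompression} distortion constraint at $(\rho,\varepsilon) = (d_f,\varepsilon)$. Therefore $(f_n,g_n)$ is a feasible pair for \textsc{LossyIndexedCompression} with rate $R_n \le K$, so the infimum $R_n^*$ over all feasible pairs satisfies $R_n^* \le K$, placing $((\varepsilon,(X_1,\ldots,X_n)),K)$ inside $L_{\textsc{LossyIndComp}}$.

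The only nontrivial step is reconciling the two encoder codomains: the box compression emits a variable-length bit string, while \textsc{LossyIndexedCompression} uses a fixed codebook $\{1,\ldots,2^{R_n}\}$. This is a cosmetic mismatch rather than a real obstacle, resolved by padding or by the standard identification of bit strings of length $\le R_n$ with elements of $\{1,\ldots,2^{R_n}\}$, using the delimiter bits $11$ appended by $\widetilde{\textsf{Bit}}$ to recover uniqueness of parsing. Once that identification is in place, the inclusion of languages is immediate, and no property beyond index consistency and the rate bound is invoked.
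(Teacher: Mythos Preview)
Your proposal is correct and follows essentially the same approach as the paper: the paper's proof is the single observation that a clustered indexed algorithmic compression scheme is a feasible encoder--decoder pair for \textsc{LossyIndexedCompression}, which immediately yields the language inclusion. Your write-up is a more careful elaboration of that one line, including the codebook-versus-bitstring reconciliation that the paper simply glosses over.
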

\begin{proof}
    We can check that: 
    \begin{equation}
         L_{(k,D)\textsc{-RectVVFC}}\subseteq L_{\textsc{LossyIndComp}}
    \end{equation}
    since a clustered indexed algorithmic compression scheme is a feasible  encoder-decoder pair for \textsc{LossyIndComp}.
\end{proof}
The globally optimal solution may not be a box cover with box summaries. This can be seen with Lemma \ref{prop: power-expand}. A box cover with box summaries data structure does not have a post-encoding $\Lambda$ of the box clusters. Thus, it is not fully expressive of all possible encoders. Nonetheless, this problem isolates the need to cluster jointly compressible assignments $i \mapsto X_i$. 

\textbf{Why Boxes?} 
Of course the compression algorithm $\mathcal{C}$ can output more complicated shapes beyond boxes. In \cite{reckhow1987covering}, general coverings of point sets by polygons on the plane are investigated and encoded by ``signatures." These can also be used as an encoding for $2$ dimensional voxel grid covers. We would like to, however, investigate the simplest possible shape of a box. This requires a simple $O(1)$ encoding. A box can be encoded by only two corner points, see Appendix Section \ref{sec: more-connected-shapes}. For more on shapes in voxel grids, see Section \ref{sec: Connected-Shapes} and the Appendix Section \ref{sec: more-connected-shapes}.
\section{Outline of Main Results}
We summarize here the main results of the paper:

We have shown that the problem of \textsc{LossyIndexedCompression} is undecidable under the conditions in Theorem \ref{thm: undecidable-lossy-compression}, restated here:
\begin{theorem}
    (Undecidability) 
        If there is a $\varepsilon>0$ and a $x \in \mathcal{X}$ where the set $B_x(\varepsilon)=\{\hat{x} \in \mathcal{X}: \rho(x,\hat{x})\leq \varepsilon\}$ is infinite, 
        then the \textbf{lossy indexed compression} problem with Kolmogorov decoder is undecidable.
\end{theorem}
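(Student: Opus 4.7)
The plan is to Turing-reduce from the uncomputability of Kolmogorov complexity to the lossy indexed compression problem with Kolmogorov decoder. Under the Kolmogorov decoder hypothesis discussed in the excerpt, the optimal rate $R_n^*$ is tied (up to an additive constant coming from the invariance theorem for prefix universal Turing machines) to the length of the shortest program $Q$ whose output $U_T(Q)$ is an index-consistent reconstruction of the input.

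First, I would restrict attention to the recursive sub-family of instances with $n=1$ and $P(X_1) = \delta_x$, quantifying over those $x \in \mathcal{X}$ whose ball $B_x(\varepsilon)$ is infinite (the hypothesis of the theorem guarantees this sub-family is non-empty). On any such instance, index-consistency collapses to the single condition $U_T(Q) \in B_x(\varepsilon)$, so the optimal rate coincides with $\min_{\hat{x} \in B_x(\varepsilon)} K(\hat{x})$, the minimum Kolmogorov complexity over the infinite set $B_x(\varepsilon)$.

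Second, I would argue that computing this minimum as a function of $x$ is undecidable. The intended route is a Berry-paradox / Chaitin-style diagonal argument: if the minimum program length were computable from $x$, one could enumerate all programs of at most that length, dovetail them on $U_T$, and wait for the first halting computation whose output lies in $B_x(\varepsilon)$; this would yield an effective procedure producing elements of arbitrarily prescribed Kolmogorov complexity, contradicting the classical theorem that $K$ is not total-computable and in particular not bounded below by any total-computable unbounded function. The hardest step, which I would isolate as a technical appendix lemma, is verifying that varying $x$ over the infinite-ball instances is expressive enough to encode arbitrary queries to $K$, rather than only producing instances on which the minimum happens to be trivially computable; this is where the genuinely infinite nature of $B_x(\varepsilon)$, rather than merely its non-emptiness, is essential.

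Finally, I would close with a standard subproblem-to-full-problem Turing reduction: any algorithm for the full lossy indexed compression problem can be invoked on these single-sample Dirac instances and the length of the returned program read off from the encoder, so undecidability of the sub-family lifts to undecidability of the whole problem. The main obstacle is the middle diagonalization step; once that is nailed down, both the reduction from the restricted sub-family and the lifting step are routine.
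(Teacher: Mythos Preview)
Your proposal is correct and follows essentially the same route as the paper: restrict to $n=1$ with a Dirac measure at a point $x$ whose $\varepsilon$-ball is infinite, identify the optimal rate with the minimum (conditional) Kolmogorov complexity over $B_x(\varepsilon)$, invoke a Berry/Chaitin-style diagonalization to show this minimum is uncomputable, and then lift via a subproblem-to-full-problem Turing reduction. The paper packages the diagonalization as an appendix lemma and the lifting as a separate proposition, exactly as you anticipate; your added caution about whether the sub-family of instances is expressive enough is a point the paper treats rather informally.
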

For voxelized vector fields, we have shown the following undecidability result:
\begin{corollary}
     (Undecidability of \textsc{LossyIndexedCompression} on \textsc{Voxelized VVF} data) 
    
Let $\rho: \mathbb{R}^D \times \mathbb{R}^D \rightarrow \mathbb{R}_{\geq 0}$ be a nontrivial metric so that \begin{equation}
    \forall x,y \in \mathbb{R}^D: x \neq y\Rightarrow \rho(x,y)>0
\end{equation}
\textsc{LossyIndexedCompression} with a Kolmogorov decoder and instances $(\varepsilon,(X_1,...,X_n))$ restricted to $(k,D)$\textsc{-Voxelized VF} $X$ with distribution $P(X_i)=\delta_{X_i}$ and $\varepsilon>0$ is undecidable.
\end{corollary}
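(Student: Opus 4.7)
The plan is to verify the hypothesis of Theorem \ref{thm: undecidable-lossy-compression} within the restricted instance class of $(k,D)$-voxelized vector fields, and then invoke that theorem directly. The key observation is that restricting the inputs to voxelized vector fields does not in any way restrict the reconstruction space: even a single-voxel input has $X_1 \in \mathbb{R}^D$, and the admissible reconstructions $\hat{X}_1$ still range over all of $\mathbb{R}^D$. Thus all of the ``infinite ball'' behavior that drives the general undecidability result remains available.

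First I would reduce to the smallest nontrivial instance, $n = 1$ with a single voxel at some lattice coordinate and $P(X_1) = \delta_x$ for an arbitrary $x \in \mathbb{R}^D$. This instance satisfies the definition of a $(k,D)$-voxelized vector field automatically: the underlying vector field $V$ is trivially injective on a one-point domain, and the voxel grid $[1] \times \cdots \times [1]$ is well-defined for any $k$. Then I would argue that under the stated nontriviality hypothesis on $\rho$ — together with the implicit assumption that $\rho$ is of the norm-induced / continuous type for which $\mathbb{R}^D$ is a standard metric space — the closed ball $B_x(\varepsilon) = \{\hat{x} \in \mathbb{R}^D : \rho(x, \hat{x}) \leq \varepsilon\}$ is uncountably infinite for every $\varepsilon > 0$, because it is a positive-radius neighborhood in an uncountable space in which only $x$ itself is at distance $0$ from $x$.

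Second, I would appeal to the Turing reduction from restricted subproblems to their parents (Proposition \ref{prop: subproblem-Treduce}): the family of single-voxel instances just constructed is precisely the family that appeared in the proof of Theorem \ref{thm: undecidable-lossy-compression}, so any decider for LossyIndexedCompression restricted to $(k,D)$-voxelized VVFs with Kolmogorov decoder would also decide this sub-family, contradicting the undecidability already established.

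The main obstacle is the interpretation of ``nontrivial metric.'' Stated only as $x \neq y \Rightarrow \rho(x,y) > 0$, the discrete metric on $\mathbb{R}^D$ would satisfy the hypothesis while having finite $\varepsilon$-balls for $\varepsilon < 1$. The intended reading is clearly the standard one in which $\rho$ is a genuine real-analytic metric (e.g.\ induced by a norm), so that small balls automatically contain a continuum; I would state this interpretation explicitly in the proof and then the invocation of Theorem \ref{thm: undecidable-lossy-compression} becomes immediate.
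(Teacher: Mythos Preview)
Your proposal is correct and follows essentially the same route as the paper: verify that $B_x(\varepsilon)$ is infinite and invoke Theorem~\ref{thm: undecidable-lossy-compression}. The paper's own proof is a one-liner (``Since $\rho$ is a nontrivial metric, $B_x(\varepsilon)\subseteq \mathbb{R}^D$ is infinite. The result thus follows by Theorem~\ref{thm: undecidable-lossy-compression}.''); your version is more careful, and your observation about the discrete metric is a legitimate gap in the paper's statement that the paper simply glosses over.
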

We thus introduced a decidable sublanguage to the \textsc{LossyIndexedCompression}, the   $(k,D)$\textsc{-RectLossyVVFCompression} problem. The decidability can be stated here:
\begin{theorem}
    $(k,D)$\textsc{-RectLossyVVFCompression} is decidable.
\end{theorem}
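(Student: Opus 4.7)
The plan is to establish decidability by exhibiting a naive but provably terminating enumeration algorithm. The key observation will be that for any legal instance $(X,f,\varepsilon)$, both the combinatorial data (the box cover) and the numerical data (the per-box summaries) are drawn from finite ambient sets, and feasibility can be checked by finitely many rational arithmetic operations.

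First I would bound the number of candidate box covers. Since $\textsf{supp}(X)$ has $n$ voxels and each axis-aligned box is determined by a pair of corner voxels (Proposition \ref{prop: 2-corner-box-appendix}), there are at most $O(n^2)$ distinct boxes, and hence at most $2^{O(n^2)}$ distinct box covers $\mathcal{S}(X,f,\varepsilon) \subseteq 2^{\textsf{supp}(X)}$. Next I would bound the space of admissible summaries per box: the definition of a box compression algorithm requires each $\hat{f}(R)$ to be produced by $O(\textsf{len}(f))$ binary $(+,\cdot)$ operations applied to entries of $\{X_i : i \in R\}$. Since the operator alphabet $\{+,\cdot\}$ and the entry set are both finite, at most
\begin{equation}
\lvert \{+,\cdot\} \times \{X_i : i \in [n]\}\rvert^{O(\textsf{size}(f))} = O\bigl((2n)^{\textsf{size}(f)}\bigr)
\end{equation}
symbolic summaries are possible for each box. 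Multiplying these bounds, the whole catalogue of candidate box covers with box summaries is finite.

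The final step of the algorithm would iterate over every candidate and test feasibility, i.e. check $\lvert f(X_i) - \hat{f}(R)\rvert \leq \varepsilon^* < \varepsilon$ for every $i \in R$ and every $R \in \mathcal{S}(X,f,\varepsilon)$, and that $\bigcup_{R \in \mathcal{S}} R = \textsf{supp}(X)$. Because $f$ is a piecewise multivariate polynomial evaluated on rational inputs (per the instance assumption that all numerical inputs admit finite bit-string representations), every value involved is rational and the comparison is decidable. For each feasible candidate, the objective $\textsf{CR}(\mathcal{C}(X,f,\varepsilon),X)$ is computed in closed form via Proposition \ref{prop: size-boxcover-compression}, and an $\arg\max$ (or, in the decision version, a comparison with the threshold $K$) is returned.

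The main potential obstacle is ensuring that summary computation stays inside a decidable arithmetic model, since in general real arithmetic admits undecidable comparisons. This is resolved by the instance convention that all numerical inputs to the algorithm come from $\mathbb{Q}$ with finite bit strings and that $\textsf{Summarize}_R$ employs only $(+,\cdot)$: every intermediate quantity is then a rational of bit length polynomial in the input, so all comparisons are exact and effective. Consequently the enumeration halts on every input, which proves the theorem.
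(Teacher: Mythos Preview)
Your proposal is correct and follows essentially the same enumeration strategy as the paper: bound the number of candidate box covers and the number of admissible per-box summaries by finite quantities, then brute-force over the product and check feasibility. Your counting of box covers as $2^{O(n^2)}$ (via the $O(n^2)$ corner-pair boxes) is actually more careful than the paper's stated $2^n$, and your explicit justification that all comparisons remain in $\mathbb{Q}$ and are therefore effectively decidable is a detail the paper leaves implicit.
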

The remainder of the paper is about the $(k,D)$\textsc{-RectLossyVVFCompression} problem. 

We show a polynomial time tractability result about decompression for the\\ $(k,D)$\textsc{-RectLossyVVFCompression} problem and two hardness results about compression. We will use terminology from complexity theory, see Section \ref{sec: algorithms-appendix} in the Appendix.

The tractability result is shown in Section \ref{sec: decompression-easy} and can be summarized as the following:
\begin{theorem}
    For any $\mathcal{C}$ box compression algorithm solution for $(k,D)$\textsc{-RectLossyVVFCompression}.
    
    For any  $(X,f,\varepsilon)$ instance of $\mathcal{C}$. We have the \textbf{box cover with box summaries} data structure on $(X,f,\varepsilon)$:
    \begin{equation}
        \mathcal{C}(X,f,\varepsilon)=(\{(R,\hat{f}_R): R \in \mathcal{S}(X,f,\varepsilon)\}, (\varepsilon^*,\varepsilon,f))
    \end{equation}
    
    Then there is a decompression algorithm $\mathcal{D}$ running in time:
    \begin{equation}
       O(\lvert \mathcal{S}(X,f,\varepsilon)\rvert (\textsf{size}(f)\log(\frac{1}{\varepsilon^*}))^{O(D^2)})
    \end{equation}
    so that:
    \begin{equation}
    \lvert f(X_i)-f((\mathcal{D}\circ \mathcal{C}(X,f,\varepsilon))_i)\rvert_2 \leq \varepsilon, \forall i \in \textsf{supp}(X)
\end{equation}
\end{theorem}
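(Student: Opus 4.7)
The plan is to construct $\mathcal{D}$ that processes each box of the cover independently. For each $R \in \mathcal{S}(X,f,\varepsilon)$, the decompression algorithm produces a single representative vector $\hat{X}_R \in \mathbb{R}^D$ satisfying $\lvert f(\hat{X}_R) - \hat{f}_R \rvert_2 \leq \varepsilon - \varepsilon^*$, and then sets $\hat{X}_i := \hat{X}_R$ for every $i \in R$. Boxes in the cover may overlap, so ties are broken deterministically (for example, by the lex-smallest corner pair of the containing box). Correctness is then immediate from the triangle inequality: the compression guarantee on $\textsf{Summarize}_R$ asserts $\lvert f(X_i) - \hat{f}_R \rvert_2 \leq \varepsilon^*$ for every $i \in R$, so combined with the construction bound this gives $\lvert f(X_i) - f(\hat{X}_i) \rvert_2 \leq \varepsilon^* + (\varepsilon - \varepsilon^*) = \varepsilon$ for every index, which is the required index consistency.

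The computational core is therefore, per box, the semi-algebraic existential problem of finding a rational point $y \in \prod_{j=1}^D [a_j, b_j]$ satisfying $(f(y) - \hat{f}_R)^2 \leq (\varepsilon - \varepsilon^*)^2$. A feasible witness is guaranteed to exist: whenever $2\varepsilon^* \leq \varepsilon$, any original $X_{i_0}$ with $i_0 \in R$ is itself feasible; more generally $\hat{f}_R$ lies within $\varepsilon^*$ of $f(X_{i_0})$, so by continuity of the polynomial $f$ the level set $\{y : f(y) = \hat{f}_R\}$ is non-empty in a neighborhood of $X_{i_0}$ inside the compactness hypercube. The bit complexity of the encoded formula is polynomial in $\textsf{size}(f)$ and $\log(1/\varepsilon^*)$, since $\hat{f}_R$ is produced by $O(\textsf{len}(f))$ arithmetic operations on rationals bounded in $\textsf{size}(X)$ and the slack $\varepsilon - \varepsilon^*$ is specified at precision $\log(1/\varepsilon^*)$.

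To return a rational witness of the required precision I would invoke a standard existential real-quantifier-elimination algorithm, for instance Renegar's algorithm or the Basu--Pollack--Roy algorithm from algorithmic real algebraic geometry; for a single existential block over $D$ variables in a polynomial inequality of degree $\textsf{deg}(f) \leq \textsf{len}(f)$ these run in time $(\textsf{size}(f) \log(1/\varepsilon^*))^{O(D^2)}$ in the bit model. Multiplying by the number of boxes $\lvert \mathcal{S}(X,f,\varepsilon) \rvert$ yields the stated bound, with the bookkeeping of writing $\hat{X}_i := \hat{X}_R$ to each voxel dominated by this cost (each voxel is visited $O(1)$ times under the tie-breaking rule, and the voxel count is polynomial in $\textsf{size}(X)$).

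The main obstacle will be extracting a rational approximation $\hat{X}_R$ whose rounded value actually meets $\lvert f(\hat{X}_R) - \hat{f}_R \rvert_2 \leq \varepsilon - \varepsilon^*$, rather than merely approximating an algebraic real root abstractly: the precision requested of the real-algebraic oracle must be propagated through $f$ using its Lipschitz constant on the compact hypercube, which is itself controlled by $\textsf{deg}(f)$ and the hypercube bounds. A secondary subtlety is handling the piecewise structure of $f$, which requires running the solver on each piece separately; the piece count is polynomial in $\textsf{len}(f)$, so this affects only the base and not the $D^2$ exponent of the complexity bound.
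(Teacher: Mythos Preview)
Your proposal is correct and follows essentially the same route as the paper: per box, solve the semi-algebraic feasibility system $(f(y)-\hat f_R)^2\le(\varepsilon-\varepsilon^*)^2$ with a real-algebraic decision procedure, assign the returned witness to every voxel in that box, and conclude index consistency by the triangle inequality $|f(X_i)-f(\hat X_i)|\le \varepsilon^*+(\varepsilon-\varepsilon^*)$. The only cosmetic differences are that the paper cites Grigor'ev's algorithm specifically (rather than Renegar or BPR) and lets the witness live in the algebraic numbers $\tilde{\mathbb Q}^D$, so your concern about Lipschitz-controlled rational rounding does not arise in their version.
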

We also show in Section \ref{sec: decompression-easy} that there is no obstruction to knowing the space of all possible feasible solutions of the $(k,D)$-\textsc{RectLossyVVFCompression} in polynomial time. 

This is stated in Lemma \ref{lemma: boxcoversum-sol}. It roughly states that it is always possible to find a number $\varepsilon^{*}\in \mathbb{Q}$ in polynomial time with size polynomial in the input size so that any possible solution is contained in an Indexed Vietoris-Rips complex of threshold determined by $\varepsilon^{*} \in \mathbb{Q}$.  

This Lemma along with the tractability result on decompression, shows that any possible compressed data structure can be decompressed in polynomial time. Thus, the only obstruction to obtaining a polynomial time tractable clustered indexed compression scheme is in solving for an optimal clustering.

The first hardness result is about the obstruction to finding this optimal compressed data structure. This is shown in Section \ref{sec: np-hard} and stated here:
\begin{theorem}
    $(k,D)$\textsc{-RectLossyVVFCompression} is NP-hard if $k\geq 2, D\geq 1$.
\end{theorem}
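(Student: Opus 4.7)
The plan is a polynomial-time many-one reduction from the NP-hard minimum rectangle cover problem for orthogonal polygons with holes (the problem already mentioned in Section \ref{sec: voxelized-vectorfield-data} via \cite{garey2002computers, reckhow1987covering}) to the decision version of $(2,1)\textsc{-RectLossyVVFCompression}$. The extension to arbitrary $k \geq 2$ and $D \geq 1$ will come by padding, so the main work is in the case $k=2, D=1$.

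Construction: given an instance $(P, t)$ with $P \subseteq [N]\times[M]$ a union of unit cells (an orthogonal polygon possibly with holes), set $k=2$, $D=1$, $f(x)=x$, $\varepsilon = 1/4$, and define the voxelized vector field $X$ on the full grid $[N]\times[M]$ by
\begin{equation}
    X_{(i,j)} \;=\; \begin{cases} 0 & \text{if } (i,j) \in P, \\ 10\varepsilon \cdot \bigl((i-1)M + j\bigr) & \text{otherwise.} \end{cases}
\end{equation}
Every $f$-value outside $P$ is distinct and is separated from every other $f$-value by at least $10\varepsilon$. All sizes are polynomial in $N + M$, so the construction is polynomial-time. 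The decision threshold $K$ is chosen using Proposition \ref{prop: size-boxcover-compression}, which expresses $\textsf{size}(\mathcal{C}(X,f,\varepsilon))$ as an affine function of $|\mathcal{S}(X,f,\varepsilon)|$ whose slope and intercept are determined by $(X,f,\varepsilon)$ alone; pick $K$ so that $\textsf{size}(\mathcal{C}) \leq K$ is equivalent to $|\mathcal{S}| \leq (NM - |P|) + t$.

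Analysis: for any box $R$ in a feasible index clustering the summary $\hat{f}(R) \in \mathbb{Q}$ satisfies $|f(X_i) - \hat{f}(R)| \leq \varepsilon^* < \varepsilon$ for every $i \in R$, so the triangle inequality gives $\textsf{diam}_{d_f}(\{X_i : i \in R\}) \leq 2\varepsilon^* < 2\varepsilon$. Since every ``heterogeneous'' pair of voxels has $f$-gap at least $10\varepsilon$ by construction, each admissible $R$ is either a singleton consisting of a single outside voxel, or a rectangle contained entirely in $P$ (admitting $\hat f(R) = 0$ as a valid summary). Consequently any optimal feasible cover uses exactly $NM - |P|$ singleton boxes for the outside cells, plus a minimum axis-aligned rectangle cover of $P$ for the rest; minimizing $|\mathcal{S}|$ is therefore equivalent to minimum rectangle cover on $P$. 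Hence $(P,t)$ is a YES-instance iff the constructed instance has $\textsf{CR} \geq \textsf{size}(X)/K$. For $k > 2$, add a single slice in each extra dimension; for $D > 1$, zero-pad each vector. Neither padding changes the optimal cover, and both remain polynomial in size.

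The main obstacle I expect is making the ``separation'' argument fully watertight: one must rule out any clever choice of $\varepsilon^*$ and summary $\hat f(R)$ that allows a box to straddle the inside and outside of $P$, or to contain two distinct outside voxels. The diameter bound $\textsf{diam}_{d_f}(\{X_i : i \in R\}) \leq 2\varepsilon^*$ combined with $\varepsilon^* < \varepsilon = 1/4$ handles this in principle, but one has to check that the adversary's freedom in selecting $\varepsilon^*$ and in computing $\hat f(R)$ via $O(\textsf{len}(f))$ algebraic operations on the vectors of $R$ introduces no additional admissible configurations. Because $f(x)=x$, every summary reachable by such operations lies in the linear hull of $\{X_i : i \in R\}$, and the $10\varepsilon$ gap between heterogeneous $f$-values remains insurmountable. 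A secondary bookkeeping issue is to pin down $\varepsilon^*$ (e.g.\ $\varepsilon^* = \varepsilon/2$) a priori so that the affine size accounting of Proposition \ref{prop: size-boxcover-compression} depends only on $(X,f,\varepsilon)$ and the choice of $K$ is well-defined.
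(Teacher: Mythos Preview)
Your proposal is correct and follows essentially the same approach as the paper: reduce from minimum rectangle cover of an orthogonal region, assign a single $f$-value to the region and pairwise well-separated $f$-values to each voxel outside, so that any feasible box is either contained in the region or a singleton, and then translate the box count into a compression-size threshold via Proposition~\ref{prop: size-boxcover-compression}. The only cosmetic difference is that the paper encodes the index-dependence in $f$ (with $X_i=M_i\in\{0,1\}$ and $f(X_i)=-\eta(i)$ on zeros), whereas you put it in $X$ with $f(x)=x$; your choice is arguably cleaner since it keeps $f$ a genuine function of the vector. One small point to tidy: as written, all voxels in $P$ receive $X_{(i,j)}=0$, which conflicts with the injectivity clause in the definition of a voxelized vector field; a tiny perturbation (e.g.\ $X_{(i,j)}=\delta\cdot((i-1)M+j)$ on $P$ with $\delta \ll \varepsilon/(NM)$) fixes this without changing the argument.
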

It is natural to desire an approximation for this NP-Hard problem. Our second result shows that even approximately solving the $(k,D)$\textsc{-RectLossyVVFCompression} problem is hard.  This is shown in Section \ref{sec: apx-hard} and stated here:
\begin{theorem}    $(k,D)$\textsc{-RectLossyVVFCompression} is APX-Hard for $k,D\geq 2$
\end{theorem}
Assuming $P\neq NP$, our two hardness results show that when $k,D \geq 2$ there can be no exact polynomial time algorithm nor can there even be a PTAS  approximation algorithm for the $(k,D)$\textsc{-RectLossyVVFCompression} problem.

Our proof outline for the three hardness results are given in the following diagram: 
\[
\begin{tikzcd}
[
  scale=0.2,
  every node/.style={transform shape},
  row sep=2.0em, column sep=1.75em,
  remember picture,
  execute at end picture={
    \begin{scope}[on background layer]
      \node[draw, color=green, thick, rounded corners, fit=(special-3sc)(specialrect-3sc), inner sep=0.01pt, label=left:{}] {};
      \node[draw, thick, rounded corners, fit=(kolmogorov-complexity)(lossyindexedcompression), inner sep=1.5pt, label=left:{}] {};
      \node[draw, color=red, thick, rounded corners, fit=(rectimgcompression)(kd-rectlossyvvfcompression), inner sep=1.5pt, label=left:{}] {};
      \node[draw, color=blue, thick, rounded corners, fit=(special-3sc)(specialrect-3sc)(2-dimintvgridboxcover)(vgridspecialrect-3sc), inner sep=2.5pt, label=left:{}] {};
    \end{scope}
    }]
	&& |[alias=special-3sc]|{\textsc{Special-3SC} } & |[alias=2-dimintvgridboxcover]|{2\textsc{-DimIntVGridBoxCover}} \\
	&& |[alias=specialrect-3sc]|{\textsc{SpecialRect-3SC} } & |[alias=vgridspecialrect-3sc]|\textsc{VGridSpecialRect-$3$SC} \\
	{} && |[alias=rectimgcompression]|{\textsc{RectImgCompression}} & |[alias=kd-rectlossyvvfcompression]|{(k,D)\textsc{-RectLossyVVFCompression}: k,D\geq2} \\
	&& |[alias=kolmogorov-complexity]|\textsc{Kolmogorov Complexity} & |[alias=lossyindexedcompression]|\textsc{LossyIndexedCompression}
	\arrow["{{{{{\textcolor{Green}{\textbf{PTAS}}} }}}}", from=1-3, to=2-3]
	\arrow["{{{{{\textcolor{teal}{\textbf{PTAS}}}}}}}"{description}, from=2-3, to=1-4]
	\arrow["{{{{{\textcolor{blue}{\textbf{PTAS}}}}}}}", from=2-3, to=2-4]
	\arrow["{{{{{\textcolor{teal}{\textbf{subproblem}}}}}}}"{description}, from=2-4, to=1-4]
	\arrow["{{{{{\textcolor{blue}{\textbf{PTAS}}}}}}}"', from=2-4, to=3-4]
	\arrow["{{{{\textcolor{red}{\textbf{Karp Reduction}}}}}}", from=3-3, to=3-4]
	\arrow["{{{\textbf{sublanguage}}}}"', hook, from=3-4, to=4-4]
	\arrow["{{{\textbf{Turing Reduction}}}}"{pos=0.2}, from=4-3, to=4-4]
\end{tikzcd}
\]
The \textbf{undecidability} of the \textsc{LossyIndexedCompression} problem is shown in Theorem \ref{thm: undecidable-lossy-compression}. Its relationship to the $(k,D)$\textsc{-RectLossyVVFCompression} problem is proven in Proposition \ref{prop: sublang-rectvvf-lossyindcomp}. The decidability of $(k,D)$\textsc{-RectLossyVVFCompression} is proven in Theorem \ref{thm: decidable-kd-rectlossyvvfcompression}.

The Karp reduction, denoted in \textcolor{red}{\textbf{red}}, from \textsc{RectImgCompresion} to \\$(k,D)$\textsc{-RectLossyVVFCompression} for $k,D \geq 2$ is proven in Section \ref{sec: np-hard} in Theorem \ref{thm: np-hard-wproof}.

The PTAS reductions from $3\textsc{VC}$ to \textsc{SpecialRect-$3$SC} are known results from~\cite{chan2014exact}. These are denoted in \textcolor{Green}{\textbf{green}}. This is discussed at the beginning of Section \ref{sec: apx-hard}. We then proceed with  a sequence of PTAS reductions from \textsc{SpecialRect-$3$SC} to $(k,D)$\textsc{-RectLossyVVFCompression}, denoted in \textcolor{blue}{\textbf{blue}}. These are proven in Section \ref{sec: apx-hard} from Theorem \ref{thm: R2ZPTAS-reduction} to Theorem \ref{thm: vgrid2rectlossyvfcomp}. The subproblem $2$\textsc{-DimIntVGridBoxCover} is a natural consequence of Theorem \ref{thm: R2ZPTAS-reduction}. The connecting reductions are denoted with  \textcolor{teal}{\textbf{teal}}.

\section{Decompressing is Easy}\label{sec: decompression-easy}
The  data structure $\mathcal{C}(X,f,\varepsilon)$ is a box cover over the indices of a $(k,D)$-dimensional voxelized vector field. Our objective in the $(k,D)$\textsc{-RectLossyVVFCompression} is to maximize the compression ratio. In order to achieve this, we must find redundant indices up to the inequality constraint from Problem \ref{prob: lossycompression}. This can be achieved by finding  large clusters of vectors $X_i$ whose indices can be jointly summarized up to the lossy constraint. 

Let us study the space of all possible clusters formed by the metric inequality constraint of Problem \ref{prob: lossycompression}. An Indexed Vietoris-Rips Complex is defined by such an inequality constraint. The following observation says that any clustering of a vector database $X$ up to a distortion bound of $\varepsilon^*$ is contained in some  Indexed Vietoris-Rips Complex of threshold $2\varepsilon^*$.\begin{proposition}\label{prop: sol-2epsilon}
   Let $\varepsilon^*>0$ and let there be a $(k,D)$-dimensional voxelized vector field $X$ with objective function $f: \mathbb{R}^D \rightarrow \mathbb{R}$. 
   
   For any cluster $C \in \textsf{iVR}^{d_f}_{2\varepsilon^*}(X)$, each vector $X_i: i  \in C$ can be replaced by the \textbf{\textsf{mid-range}} summary value:
\begin{equation}
 \textbf{\textsf{mid-range}}(f,C)\triangleq \frac{\max_{i \in  C}f(X_i)+\min_{i \in C}f(X_i)}{2}
\end{equation} on $C$ without affecting the $\varepsilon^*$ distortion bound: 
\begin{equation}
    \lvert f(X_i)-\textbf{\textsf{mid-range}}(f,C) \rvert_2 \leq \varepsilon^*, \forall i \in C
\end{equation}
\end{proposition}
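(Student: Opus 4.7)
The plan is to unpack the diameter condition implied by $C \in \textsf{iVR}^{d_f}_{2\varepsilon^*}(X)$ and then use an elementary inequality about the mid-range of a set of real numbers.

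First, I would translate the hypothesis into a statement about the spread of the scalar values $\{f(X_i) : i \in C\}$. By definition of the Indexed Vietoris--Rips complex over the metric $d_f$, the cluster $C$ being in $\textsf{iVR}^{d_f}_{2\varepsilon^*}(X)$ means
\begin{equation}
\textsf{diam}_{d_f}(\{X_i : i \in C\}) = \sup_{i,j \in C} |f(X_i)-f(X_j)|_2 \leq 2\varepsilon^*.
\end{equation}
In particular, setting $M \triangleq \max_{i \in C} f(X_i)$ and $m \triangleq \min_{i \in C} f(X_i)$, which exist because $C$ is finite, I obtain $M - m \leq 2\varepsilon^*$.

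Next I would apply the elementary fact that the mid-range $\frac{M+m}{2}$ is the center of the tightest interval $[m,M]$ containing every $f(X_i)$ for $i \in C$. For any $i \in C$, we have $m \leq f(X_i) \leq M$, so
\begin{equation}
\bigl|f(X_i) - \textbf{\textsf{mid-range}}(f,C)\bigr|_2 = \left| f(X_i) - \tfrac{M+m}{2} \right|_2 \leq \tfrac{M-m}{2} \leq \varepsilon^*,
\end{equation}
which is exactly the required distortion bound.

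There is no real obstacle here; the statement is a one-line consequence of the definition of the Indexed Vietoris--Rips complex combined with the fact that for any finite real set the mid-range minimizes the $\ell_\infty$ deviation and is bounded by half the spread. The only subtle point worth explicitly noting is that $d_f$ is genuinely a metric on $\{X_i\}_{i \in C}$ (guaranteed by Observation~\ref{obs: VR-df}, since $X$ is a $(k,D)$-dimensional voxelized vector field with no repeated vectors), so the diameter interpretation above is justified.
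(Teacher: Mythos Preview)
Your proof is correct and follows essentially the same approach as the paper: both unpack the Vietoris--Rips diameter condition to get $\max_{i\in C} f(X_i) - \min_{i\in C} f(X_i) \le 2\varepsilon^*$ and then bound the deviation of any $f(X_i)$ from the mid-range by half this spread. Your version is arguably a little cleaner in that you explicitly name $M$ and $m$ and state $M-m\le 2\varepsilon^*$ before applying the elementary mid-range bound, whereas the paper carries the max/min expressions through the inequality chain directly.
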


\begin{proof}
    By Observation \ref{obs: VR-df} the Indexed Vietoris-Rips Complex $\textsf{iVR}^{d_f}_{2\varepsilon^{*}}(X)$ is well defined.
    
Certainly, for any $C \in \textsf{iVR}^{d_f}_{2\varepsilon^*}(X)$
\begin{subequations}
\begin{equation}
\lvert \textbf{\textsf{mid-range}}(f,C)-f(X_i)\rvert_2 = \lvert \frac{(\max_{j \in C} f(X_j)+\min_{j \in C} f(X_j))}{2}-f(X_i)\rvert_2 \end{equation}
\begin{equation}\leq \lvert \frac{(\max_{j \in C} f(X_j)+\min_{j \in C} f(X_j))}{2}-\min_{j \in C} f(X_j) \rvert_2 
\end{equation}\begin{equation}=\lvert \frac{\max_{j \in C} f(X_j)-\min_{j \in C} f(X_j)}{2}\rvert_2 \leq \varepsilon^*, \forall i \in C
\end{equation}
\end{subequations}
\end{proof}
\begin{proposition}\label{prop: max-VR-threshold}
For the conditions of Problem \ref{prob: lossycompression}, the maximum threshold $r$ to determine clusters $C \in \textsf{iVR}^{d_f}_{r}(X)$ where  $\lvert f(X_i)-\hat{f}(C)\rvert_2 \leq \varepsilon^*$ for some $\hat{f}(C) \in \mathbb{R}$ is $r= 2\varepsilon^*$.
\end{proposition}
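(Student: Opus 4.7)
The plan is to verify the two directions of the maximality claim: first, that $r = 2\varepsilon^*$ is feasible (any cluster of diameter at most $2\varepsilon^*$ admits a summary value within $\varepsilon^*$ of every $f(X_i)$), and second, that this bound cannot be loosened (there exist instances in which any cluster of diameter strictly greater than $2\varepsilon^*$ fails to admit such a summary).

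For the feasibility direction I would simply invoke Proposition \ref{prop: sol-2epsilon}: on any cluster $C \in \textsf{iVR}^{d_f}_{2\varepsilon^*}(X)$, taking $\hat{f}(C) := \textbf{\textsf{mid-range}}(f,C)$ gives $|f(X_i) - \hat{f}(C)|_2 \leq \varepsilon^*$ for every $i \in C$. So the threshold $r = 2\varepsilon^*$ is achievable.

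For the maximality direction I would argue by contradiction. Suppose some $r > 2\varepsilon^*$ is also a valid threshold, meaning that for every instance $(X,f,\varepsilon)$ and every $C \in \textsf{iVR}^{d_f}_r(X)$ there exists $\hat{f}(C) \in \mathbb{R}$ with $|f(X_i) - \hat{f}(C)|_2 \leq \varepsilon^*$ for all $i \in C$. Pick any $(k,D)$-voxelized vector field $X$ containing two vectors $X_i, X_j$ whose images satisfy $|f(X_i) - f(X_j)|_2 = r' \in (2\varepsilon^*, r]$; such an instance plainly exists by choosing $D = 1$, $f$ the identity, and two scalar values separated by $r'$. Then $\{i,j\} \in \textsf{iVR}^{d_f}_r(X)$, but for any candidate $\hat{f}(\{i,j\}) \in \mathbb{R}$ the triangle inequality gives
\begin{equation}
|f(X_i) - \hat{f}(\{i,j\})|_2 + |f(X_j) - \hat{f}(\{i,j\})|_2 \;\geq\; |f(X_i) - f(X_j)|_2 \;=\; r' \;>\; 2\varepsilon^*,
\end{equation}
so at least one of the two summands strictly exceeds $\varepsilon^*$, contradicting the assumed feasibility. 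Hence no $r > 2\varepsilon^*$ is a valid threshold.

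I do not anticipate a genuine obstacle here: the only thing to be careful about is stating what ``maximum'' means, i.e., that the claim is quantified over all instances $(X,f,\varepsilon)$ satisfying the hypotheses of Problem \ref{prob: lossycompression}, so that the witness instance in the second direction is legitimate. Everything else is a one-line application of the triangle inequality paired with Proposition \ref{prop: sol-2epsilon}.
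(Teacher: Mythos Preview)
Your proposal is correct and follows essentially the same approach as the paper: the maximality direction is argued by contradiction via the triangle inequality applied to two points at distance exceeding $2\varepsilon^*$. The only difference is that you explicitly state the feasibility direction by invoking Proposition~\ref{prop: sol-2epsilon}, whereas the paper's proof treats that direction as already established by the preceding proposition and focuses solely on the contradiction for $r > 2\varepsilon^*$.
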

\begin{proof}
    We prove by contradiction. 
    
    Suppose $r>2\varepsilon^*$. Let there be some $(k,D)$-dimensional voxelized vector field $X$ where there are two vectors $X_i,X_j: i,j \in  C$ for some $C \in \textsf{iVR}^{d_f}_{r}(X)$  where 
    \begin{equation}
        \lvert f(X_i)-f(X_j)\rvert_2=r > 2\varepsilon^*.
    \end{equation}
    If we can find a single vector $\hat{f}(C) \in \mathbb{R}$ to summarize cluster $C$ under the lossy constraint
    \begin{equation}
        \lvert \hat{f}(C)-f(X_i)\rvert_2 \leq \varepsilon^*,
    \end{equation}
    then we show that there is a contradiction. 
    
    If $\lvert \hat{f}(C)-f(X_i)\rvert_2 \leq \varepsilon^*$ and $\lvert \hat{f}(C)-f(X_j)\rvert_2 \leq \varepsilon^*$, then:
    \begin{equation}
        \lvert f(X_i)-f(X_j)\rvert_2 \leq \lvert \hat{f}(C)-f(X_i)\rvert_2+ \lvert \hat{f}(C)-f(X_j)\rvert_2 \leq 2\varepsilon^*,
    \end{equation}
    contradicting that $X_i,X_j$ are witness to:
    \begin{equation}
        \lvert f(X_i)-f(X_j)\rvert_2=r > 2\varepsilon^*
    \end{equation}
\end{proof}
\begin{proposition}\label{prop: maximum-epsilon}
    For a \textbf{box compression algorithm} $\mathcal{C}$ with:
    \begin{equation}
        \mathcal{C}(X,f,\varepsilon)=(\{(R,\bullet_R): R \in \mathcal{S}(X,f,\varepsilon)\}, (\varepsilon^*,\varepsilon,f))
    \end{equation}
    we must have:
    \begin{equation}
        \mathcal{S}(X,f,\varepsilon)\subseteq \textsf{iVR}_{2\varepsilon}^{d_f}(X)
    \end{equation}
\end{proposition}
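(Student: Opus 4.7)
The plan is to unpack the definition of a box compression algorithm and apply the triangle inequality on the energy gap metric $d_f$. From the definition of the $\textsf{Summarize}_R$ component of a box compression algorithm, every $R \in \mathcal{S}(X,f,\varepsilon)$ comes equipped with a summary value $\hat{f}(R) \in \mathbb{Q}$ satisfying
\begin{equation}
\lvert f(X_i) - \hat{f}(R) \rvert_2 \leq \varepsilon^*, \quad \forall i \in R.
\end{equation}
This is exactly the scalar witness needed to bound the energy-gap diameter of the cluster.

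Next I would apply the triangle inequality on $d_f$ around this common witness. For any two indices $i,j \in R$,
\begin{equation}
d_f(X_i, X_j) = \lvert f(X_i) - f(X_j) \rvert_2 \leq \lvert f(X_i) - \hat{f}(R) \rvert_2 + \lvert \hat{f}(R) - f(X_j) \rvert_2 \leq 2\varepsilon^*.
\end{equation}
Taking the supremum over $i,j \in R$ gives $\textsf{diam}_{d_f}(\{X_i : i \in R\}) \leq 2\varepsilon^*$, so by the definition of the Indexed Vietoris--Rips complex, $R \in \textsf{iVR}^{d_f}_{2\varepsilon^*}(X)$. This essentially recovers the tight threshold of Proposition \ref{prop: max-VR-threshold} for each individual cluster.

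To close the gap between $\varepsilon^*$ and $\varepsilon$, I would invoke the index consistency requirement built into the definition of a clustered indexed algorithmic compression scheme. Since $(\mathcal{C}, \mathcal{D})$ must produce a reconstruction $\hat{X}$ with $d_f(X_i, \hat{X}_i) \leq \varepsilon$ for every $i$, and the decoder's only information about vectors in $R$ is the scalar $\hat{f}(R)$, the summarize tolerance $\varepsilon^*$ cannot exceed $\varepsilon$ (otherwise there would be an index in some $R$ violating the global lossy bound). Combining $\varepsilon^* \leq \varepsilon$ with the inclusion $R \in \textsf{iVR}^{d_f}_{2\varepsilon^*}(X)$ and the monotonicity of the Indexed Vietoris--Rips complex in its threshold parameter yields $\mathcal{S}(X,f,\varepsilon) \subseteq \textsf{iVR}^{d_f}_{2\varepsilon^*}(X) \subseteq \textsf{iVR}^{d_f}_{2\varepsilon}(X)$, as required.

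The main obstacle is really just the bookkeeping step $\varepsilon^* \leq \varepsilon$: the parameter $\varepsilon^*$ is internal to the compression algorithm and not a priori bounded by $\varepsilon$ from its syntactic definition alone, so one must carefully appeal to the index-consistency contract of the enclosing scheme. Everything else is a direct triangle-inequality argument that mirrors Proposition \ref{prop: sol-2epsilon} but now applied in reverse — from a witness summary to a diameter bound — and requires no further machinery.
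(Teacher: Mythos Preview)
Your proof is correct and follows essentially the same route as the paper. The paper's proof simply asserts $\varepsilon^* < \varepsilon$ and then invokes Proposition~\ref{prop: max-VR-threshold}; unwinding that proposition gives exactly the triangle-inequality computation you wrote out directly, so your argument is just the inlined version of theirs.

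One small remark on the step you flagged as the main obstacle: the paper treats $\varepsilon^* < \varepsilon$ as part of the data of a box compression algorithm (it is stated explicitly in the surrounding discussion and in the remark following Theorem~\ref{thm: rect-covering}), rather than deriving it from index consistency as you do. Your derivation via the decoder's information is reasonable but informal; since the paper's own definition does not state the bound crisply, your caution here is well placed, but for the purposes of this proposition you may simply cite it as definitional.
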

\begin{proof}
We know that $\varepsilon^*<\varepsilon$. Thus, by Proposition \ref{prop: max-VR-threshold}, with $\varepsilon^*$ set to $\varepsilon$, we must have $2\varepsilon$ acting as the upper bound on all Indexed Vietoris-Rips thresholds determined by all $\varepsilon^*<\varepsilon$ .
\end{proof}
   Let \begin{equation}\label{eq: df*}
    D(X,f,2\varepsilon)\triangleq\max_{i,j \in \textsf{supp}(X): d_f(X_i,X_j) <  2\varepsilon}d_f(X_i,X_j)
\end{equation} 
be the largest pairwise distance with respect to $d_f$ on $X$ less than $2\varepsilon$.
\begin{observation}\label{obs: dstar-complexity}
    The number $D(X,f,2\varepsilon) \in \mathbb{Q}$ can be computed in time $O(n^2\log(n)+\textsf{size}(f)n^2)$.
\end{observation}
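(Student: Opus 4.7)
The plan is to reduce the computation of $D(X,f,2\varepsilon)$ to (i) evaluating the scalar field $f \circ X$ at every index, then (ii) sweeping over all pairwise energy gaps to find the largest one below the threshold $2\varepsilon$. Concretely, I would first precompute the $n$ scalar values $f(X_i)$ for $i \in \textsf{supp}(X)$ by evaluating the piecewise multivariate polynomial $f$ on each vector $X_i \in \mathbb{R}^D$. By the assumption preceding the definition of piecewise multivariate polynomials (the symbolic string of $f$ can be parsed and evaluated in time proportional to its length, with $\textsf{deg}(f) \leq \textsf{len}(f)$ and $\textsf{size}(f) = 8\,\textsf{len}(f)$), each such evaluation uses $O(\textsf{size}(f))$ arithmetic operations, so this precomputation costs $O(n \cdot \textsf{size}(f))$ in total, which is subsumed by $O(\textsf{size}(f) \cdot n^2)$.

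Next, I would enumerate the $\binom{n}{2}$ unordered pairs $(i,j)$ and, using the cached values $f(X_i)$ and $f(X_j)$, compute $d_f(X_i,X_j) = \lvert f(X_i) - f(X_j)\rvert_2$ in $O(1)$ arithmetic operations per pair. This contributes $O(n^2)$ time and produces a list $L$ of $\binom{n}{2}$ rational pairwise distances.

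Finally, to extract $D(X,f,2\varepsilon)$ from $L$, I would sort $L$ in nondecreasing order using any comparison sort, costing $O(n^2 \log(n^2)) = O(n^2 \log n)$, and then linearly scan (or binary search) for the largest element strictly less than $2\varepsilon$. Since the definition of $D$ in Equation~(\ref{eq: df*}) is a maximum over the (possibly empty) collection of such pairs, the sweep returns the desired value directly, or reports that no admissible pair exists. Summing the three stages yields the advertised $O(n^2 \log n + \textsf{size}(f)\,n^2)$ bound.

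The only subtlety, and the part I would be most careful about, is the arithmetic cost of evaluating $f$ on rational inputs and of comparing rational distances to $2\varepsilon$: under the unit-cost arithmetic model implicit in the paper (all numerical inputs are rational with finite bit-string representations, and $\textsf{size}(\cdot)$ accounts for representation length) these are $O(1)$ per operation, so the accounting above is exact and no deeper argument is needed. There is no real obstacle here beyond book-keeping; the observation is a direct complexity audit of the naive algorithm.
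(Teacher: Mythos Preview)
Your proposal is correct and follows essentially the same approach as the paper's own proof: compute all pairwise $d_f$-distances in $O(\textsf{size}(f)\,n^2)$ time, sort them in $O(n^2\log n)$ time, and extract the largest distance below $2\varepsilon$. Your write-up is simply more explicit about caching the values $f(X_i)$ and about the final scan, but the structure and the accounting are the same.
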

\begin{proof}
    The time complexity is determined by sorting the set of all pairwise distances on $X$ with respect to $d_f$. To determine all pairwise distance requires  $O(\textsf{size}(f)n^2)$ amount of time.
\end{proof}

We have the following guarantee about the solution space for $(k,D)$\textsc{-RectLossyVVFCompression}.
\begin{lemma}\label{lemma: boxcoversum-sol}
For any instance $(X,f,\varepsilon)$ of  $(k,D)$\textsc{-RectLossyVVFCompression}. 
Let: 
\begin{equation}
\varepsilon^{*}:=\frac{D(X,f,2\varepsilon)}{2}
 \end{equation}
 where $D(X,f,2\varepsilon)$ is as in Equation \ref{eq: df*}.
 
The codeword $\mathcal{C}^*(X,f,\varepsilon)$ for the optimal algorithm $\mathcal{C}^*$ to the $(k,D)$\textsc{-RectLossyVVFCompression}  problem has: 
\begin{equation}
    \mathcal{S}^*(X,f,\varepsilon) \subseteq \textsf{iVR}^{d_f}_{2\varepsilon^{*}}(X)
\end{equation}
where $\varepsilon^{*}$ satisfies:
\begin{enumerate}
    \item (Feasibility) $\varepsilon^{*} <\varepsilon$
    \item (Maximal) $\textsf{iVR}^{d_f}_{2\varepsilon^{*}}(X) =\textsf{iVR}^{d_f}_{2\varepsilon}(X)$
    \item (Polynomial Time Computable) $O(n^2\log(n)+\textsf{size}(f)n^2)$
    \item (Finite Bit Representation) $\log(\frac{1}{\varepsilon^{*}})=O(\frac{\textsf{size}(X)}{n})$
\end{enumerate}
\end{lemma}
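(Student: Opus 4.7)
The plan is to verify each of the four listed properties of $\varepsilon^{*}:=D(X,f,2\varepsilon)/2$ in turn, relying on Proposition~\ref{prop: maximum-epsilon}, Proposition~\ref{prop: max-VR-threshold}, Observation~\ref{obs: dstar-complexity}, and the basic fact that the set of pairwise $d_{f}$-distances on the finite database $X$ is finite.

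Feasibility (Property 1) and polynomial-time computability (Property 3) fall out immediately. By construction in Equation~\ref{eq: df*}, the quantity $D(X,f,2\varepsilon)$ is a max over pairs with distance strictly below $2\varepsilon$, so $D(X,f,2\varepsilon)<2\varepsilon$ and hence $\varepsilon^{*}<\varepsilon$. Observation~\ref{obs: dstar-complexity} bounds the time to compute $D(X,f,2\varepsilon)$, and therefore $\varepsilon^{*}$, by $O(n^{2}\log n+\textsf{size}(f)\,n^{2})$. In the degenerate case where no pair of indices realizes a distance strictly less than $2\varepsilon$, I would set $\varepsilon^{*}=0$; then $\textsf{iVR}^{d_f}_{0}(X)$ collapses to singletons and the lemma holds trivially.

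For maximality (Property 2) and the containment $\mathcal{S}^{*}(X,f,\varepsilon)\subseteq\textsf{iVR}^{d_f}_{2\varepsilon^{*}}(X)$, the key observation is that the half-open interval $(2\varepsilon^{*},2\varepsilon)$ contains no pairwise $d_{f}$-distance of $X$, by the extremal definition of $D(X,f,2\varepsilon)$. Consequently, every cluster in $\textsf{iVR}^{d_f}_{2\varepsilon}(X)$ whose diameter is strictly below $2\varepsilon$ already has diameter at most $2\varepsilon^{*}$ and hence belongs to $\textsf{iVR}^{d_f}_{2\varepsilon^{*}}(X)$. The only clusters potentially in $\textsf{iVR}^{d_f}_{2\varepsilon}(X)\setminus\textsf{iVR}^{d_f}_{2\varepsilon^{*}}(X)$ are those of diameter exactly $2\varepsilon$, but such clusters cannot appear in any box compression, because the scheme requires $\varepsilon^{*}<\varepsilon$ strictly and Proposition~\ref{prop: max-VR-threshold} then caps admissible diameters at $2\varepsilon^{*}<2\varepsilon$. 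Combining this filtering with Proposition~\ref{prop: maximum-epsilon} yields both the claimed containment and the equality of the two complexes in the sense relevant to the compression problem.

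For finite bit representation (Property 4), I would track bit-size growth under the piecewise multivariate polynomial $f$. The average rational vector $X_{i}$ uses $\Theta(\textsf{size}(X)/n)$ bits, and the standard bookkeeping for $+$ and $\cdot$ on rationals shows that $f(X_{i})$ admits a rational representation of $O(\textsf{size}(f)\cdot\textsf{size}(X)/n)$ bits, absorbing $\textsf{deg}(f)\leq\textsf{len}(f)$. Since $D(X,f,2\varepsilon)=\lvert f(X_{i})-f(X_{j})\rvert_{2}$ for some realizing pair, $\textsf{size}(D)$ inherits the same bound, giving $\log(1/\varepsilon^{*})=O(\textsf{size}(X)/n)$ once $\textsf{size}(f)$-dependent factors are absorbed as constants.

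\textbf{Main obstacle.} The trickiest step is the maximality claim, because $\textsf{iVR}$ is defined with the non-strict inequality $\leq r$ while every box compression strictly requires $\varepsilon^{*}<\varepsilon$. The argument has to carefully handle clusters realizing the boundary diameter exactly $2\varepsilon$: these do sit in $\textsf{iVR}^{d_f}_{2\varepsilon}(X)$ but are infeasible for any box compression, so they cannot appear in $\mathcal{S}^{*}$ and may be safely discarded when matching the two complexes.
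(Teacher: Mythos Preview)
Your proposal is correct and follows essentially the same route as the paper: feasibility from the strict inequality in Equation~\ref{eq: df*}, computability from Observation~\ref{obs: dstar-complexity}, maximality from the absence of pairwise $d_f$-distances in the gap $(2\varepsilon^{*},2\varepsilon)$, the containment via Proposition~\ref{prop: maximum-epsilon}, and the bit bound from the fact that $\varepsilon^{*}$ is built from finitely many arithmetic operations on entries of $X$. Your handling of the boundary case (clusters of diameter exactly $2\varepsilon$) is in fact more careful than the paper's, which asserts $\textsf{iVR}^{d_f}_{2\varepsilon^{*}}(X)=\textsf{iVR}^{d_f}_{2\varepsilon}(X)$ directly without isolating that edge case; your observation that such clusters are infeasible for any box compression (since the scheme forces a strict $\varepsilon^{*}<\varepsilon$) is the right way to close the gap.
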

\begin{proof}
Using Proposition \ref{prop: sol-2epsilon} and Proposition \ref{prop: max-VR-threshold} on ${\varepsilon}^*>0: \varepsilon^*< \varepsilon<1$, any box cover from $\textsf{iVR}^{d_f}_{2\varepsilon^*}(X)$ has each $R \in \textsf{iVR}^{d_f}_{2\varepsilon^*}(X)$ summarizable with the number $\textbf{\textsf{mid-range}}(f,R)$ so that:
    \begin{equation}
        \lvert f(X_i)-\textbf{\textsf{mid-range}}(f,R)\rvert_2 \leq \varepsilon^*<\varepsilon, \forall i \in R
        \end{equation}
    Thus, we can let our data structure be:
    \begin{equation}
         (\{(R,\textbf{\textsf{mid-range}}(f,R)): R \in \mathcal{S}(X,f,\varepsilon)\}, (\varepsilon^*,\varepsilon,f))
    \end{equation}
    By monotonicity of $\textsf{iVR}^{d_f}_{\bullet}(X)$, we must have:
    \begin{equation}
        \textsf{iVR}^{d_f}_{2\varepsilon^*}(X) \subseteq \textsf{iVR}^{d_f}_{2\varepsilon}(X), \forall \varepsilon^*: \varepsilon^*< \varepsilon<1
    \end{equation}
 In order to find $\varepsilon^{*}: \varepsilon^{*}<\varepsilon<1$, so that the converse is true, we must utilize the fact that there are finitely many pairwise distances on $X$.
 
 We know that $d_f^*(X,d_f,2\varepsilon) <2\varepsilon$. This proves Feasibility, Property 1. 
 
 We know that for
 \begin{equation}
     r \in [D(X,f,\varepsilon),2\varepsilon]
 \end{equation}
 we have:\begin{equation}\label{eq: equal-VR}
     \textsf{iVR}_r^{d_f}(X)=\textsf{iVR}_{2\varepsilon}^{d_f}(X)
 \end{equation}
 which follows by definition of an Indexed Vietoris-Rips Complex.

 The Maximal property, property 2,  follows since $r$ can be set to its lower bound.

 The Time complexity property, property 3, to find $\varepsilon^{*}$ follows by  Observation \ref{obs: dstar-complexity}.

The finite bit representation property, property 4, follows since $\varepsilon^{*}$ is a composition of a  constant number of algebraic operations on the vectors from $X_{\bullet}$.
\end{proof}
According to Lemma \ref{lemma: boxcoversum-sol}, a feasible $\varepsilon^{*}$ for a \textbf{box cover with box summaries} data structure can be found in polynomial time. Thus, we can always compress the largest possible Indexed Vietoris-Rips Complex $\textsf{iVR}^{d_f}_{2\varepsilon^{*}}(X)=\textsf{iVR}^{d_f}_{2\varepsilon}(X)$ according to Proposition \ref{prop: maximum-epsilon}. 
\subsection{A Decompression Algorithm}
We show in the following that because $\mathcal{C}(X,f,\varepsilon)$ is a box cover with box summaries the decompression algorithm can run in polynomial time. By showing this, we show that the only \textbf{obstruction} left to efficiently solving the $(k,D)$\textsc{-RectLossyVVFCompression} problem is in finding an optimal box cover.

We design a decompression algorithm in Algorithm \ref{alg: decompress}. It takes as input a \textbf{box cover with box summaries} data structure and computes a $(k,D)$-dimensional voxelized vector field $\hat{X}$. Its output $\hat{X}$ satisfies the inequality constraint of the $(k,D)$\textsc{-RectLossyVVFCompression}  problem. 
\begin{algorithm}[!h]
\SetAlgoLined
\SetKwComment{Comment}{/* }{ */}
\caption{A \textcolor{orange}{Decompression} Algorithm }\label{alg: decompress}
\KwData{Let the \textbf{box cover with box summaries} data structure $(\{(R,\hat{f}_R): R \in \mathcal{S}(X,f,\varepsilon)\}, (\varepsilon^*,\varepsilon,f))$, with $\mathcal{S}(X,f,\varepsilon)$ a box cover of $\textsf{supp}(X)$ be the output of a box compression algorithm on instance $(X,f,\varepsilon)$ of $(k,D)$\textsc{-RectLossyVVFCompression} with a piecewise $f$ on $f_j$ and error bound $\varepsilon$. } 
\KwResult{$\hat{X}$ a $(k,D)$-dimensional voxelized vector field with $\hat{X}_i \in \mathbb{R}^D, i \in \textsf{supp}(X)$}
\For{$R \in \mathcal{R}$}{
Run the algorithm $\mathcal{A}$ from  \cite{grigor1988solving} on the polynomial inequality (rescale coefficients to the integers if needed):
\begin{equation}
    x \in \tilde{\mathbb{Q}}^D : (f_j(x)-\hat{f}_R)^2 \leq {(\varepsilon-\varepsilon^*)^2}, \text{ where }f_j \text{ is defined on }\hat{f}_R
\end{equation}
$\tau \gets \text{Representative Set Output by }\mathcal{A}$ \\
\For{$i \in R$}{
$\hat{X}_i\gets \text{Some  vector }\hat{x} \in \tau$ \Comment*[r]{Entries of $\hat{x}$ are Algebraic Numbers from $\tilde{\mathbb{Q}}^D$. }
}
}
\KwRet $\hat{X}$
\end{algorithm}
We claim that for a \textbf{box cover with box summaries} data structure as given in Lemma  \ref{lemma: boxcoversum-sol}, Algorithm \ref{alg: decompress} is a polynomial time decompression algorithm for some clustered indexed compression scheme for the $(k,D)$-\textsc{RectLossyVVFCompression} problem. 
\begin{theorem}\label{thm: rect-covering}
    Let $(X,f,\varepsilon)$ be an instance of $(k,D)$\textsc{-RectLossyVVFCompression} and let 
    \begin{equation}
        \mathcal{C}(X,f,\varepsilon)=(\{(R,\hat{f}_R): R \in \mathcal{S}(X,f,\varepsilon)\}, (\varepsilon^*,\varepsilon,f))
    \end{equation}
    be some feasible solution.
    
    Then there is a decompression algorithm $\mathcal{D}$ running in time:
    \begin{equation}
       O(\lvert \mathcal{S}(X,f,\varepsilon)\rvert (\textsf{size}(f)\log(\frac{1}{\varepsilon^*}))^{O(D^2)}+\sum_{R \in \mathcal{S}(X,f,\varepsilon)}\lvert R\rvert  )
    \end{equation}
    so that:
    \begin{equation}
    \lvert f(X_i)-f((\mathcal{D}\circ \mathcal{C}(X,f,\varepsilon))_i)\rvert_2 \leq \varepsilon, \forall i \in \textsf{supp}(X)
\end{equation}
\end{theorem}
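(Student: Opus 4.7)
The plan is to verify that Algorithm \ref{alg: decompress} produces a feasible reconstruction and to bound its running time by invoking Grigoriev's quantifier elimination method for polynomial inequalities. The algorithm is per-box, so I would first analyze what happens on a single $R \in \mathcal{S}(X,f,\varepsilon)$ and then sum.

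For feasibility, I would argue as follows. By the definition of a box compression algorithm, the summary $\hat{f}_R$ satisfies $\lvert f(X_i)-\hat{f}_R\rvert_2 \leq \varepsilon^*$ for every $i \in R$, and by construction $\varepsilon^*<\varepsilon$. The inequality solved in the loop is $(f_j(x)-\hat{f}_R)^2 \leq (\varepsilon-\varepsilon^*)^2$, so any algebraic point $\hat{x}$ returned by Grigoriev's representative set $\tau$ satisfies $\lvert f_j(\hat{x})-\hat{f}_R\rvert_2 \leq \varepsilon-\varepsilon^*$. The triangle inequality then yields $\lvert f(X_i)-f(\hat{X}_i)\rvert_2 \leq \varepsilon^*+(\varepsilon-\varepsilon^*)=\varepsilon$, which is exactly the distortion bound claimed. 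I would also verify non-emptiness of the semialgebraic set so that the algorithm never fails: since $\varepsilon-\varepsilon^*>0$, the closed sublevel set in $\mathbb{R}^D$ contains at least the real points $x$ for which $f_j(x)=\hat{f}_R$, which exist because $\hat{f}_R$ (for instance the \textsf{mid-range} summary of Proposition \ref{prop: sol-2epsilon}) lies between two attained values $f(X_i),f(X_j)$ and $f_j$ is a continuous polynomial; with strictly positive slack the set has non-empty interior and hence contains algebraic points, which Grigoriev's algorithm is guaranteed to sample.

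The main work is the time bound. I would invoke the main theorem of \cite{grigor1988solving}, which on a polynomial inequality in $D$ variables of total degree $d$ with integer coefficients of bit length at most $L$ produces a representative set of algebraic solutions in time $(Ld)^{O(D^2)}$. In our instance the polynomial is $(f_j(x)-\hat{f}_R)^2-(\varepsilon-\varepsilon^*)^2$, whose degree is at most $2\,\textsf{deg}(f_j)\leq 2\,\textsf{len}(f)=O(\textsf{size}(f))$; after clearing denominators of the rational constants $\hat{f}_R,\varepsilon,\varepsilon^*$, the coefficient bit length is $O(\textsf{size}(f)+\log(1/\varepsilon^*))$ using the hypothesis that $\varepsilon<1$, so $\log(1/\varepsilon)=O(\log(1/\varepsilon^*))$. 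This gives the per-box cost $(\textsf{size}(f)\log(1/\varepsilon^*))^{O(D^2)}$; multiplying by $\lvert\mathcal{S}(X,f,\varepsilon)\rvert$ accounts for the outer loop, and the inner assignment loop contributes the additive $\sum_{R}\lvert R\rvert$ term. The main obstacle is bookkeeping the parameters: tracking how $\textsf{size}(f)$ absorbs both degree and coefficient length, and verifying that substituting the rational summary $\hat{f}_R$ and the $\varepsilon-\varepsilon^*$ bound does not blow up the coefficient bit length beyond $\textsf{size}(f)\log(1/\varepsilon^*)$ before the $(D^2)$-exponent from \cite{grigor1988solving} kicks in.
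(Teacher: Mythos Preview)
Your proposal is correct and follows essentially the same approach as the paper: correctness via the triangle inequality $\lvert f(X_i)-f(\hat{X}_i)\rvert_2\leq \lvert f(X_i)-\hat f_R\rvert_2+\lvert \hat f_R-f(\hat X_i)\rvert_2\leq \varepsilon^*+(\varepsilon-\varepsilon^*)$, and complexity by invoking Grigoriev's bound per box and summing. You are in fact more careful than the paper on two points it glosses over---the non-emptiness of the semialgebraic set and the explicit degree/coefficient-length bookkeeping feeding into the $(Ld)^{O(D^2)}$ bound---but the underlying argument is the same.
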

\begin{proof}
The decompression algorithm is given in Algorithm \ref{alg: decompress}. 

\textbf{Correctness: }

According to Algorithm \ref{alg: decompress}, the output $\hat{X}=\mathcal{D}\circ \mathcal{C}(X,f,\varepsilon)$ must satisfy for every $R \in \mathcal{R}$:
\begin{equation}
    (f(\hat{X}_i)-\hat{f}_R)^2 \leq {(\varepsilon-\varepsilon^*)^2}, \forall i \in R
\end{equation}
Taking square root on both sides, this is equivalent to:
\begin{equation}
    \lvert f(\hat{X}_i)-\hat{f}_R\rvert_2 \leq \varepsilon-\varepsilon^*, \forall i \in R
\end{equation}
By triangle inequality, we have the following bound:
\begin{equation}
    \lvert f(\hat{X}_i)-{f}(X_i)\rvert_2\leq \lvert f(\hat{X}_i)-\hat{f}_R\rvert_2 +\lvert \hat{f}_R-{f}(X_i)\rvert_2\leq {\varepsilon}, \forall i \in R
\end{equation}
Since $R \in \mathcal{R}$ was arbitrary and \begin{equation}
    \bigcup_{R \in \mathcal{R}}R=\textsf{supp}(X)
\end{equation}
every $i \in \textsf{supp}(X)$ satisfies this inequality.

This proves the desired lossy constraint on $R \in \mathcal{R}$.

\textbf{Complexity: } 

The asymptotic runtime of the algorithm in \cite{grigor1988solving} is: 
\begin{equation}
     O( \textsf{size}(f)(\log(\frac{1}{\varepsilon^*}))^{O(D^2)})
\end{equation}
Since the algorithm is run $\lvert \mathcal{S}(X,f,\varepsilon)\rvert$  times and it takes $m = O(\textsf{size}(f))$ time to find which $f_j$ of $f$ is defined on $\hat{f}_R \in \mathbb{R}$, we get the first term of the desired asymptotic complexity for Algorithm \ref{alg: decompress}. The second term comes from the number of assignments of a vector to an index. This costs the aggregation of the number of indices from all rectangles from $\mathcal{S}(X,f,\varepsilon)$.
\end{proof}
Of course, as shown in Lemma \ref{lemma: boxcoversum-sol}, the compressed data structure can represent $\textsf{iVR}^{d_f}_{2\varepsilon}(X)$. Thus, Algorithm \ref{alg: decompress} can decompress any feasible \textbf{box cover with box summaries} data structure.
\begin{remark}
    The requirement of:
\begin{equation}
        \lvert f(X_i)-\bullet_R\rvert_2 \leq {\varepsilon^*}<\varepsilon, \forall i \in R
\end{equation}
    is sufficient to have an inequality constraint in the decompression algorithm. 
    
    If the upper bound of ${\varepsilon^*}$ were replaced by $\varepsilon$, then this would require the decompression algorithm to solve for a root of a multivariate polynomial exactly. This is known to be intractable.
    
    In the Blum-Shub-Smale model~\cite{blum1989theory}, which is a computational model for real numbers, solving for a root exactly is undecidable~\cite{calvert2011noncomputable}. Grobner bases~\cite{10.1145/1088216.1088219} can also be used to solve for the roots of a multivariate polynomial. This computation, however, is asymptotically exponential in the number of variables. 
\end{remark}
\section{$(k,D)$\textsc{-RectLossyVVFCompression} is NP-Hard}\label{sec: np-hard}
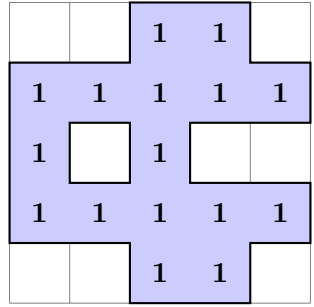
\begin{figure}[!h]
    \centering
\begin{tikzpicture}[scale=0.8]
\def\xmax{5}
\def\ymax{5}

\draw[step=1cm,gray,very thin] (0,0) grid (\xmax,\ymax);

\begin{scope}
\clip
  (0,1) -- (2,1) -- (2,0) -- (4,0) -- (4,1) -- (5,1) -- (5,2) -- (3,2) -- (3,3) -- (5,3) -- (5,4) -- (4,4) -- (4,5) -- (2,5) -- (2,4) -- (0,4) -- cycle;

\foreach \x in {0,...,\numexpr\xmax-1}
  \foreach \y in {0,...,\numexpr\ymax-1} {
  \pgfmathparse{(\x!=1 || \y!=2)} 
  \ifnum\pgfmathresult=1
    \fill[blue!20] (\x,\y) rectangle ++(1,1);
    \node at (\x+0.5,\y+0.5) {\textbf{1}};
    \fi
  }
\end{scope}

\draw[thick]
  (0,1) -- (2,1) -- (2,0) -- (4,0) -- (4,1) -- (5,1) -- (5,2) -- (3,2) -- (3,3) -- (5,3) -- (5,4) -- (4,4) -- (4,5) -- (2,5) -- (2,4) -- (0,4) -- cycle;
\draw[thick]
  (1,2) -- (2,2) -- (2,3) -- (1,3) -- cycle;

\end{tikzpicture}
\caption{A $0$-$1$ matrix instance of the \textsc{RectiLinearImgCompression} Problem. This problem is NP-Complete.}\label{fig: rectlinearimg}
\end{figure}
We show that the $(k,D)$\textsc{-RectLossyVVFCompression} problem is NP-Hard.  This means that finding an exact globally optimal solution would not be a tractable endeavour. 

We will show this by a Karp reduction from the \textsc{RectiLinearImgCompression} Problem \cite{garey2002computers}. In this problem, a matrix of zeros and ones is provided and in the optimization version the objective is to cover it with a minimum number of rectangular submatrix blocks. An illustration is shown in Figure \ref{fig: rectlinearimg}. 
\begin{theorem}\label{thm: np-hard-wproof}
    The optimization version of $(k,D)$\textsc{-RectLossyVVFCompression} is NP-hard if $k\geq 2, D\geq 1$.
\end{theorem}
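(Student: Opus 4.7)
The plan is to give a polynomial-time Karp reduction from the NP-complete \textsc{RectiLinearImgCompression} problem \cite{garey2002computers}, illustrated in Figure \ref{fig: rectlinearimg}, to the decision version of $(k,D)\textsc{-RectLossyVVFCompression}$. I will construct the reduction for $(k,D) = (2,1)$ and then indicate the trivial embedding into larger $k$ or $D$.

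Given an instance $(M,K)$ of \textsc{RectiLinearImgCompression} with $M$ an $n_1 \times n_2$ 0-1 matrix, I define a $(2,1)$-voxelized vector field $X$ on the voxel grid $[n_1] \times [n_2]$ by setting $X_{(i,j)} = 0$ if $M_{ij} = 1$, and $X_{(i,j)} = 2 + 2(i \cdot n_2 + j)$ if $M_{ij} = 0$; I take $f(x) = x$ and $\varepsilon = 1/2$. By design all 0-cell energies are pairwise separated and separated from the 1-cell energy by at least $2$, while $2\varepsilon = 1$. Writing $m_0$ for the number of 0-cells in $M$, the decision threshold $K'$ is set to equal the encoding size, determined via Proposition \ref{prop: size-boxcover-compression}, of a cover with exactly $m_0 + K$ rectangles. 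All of this is computable in polynomial time in the size of $(M,K)$.

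The crux of the correctness argument is the claim that in any feasible box cover with box summaries $\mathcal{S}(X,f,\varepsilon)$, every box containing a 0-cell must be a singleton containing only that 0-cell, and every other box contains only 1-cells. By Proposition \ref{prop: max-VR-threshold}, any box must have $d_f$-diameter at most $2\varepsilon^{*} \leq 2\varepsilon = 1 < 2$, while the construction forces the $d_f$-distance between any two distinct cells in which at least one is a 0-cell to be at least $2$. Consequently, $|\mathcal{S}(X,f,\varepsilon)| = m_0 + r(M,\mathcal{S})$, where $r(M,\mathcal{S})$ is the number of rectangles the cover uses on the 1-region of $M$; conversely, any rectangle cover of the 1-region yields a feasible $\mathcal{S}$ by adjoining singleton boxes on all 0-cells. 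Since Proposition \ref{prop: size-boxcover-compression} makes $\textsf{size}(\mathcal{C}(X,f,\varepsilon))$ an affine function of $|\mathcal{S}(X,f,\varepsilon)|$ with input-dependent but cover-independent coefficients, maximizing the compression ratio is equivalent to minimizing $|\mathcal{S}(X,f,\varepsilon)|$. This delivers the yes/no equivalence: the 1-region of $M$ admits a rectangle cover of cardinality at most $K$ if and only if $\textsf{CR}(\mathcal{C}(X,f,\varepsilon), X) \geq \textsf{size}(X)/K'$.

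For $k > 2$ or $D > 1$, I embed the above $(2,1)$-construction into the voxel grid $[n_1] \times [n_2] \times [1]^{k-2}$, pad each vector to length $D$ with trailing zeros, and use $f(x_1, \ldots, x_D) = x_1$; the energy-separation analysis and reduction transfer verbatim. The main obstacle is ensuring that the separation argument is airtight against the summary slack $\varepsilon^{*} < \varepsilon$ present in any feasible box compression algorithm, which is why the construction uses a gap of $2$ strictly larger than the largest feasible $2\varepsilon = 1$; this resolves the issue since every feasible $\varepsilon^{*}$ satisfies $2\varepsilon^{*} < 2$.
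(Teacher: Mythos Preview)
Your proposal is correct and follows essentially the same approach as the paper: a Karp reduction from \textsc{RectiLinearImgCompression} to $(2,1)$\textsc{-RectLossyVVFCompression} by making the $1$-region share a common energy value and giving each $0$-cell a distinct, well-separated energy so that feasible boxes decompose into singletons on the $0$-cells plus a genuine rectangle cover of the $1$-region, then lifting to general $(k,D)$ by padding. The only notable difference is cosmetic: you encode the separation directly into the vector values and take $f=\mathrm{id}$, whereas the paper keeps $X_i=M_i\in\{0,1\}$ and pushes the separation into a piecewise $f$; your version is arguably cleaner since your $f$ manifestly depends only on the vector and not on the index.
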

\begin{proof}
We reduce the decision version of $(2,1)$\textsc{-RectLossyVVFCompression} from the decision version of the Rectilinear Image Compression Problem
(\textsc{RectImgCompression}), which is NP-Complete~\cite{garey2002computers}. 
This shows that the optimization version of $(2,1)$\textsc{-RectLossyVVFCompression} is NP-Hard. by Proposition \ref{prop: k'kD'Dsubprob}, $(k,D)$\textsc{-RectLossyVVFCompression} must be NP-Hard.

Given a matrix $M \in (\mathbb{Z}_2)^{m\times m}$ of $n=m^2$ pixels where we denote $M_i$ as a pixel at some index on $M$ and an integer $K' \leq n$. Let $\textsf{nnz}(M)=\lvert \{ i \in [m]^2: M_i\neq 0\}\rvert $. 

We transform $(M,K')$, an instance of the decision version of \textsc{RectImgCompression}, into an instance of the decision version of  $(2,1)$\textsc{-RectLossyVVFCompression}:

Let $\varepsilon=0.1$ and for a constant $r \in \mathbb{N}$:
\begin{equation}
    K= (K'+(n-\textsf{nnz}(M)))(10r
    +4\textsf{size}(n)+2v(\textsf{size}(f),n,\textsf{size}(X)))+5r+\textsf{size}(\varepsilon^*)+\textsf{size}(\varepsilon)+\textsf{size}(f)
\end{equation}
Since matrices are $(2,1)$-dimensional vector fields over $[m]^2$, we can let $X: [m]^2 \rightarrow \{0,1\}$ be defined as:
\begin{equation}
    X_i=M_i, \forall i \in [m]^2
\end{equation}
where the number of bits to represent $X_i$ is exactly $1$.

Let the objective function $f: \mathbb{R}^1 \rightarrow \mathbb{R}$ be defined as follows: 
\begin{equation}
    f(X_i)=\begin{cases}
        1 & \text{ if $X_i=1$}\\
        -\eta(i) & \text{ otherwise}
    \end{cases}, \forall i \in [m]^2
\end{equation} 
The function $\eta: [m]^2\rightarrow \mathbb{N}$ is defined from the voxel indices to an integer enumeration, or primary key, of the voxel vector field of $X$. This enumeration can be a row-major order, for example. 
\begin{equation}
    \eta((j,k))=mj+k \in \mathbb{Z}, \forall (j,k) \in [m]^2
\end{equation}
The function $f$ is $1$ on a general orthogonal polygon. This is because the nonzeros of $M$ form this general orthogonal polygon  over its indices. For the complement, the total order provided by $\eta$ separates the indices of the complement from each other by atleast $1$ as well as from $M=1$ by atleast $1$. 

\textbf{The box cover of $M$ of size $\leq K'$ $\Leftrightarrow$ there is a box cover with box summaries data structure of size $\leq K$ for $(X,f,\varepsilon=0.1)$:  }

$(\Rightarrow):$  
For the constraint $\lvert f(X_i)-f(X_j) \rvert_2 \leq 0.1$, the $(2,1)$-dimensional voxelized vector field separates into the general orthogonal polyhedron of $f=1$ and a disconnected set of indices from its exterior: $f\neq 1$. 

Thus, an clustered indexed algorithmic compression scheme $\mathcal{M}=(\mathcal{C},\mathcal{D})$ of $(X,f,\varepsilon=0.1)$ must have its box cover $\mathcal{S}(X,f,\varepsilon)$ be a union of a \textbf{box cover}  for the indices of $f=1$ and a box cover of all the individual indices of $f\neq 1$. This means:
\begin{equation}
    \lvert \mathcal{S}(X,f,\varepsilon) \rvert =\lvert \mathcal{S}'(M)\rvert +(n-\textsf{nnz}(M))
\end{equation}

Let  $\mathcal{S}'(M)$ be a box cover of $M$ with $\lvert \mathcal{S}'(M)\rvert \leq K'$. The box cover $\mathcal{S}'(M)$ covers the indices of $f=1$. These indices are the support of $M$. Thus, using the definition of $\mathcal{C}(X,f,\varepsilon)$ and $K$, we obtain the following inequality:
\begin{subequations}
\begin{equation}
    \textsf{size}(\mathcal{C}(X,f,\varepsilon))= 
    \end{equation}
    \begin{equation}
    \lvert \mathcal{S}(X,f,\varepsilon)\rvert(10r+4\textsf{size}(n)+2v(\textsf{size}(f),n,\textsf{size}(X)))+5r+\textsf{size}(\varepsilon^*)+\textsf{size}(\varepsilon)+\textsf{size}(f)
    \end{equation}
    \begin{equation}
        = \lvert \mathcal{S}'(M)+(n-\textsf{nnz}(M))\rvert(10r+4\textsf{size}(n)+2v(\textsf{size}(f),n,\textsf{size}(X)))+5r+\textsf{size}(\varepsilon^*)+\textsf{size}(\varepsilon)+\textsf{size}(f)
    \end{equation}
    \begin{equation}
    \begin{split}
    \leq  (K'+(n-\textsf{nnz}(X)))(10r+4\textsf{size}(n)+2v(\textsf{size}(f),n,\textsf{size}(X)))+5r\\+\textsf{size}(\varepsilon^*)+\textsf{size}(\varepsilon)+\textsf{size}(f)=K
    \end{split}
\end{equation}
\end{subequations}
Using the definition of the size of $\mathcal{C}(X,f,\varepsilon)$, the following inequality is now true:
\begin{equation}
    \textsf{CR}(\mathcal{C}(X,f,\varepsilon),X)= \frac{\textsf{size}(X)}{\textsf{size}(\mathcal{C}(X,f,\varepsilon))}
\geq \frac{\textsf{size}(X)}{K}
\end{equation}
$(\Leftarrow): $
Conversely, if \begin{equation}
    \textsf{CR}(\mathcal{C}(X,f,\varepsilon),X) = \frac{\textsf{size}(X)}{\textsf{size}(\mathcal{C}(X,f,\varepsilon))}\geq \frac{\textsf{size}(X)}{K},
\end{equation}
then using the definition of $K$ and $\textsf{size}(\mathcal{C}(X,f,\varepsilon)) $, we obtain the following inequality:
\begin{subequations}
    \begin{equation}
    \begin{split}
    \lvert \mathcal{S}(X,f,\varepsilon)\rvert(10r+4\textsf{size}(n)+2v(\textsf{size}(f),n,\textsf{size}(X)))+5r\\+\textsf{size}(\varepsilon^*)+\textsf{size}(\varepsilon)+\textsf{size}(f)
    \end{split}
    \end{equation}
    \begin{equation}
        \begin{split}
    =\lvert \mathcal{S}'(M)+(n-\textsf{nnz}(M))\rvert (10r+4\textsf{size}(n)+2v(\textsf{size}(f),n,\textsf{size}(X)))+5r\\+\textsf{size}(\varepsilon^*)+\textsf{size}(\varepsilon)+\textsf{size}(f)
    \end{split}
    \end{equation}
    \begin{equation}
        =\textsf{size}(\mathcal{C}(X,f,\varepsilon)) 
    \end{equation}
    \begin{equation}
    \begin{split}
    \leq K= (K'+(n-\textsf{nnz}(M)))(10r
    +4\textsf{size}(n)+2v(\textsf{size}(f),n,\textsf{size}(X)))\\+5r+\textsf{size}(\varepsilon^*)+\textsf{size}(\varepsilon)+\textsf{size}(f)
    \end{split}
\end{equation}
\end{subequations}
Then, we must have that:
\begin{equation}
    \lvert \mathcal{S}'(M)\rvert \leq K'
\end{equation}
Because the sets of voxel indices $X=1$ and $X\neq 1$ are disjoint  and because $\{i: X_i=1\}=\{i: M_i=1\}$,  we must have that the box cover of $M$ must satisfy: \begin{equation}
\mathcal{S}'(M)=\{R \in \mathcal{S}(X,f,\varepsilon): (f\circ X)\mid_{R}=1\}
\end{equation} 

This proves that the decision problem \textsc{RectImgCompression} polynomial time reduces to the decision problem $(2,1)$\textsc{-RectLossyVVFCompression} by a Karp reduction~\cite{garey2002computers}. 
\end{proof}
\section{$(k,D)$\textsc{-RectLossyVVFCompression} is APX-Hard}\label{sec: apx-hard}
In order to prove that the  $(k,D)$\textsc{-RectLossyVVFCompression} problem is APX-Hard, we need to find a PTAS reduction from an APX-Hard problem. A potential candidate problem is the \textbf{rectangle cover problem}, also known as the ``geometric set cover problem by rectangles" in the computational geometry literature~\cite{haussler1986epsilon}.

In the \textbf{rectangle cover problem}, the instances are \emph{range spaces}, which are pairs consisting of a set $X$ and a collection of subsets of $X$. The range space for the \textbf{rectangle cover problem} is a set of points on the plane and a collection of rectangles on the plane.  The goal is to find the minimum cardinality rectangle cover of the points. It is known to be APX-Hard~\cite{chan2014exact}. 

In fact, this is APX-Hard \emph{even when the collection of rectangles on $\mathbb{R}^2$ only intersect either emptily or at their boundaries at exactly $4$ points}. We call this problem the $2$-\textsc{DimSpecialBoxCover} problem.  We would like, however, an even more simplified version of the problem to fully exploit planarity. 

We can start from a combinatorial set system problem, unrelated to geometry, known to be APX-Hard~\cite{chan2014exact}. It is called the \textsc{Special-3SC} problem. It is a special case of the minimum vertex cover problem on graphs of degree at most three and can be reduced from the minimum vertex cover problem on degree $3$ regular graphs ($3$\textsc{VC}). The $3$\textsc{VC} problem is known to be APX-Hard~\cite{chan2014exact}.

We define this combinatorial set system problem here:
\begin{tcolorbox}
\begin{problem}
(Special 3SC Problem): 
        (\textsc{Special-3SC})\\
\textbf{Input: } A pair $(U,S)$ consisting of: 
\begin{enumerate}
     \item A set $U= A\cup B, $ where: 
        \begin{enumerate}
            \item $A=\{a_1,...,a_n\}$\item $B= W \cup X \cup Y\cup Z$ with: \\$W=\{w_1,...,w_m\}$
            , $X=\{x_1,...,x_m\}$
            , $Y=\{y_1,...,y_m\}$
            , $Z=\{z_1,...,z_m\}$
            \item $2n=3m$
        \end{enumerate}
        \item A set system $S$ of size $5m$ so that:
        \begin{enumerate}
            \item $S=\bigcup_{t \in [m]} S_t$ s.t. 
        \begin{equation}
           S_t \triangleq \{\{a_i, w_t\}, \{w_t, x_t\}, \{a_j , x_t, y_t\}, \{y_t, z_t\},  \{a_k, z_t\}: \exists i,j,k:  1\leq i<j<k\leq n \}
        \end{equation}
        \item $\forall i \in [n]$, $a_i \in A$\text{ belongs to exactly two sets $s_1,s_2$ in $S$}. The pair of sets are totally ordered as $s_1$ before $s_2$.
        \end{enumerate}
        \end{enumerate}
         \item[] \textbf{Output: } A subcollection $T: T \subseteq S$ so that $\lvert T\rvert $ is minimized.
\end{problem}
\end{tcolorbox}
For any instance  $(U,S)$ of the $\textsc{Special-3SC}$ problem. We define a map $\Phi: U \rightarrow \mathbb{R}^2$ that determines the PTAS reduction from \cite{chan2014exact} to the $2$\textsc{-DimSpecialBoxCover}.
\begin{definition}\label{def: phi}
The following map
    \begin{equation}
        \Phi: \bullet_i' \mapsto \bullet_i
    \end{equation}
    for \textsc{Special-$3$SC}$\leq_{PTAS}$$2$\textsc{-DimSpecialBoxCover} 
    satisfies: 
    \begin{enumerate}
    \item $\Phi(A)=\{a_1,...,a_n\}$ is totally ordered on the horizontal line: 
    \begin{enumerate}
        \item  $(a_i)_2=(a_j)_2=1, \forall i,j \in [n]$
        \item \begin{equation}
                (a_1)_1<(a_2)_1<\cdots <(a_n)_1
            \end{equation}
    \end{enumerate}  
            \item $\Phi(B)=\{w_1,...,w_m, x_1,...,x_m, y_1,...,y_m, z_1,...,z_m\}$ is totally ordered in the second dimension:
            \begin{equation}
                (w_1)_2<\cdots <(w_t)_2<(x_t)_2<(y_t)_2<(z_t)_2<(w_{t+1})_2<\cdots<(z_m)_2
            \end{equation}
            \item $\forall i\in [n]$ for the two sets (in order) $s_1,s_2 \in S:  s_1,s_2\ni a_i$ we have that:
            \begin{equation}
                \forall \vardiamond_t \in s_1\setminus \{a_i\}, \forall \varheart_{\tilde{t}} \in s_2\setminus \{a_i\}: (\Phi(\vardiamond_t))_2<(\Phi(\varheart_{\tilde{t}}))_2
            \end{equation}
        \end{enumerate}
\end{definition}
We state a problem that has instances in the form of the output of the PTAS reduction from \textsc{Special-3SC} to instances of the $2$\textsc{-DimSpecialBoxCover}. We call this the \textsc{SpecialRect-3SC} problem.  
\begin{tcolorbox}
\begin{problem}
        (Special Rectangle 3SC Problem): 
        (\textsc{SpecialRect-3SC})
\begin{itemize}
    \item[] \textbf{Input: } 
        A range space $(X,\mathcal{R})$ where:
        \begin{enumerate}
            \item $X=\Phi(U)$ where $\Phi$ is from Definition \ref{def: phi}.
            \item $\mathcal{R}$ consists of $5m$ boxes on $\mathbb{R}^2$ so that:
            \begin{enumerate}
          \item $\forall s \in S$\text{, there is a unique }$R \in \mathcal{R}$\text{ that contains each set of points $\Phi(s)$. }
        \item Any pair of rectangles in $\mathcal{R}$ intersect at either $4$ or $0$ points between their boundaries.
        \end{enumerate}
        \end{enumerate}
        for some instance $(U,S)$ of \textsc{Special-$3$SC}.
    \item[] \textbf{Output: } A minimum cardinality subcollection $\mathcal{S}$ of $\mathcal{R}$ that covers $X$.
    \end{itemize}
\end{problem}
\end{tcolorbox}
We will see in Section \ref{sec: embed-2N} that because the $\textsc{SpecialRect-$3$SC}$ problem is a strict subproblem of $2$\textsc{-DimBoxCover}, that \textsc{SpecialRect-3SC} is more amenable for our proof of the APX-Hardness of $(k,D)$\textsc{-RectLossyVVFCompression} compared to directly reducing from $2$\textsc{-DimBoxCover}. 

\textbf{Proof Outline: }


To prove the APX-Hardness result, we will introduce an intermediary problem called the $k$-Dimensional Integer Coordinate Voxel Grid Box Cover (\textsc{$k$-DimIntVGridBoxCover}) problem. We state it as an optimization problem in the following:
\begin{tcolorbox}
\begin{problem}
    ($k$-Dimensional Integer Coordinate Voxel Grid Box Cover Problem):
    
    (\textsc{$k$-DimIntVGridBoxCover}) 
\begin{itemize}
    \item[] \textbf{Input: } A range space $([q]^k,\mathcal{R})$ consisting of a set of:
\begin{enumerate}
        \item A voxel grid $[q]^k \subseteq \mathbb{Z}^k$ of $n$ points with $q^k=n$ and
        \item A collection $\mathcal{R}$ of $m$ boxes on $[q]^k$ with $\bigcup_{R \in \mathcal{R}} R \supseteq [q]^k$.
    \end{enumerate}

   \item[]  \textbf{Output: } A minimum cardinality subcollection $\mathcal{S}$ of $\mathcal{R}$ that covers $[q]^k$.
   \end{itemize}
\end{problem}
\end{tcolorbox}
The first step will be to reduce from \textsc{SpecialRect-3SC} to $2$\textsc{-DimIntVGridBoxCover}:
\begin{equation}
\textsc{SpecialRect-3SC}\leq_{\text{PTAS}}\textsc{$2$-DimIntVGridBoxCover}
\end{equation}
This PTAS reduction produces a subproblem of \textsc{$2$-DimIntVGridBoxCover}. We call this subproblem the  
\textsc{VGridSpecialRect-3SC} problem. Later, in Section \ref{sec: dimintvgridboxcover2rectlossyvfcomp}, we find a PTAS reduction from this subproblem to $(k,D)\textsc{-RectLossyVVFCompression}$ for $k,D\geq 2$:
\begin{equation}
   \textsc{VGridSpecialRect-3SC} \leq_{\text{PTAS}} (k,D)\textsc{-RectLossyVVFCompression}
\end{equation}
The composition of these two PTAS reductions provides a PTAS reduction from \textsc{SpecialRect-3SC} to $(k,D)$\textsc{-RectLossyVVFCompression}. Since \textsc{SpecialRect-3SC} is APX-Hard, showing this would be enough to show that $(k,D)$\textsc{-RectLossyVVFCompression} is APX-Hard when $k,D\geq 2$.

To prove the first reduction, we need to design an efficient polynomial time algorithm to embed a range space on the plane to a $2$D voxel grid. We show that this can be done with an efficient polynomial time algorithm. The algorithm has its output size up to a constant multiple of the input size. This will be a sufficient condition for the existence of a PTAS reduction. These are described in Section \ref{sec: embed-2N} and Section \ref{sec: sweep-line}. 

For the second reduction, we convert an instance of \textsc{VGridSpecialRect-3SC} to a \emph{disjoint union}  of orthogonal polygons and rectangles. Picking an error bound of $\varepsilon=0.1$, we can then assign values to each of these disjoint shapes. We can then assign constant values to this disjoint union to form a piecewise multivariate polynomial. This constructs an instance of $(2,2)\textsc{-RectLossyVVFCompression}$. These algorithms all run in polynomial time. We prove that due to the disjointness of the input instance, they do not cause any distortion in the output size. We discuss this in detail in Section \ref{sec: dimintvgridboxcover2rectlossyvfcomp}. By Proposition \ref{prop: k'kD'Dsubprob}, this shows a reduction to $(k,D)$\textsc{-RectLossyVVFCompression} for $k,D \geq 2$.
\subsection{A Collection of Rectangles on  the Plane can be Embedded into $[2N]^2$}\label{sec: embed-2N}
We first make some observations about how large the voxel grid needs to be in terms of the range space size for such an algorithm.

We first analyze the case of one dimension, namely the real line and a collection of intervals.
We consider a mapping from a collection of intervals to the real line. We would like such a mapping to preserve the relative ordering of all the endpoints on the real line. This kind of mapping is monotonic on the set of all endpoints:\begin{equation}
    S\triangleq \bigcup_{[a_i,b_i] \in \mathcal{R}}\{a_i,b_i\}
\end{equation}
We call these interval collection embeddings. This is defined formally as follows:
\begin{definition}
Let $\mathcal{R}=\{[a_i,b_i]: a_i, b_i \in \mathbb{R}\}$ be a collection of $N$ intervals on $\mathbb{R}$. We call a map $\phi: \mathbb{R} \rightarrow \mathbb{R}$ an \textbf{interval collection embedding} of $\mathcal{R}$ if it is monotonic on $S$:
\begin{equation}
    s_1<s_2 \text{ iff }\phi(s_1)< \phi(s_2), \forall s_1,s_2 \in S
\end{equation}
\end{definition}
Certainly an interval collection embedding can map the endpoints of its input intervals to the integers. We show that this can be done in near linear time for $N$ intervals and with all intervals lying in the range $[2N]$.
\begin{algorithm}[!h]
\SetAlgoLined
\SetKwComment{Comment}{/* }{ */}
\setstretch{1.2}
\caption{Integer Embedding of a Collection of Intervals}\label{alg: interval-embedding}
\KwData{$N$ intervals $[a_i,b_i] \subseteq \mathbb{R}, i=1,...,N$ with $\phi(a_i), \phi(b_i) \in \mathbb{Z}$}
\KwResult{An interval collection embedding $\phi: \mathbb{R} \rightarrow [1,2N]$}
\Comment{Form a Partition on the $N$ Intervals.}
$S \gets \bigcup_{[a_i,b_i] \in \mathcal{R}} \{a_i,b_i\}$ \\
$(c_1,...,c_{\lvert S\rvert })\gets \textsf{Sort}(S)$\\
$\phi((1-t)c_j+tc_{j+1})\gets t(j+1)+(1-t)j, \forall t \in [0,1], \forall j \in [\lvert S \rvert-1]$\Comment{Linear Interpolation}
\Return $\phi$
\end{algorithm}
\begin{proposition}\label{prop: interval-embedding}
    Let there be a collection $\mathcal{R}$ of $N$ intervals  on $\mathbb{R}$. Algorithm \ref{alg: interval-embedding} correctly computes an \textbf{interval collection embedding mapping} the endpoints to $[2N]$ and can be computed in time $O(N\log N).$
\end{proposition}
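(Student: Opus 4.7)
The plan is to verify correctness and the time bound separately, both of which reduce to straightforward observations about sorting and indexing. The underlying idea of Algorithm \ref{alg: interval-embedding} is to collect all interval endpoints into a single set $S$ and then to index them in sorted order, thereby compressing the real line onto a dense range of integers while preserving the relative order of those endpoints.

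First I would observe that because each of the $N$ intervals contributes at most two endpoints, $\lvert S\rvert \leq 2N$. After line 2 the sort produces a strictly increasing sequence $c_1<c_2<\cdots<c_{\lvert S\rvert}$, so the linear interpolation rule in line 3 evaluated at $t=0$ gives $\phi(c_j)=j$ for every $j \in [\lvert S\rvert]$. From this it follows immediately that $\phi(S) \subseteq [1,2N]\cap \mathbb{Z}$ and, because $j\mapsto j$ is strictly increasing, that $s_1<s_2 \Leftrightarrow \phi(s_1)<\phi(s_2)$ for all $s_1,s_2 \in S$, which is exactly the interval collection embedding condition. Extending $\phi$ by linear interpolation between consecutive $c_j$'s ensures $\phi$ is defined on all of $\mathbb{R}$ and remains strictly increasing, but this extension plays no role in the monotonicity condition, which is only required on $S$.

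For the complexity bound I would note that assembling $S$ from the $2N$ raw endpoints and removing duplicates can be done in $O(N\log N)$ via sorting (or $O(N)$ expected via hashing), and the \textsf{Sort} call in line 2 costs $O(N\log N)$. The interpolation rule in line 3 need not be evaluated on all of $\mathbb{R}$: the function $\phi$ can be represented implicitly by the sorted list $(c_1,\ldots,c_{\lvert S\rvert})$ together with the index assignment $c_j \mapsto j$, which takes $O(N)$ additional time and space. Summing gives $O(N\log N)$ overall.

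There is no serious obstacle in this argument; the only subtlety is to make sure that coincidences $a_i = b_j$ across distinct intervals do not cause collisions or non-strict behavior, which is automatic because $S$ is formed as a set and thus deduplicated before the sort. I would note this briefly in the write-up and present the remainder as two short verifications of monotonicity and runtime.
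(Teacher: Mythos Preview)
Your proposal is correct and follows essentially the same approach as the paper: both argue that $|S|\le 2N$, that sorting yields $\phi(c_j)=j$ and hence monotonicity on $S$, and that the $O(N\log N)$ sort dominates the runtime while the interpolation is stored symbolically in $O(N)$. Your additional remark about deduplication handling coincident endpoints is a small refinement the paper leaves implicit.
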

\begin{proof}
\textbf{Correctness: }

An interval consists of two endpoints. Since there are only $N$ intervals, there can be at most $2N$ interval endpoints. Thus, there are atmost $2N$ elements in $S$. On each $c_j \in S$, we have that $\phi(c_j)=j$. Furthermore, the parameterized linear interpolation over $c_j \in S$ is well defined. This means the map $\phi: \mathbb{R} \rightarrow [1,2N]$ is well defined. 

Certainly $\phi: \mathbb{R} \rightarrow [1,2N]$ is an interval collection embedding of $\mathcal{R}$ since it is a linear interpolation over the increasing sequence $c_1<c_2<...<c_{\lvert S\rvert}$

\textbf{Complexity: }

The complexity is in the sorting, which is $O(N \log N)$. The interpolation only requires mapping each $c_j$ to $j$ and storing a linear interpolation equation. An interpretable symbolic expression should suffices. This takes time $O(N)$.
\end{proof}
We can generalize the interval collection embedding to a box collection embedding as follows:
\begin{definition}
Let $\mathcal{R}$ be a collection of $N$ boxes in $\mathbb{R}^{k}$. 
A \textbf{box collection embedding} is the map $\phi: \mathbb{R}^{k} \rightarrow \mathbb{R}^{k}$ with:
\begin{equation}
        \phi: (x_1,...,x_k) \mapsto (\phi_j(x_j))_{j=1}^{k}
\end{equation}
 where $\phi_j: \mathbb{R} \rightarrow \mathbb{R}$ are \textbf{interval collection embedding}s. 
\end{definition}
We show in the following lemma that a collection of $N$ boxes from $\mathbb{R}^{k}$ can be embedded in near linear time into  $\mathbb{R}^{k}$.
\begin{lemma}\label{lemma: integer-emb}
   We can embed a collection $\mathcal{R}$ of $N$ boxes in $\mathbb{R}^{k}$ into $\mathbb{Z}^{k}$ in $O(k N \log N)$ time so that all boxes lie in $[2N]^{k}$.
\end{lemma}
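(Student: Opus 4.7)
The plan is to apply the one-dimensional interval embedding of Proposition \ref{prop: interval-embedding} (Algorithm \ref{alg: interval-embedding}) coordinate by coordinate. First, for each axis $j \in [k]$, project the collection $\mathcal{R}$ onto that axis to obtain a collection of $N$ intervals
\begin{equation}
    \mathcal{R}_j \triangleq \{[(c_1(R))_j, (c_2(R))_j] : R \in \mathcal{R}\} \subseteq \mathbb{R},
\end{equation}
where $c_1(R), c_2(R)$ are the two corner points of $R$. Each projection takes $O(N)$ time.

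Next, run Algorithm \ref{alg: interval-embedding} on each $\mathcal{R}_j$ to produce an interval collection embedding $\phi_j: \mathbb{R} \to [1, 2N]$ mapping all endpoints of $\mathcal{R}_j$ into $[2N] \cap \mathbb{Z}$. By Proposition \ref{prop: interval-embedding}, each call is correct and runs in $O(N \log N)$ time, contributing $O(k N \log N)$ total. Assemble the product map
\begin{equation}
    \phi: \mathbb{R}^k \to \mathbb{R}^k, \qquad \phi(x_1, \ldots, x_k) \triangleq (\phi_1(x_1), \ldots, \phi_k(x_k)),
\end{equation}
which is by definition a box collection embedding of $\mathcal{R}$.

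It then remains to verify that $\phi$ sends every $R \in \mathcal{R}$ to a box in $\mathbb{Z}^k$ contained in $[2N]^k$. Since each $R = \prod_{j=1}^k [(c_1(R))_j, (c_2(R))_j]$ and each $\phi_j$ is monotonic on $S_j = \bigcup_{R \in \mathcal{R}}\{(c_1(R))_j, (c_2(R))_j\}$, the image $\phi(R)$ is the product of intervals $\prod_{j=1}^k [\phi_j((c_1(R))_j), \phi_j((c_2(R))_j)]$, whose endpoints are integers in $[1, 2N]$ by Proposition \ref{prop: interval-embedding}. Hence $\phi(R) \subseteq [2N]^k$, and the union of all such boxes lies in $[2N]^k$ as well.

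No step here is a real obstacle: the heavy lifting has already been done in Proposition \ref{prop: interval-embedding}, and the main observation is simply that boxes factor as products of intervals, so a coordinatewise monotone map on endpoint sets automatically respects the box structure and yields the desired integer embedding within the $O(k N \log N)$ time budget.
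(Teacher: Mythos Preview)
Your proposal is correct and follows essentially the same approach as the paper: project each box onto every coordinate axis, apply the one-dimensional interval collection embedding of Proposition~\ref{prop: interval-embedding} independently on each axis, and assemble the resulting maps into a coordinatewise box collection embedding, yielding integer corners in $[2N]^k$ with total cost $O(kN\log N)$. If anything, your write-up is slightly more explicit than the paper's in verifying that the product of monotone endpoint maps preserves the box structure.
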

\begin{proof}
For the collection $\mathcal{R}$ and a dimension $j \in [k]$, let $(R_i)_j$ be the interval $[a_i,b_i] \subseteq \mathbb{R}$ formed by the projection of $R_i \in \mathcal{R}$ onto the $j$-th dimension. Let \begin{equation}
    (\mathcal{R})_j\triangleq \{(R_i)_j, R_i \in \mathcal{R}\}
\end{equation}
be the collection of $N$ intervals. 

According to Proposition \ref{prop: interval-embedding}, we can compute an interval collection embedding $\phi_j: \mathbb{R} \rightarrow \mathbb{R}$ of $(\mathcal{R})_j$. The endpoints of each interval $\phi_j((R_i)_j), R_i \in \mathcal{R}$ are integers according to Proposition \ref{prop: interval-embedding}.

The dimension $j \in [k]$ is arbitrary and independent of each other dimension. Thus, we can repeat the above process for each $j \in [k]$. This forms rectangles in $\mathbb{Z}^{k}$ with all corners belonging to $\mathbb{Z}^{k}$. 

Since the algorithm is run $k$ times to get an algorithm running in time $O(kN\log N)$.
\end{proof}
\subsection{A Sweep-Line Algorithm to Cover the Complement}\label{sec: sweep-line}
As Lemma \ref{lemma: integer-emb} shows, we can embed a range space instance $(X,\mathcal{R})$ of \textsc{SpecialRect-$3$SC} into $[2N]^2$. This, however, is not an instance of $2$\textsc{-DimIntVGridBoxCover}. It is only a cover of the union of all boxes in $\mathcal{R}$. We need a covering of  all of $[2N]^2$. 

We can define the set of integer coordinates that are missing. 
Assuming a collection of boxes $\mathcal{R}$ in a $2$-dimensional voxel grid $[q]^2$. 
Let us define the \textbf{complement} of the union over $\mathcal{R}$:
\begin{equation}
    C\triangleq[q]^2 \setminus \bigcup_{R \in \mathcal{R}} R
\end{equation}
We would like to compute a covering of this complement by boxes. Taking the union of $\mathcal{R}$ and such a covering of the complement would give an instance to $2$\textsc{-DimIntVGridBoxCover}. 

If we can compute a box covering $H$ that satisfies:
\begin{equation}
   \exists K \in \mathbb{R}_{\geq 0}:  \lvert H \rvert \leq K\lvert \mathcal{R}\rvert, 
\end{equation}
then we will show in Theorem \ref{thm: R2ZPTAS-reduction} that the constant $K$ can be used to preserve a PTAS for \\\textsc{SpecialRect-$3$SC} to a PTAS for $2$\textsc{-DimIntVGridBoxCover}.

We show this explicitly in Algorithm \ref{alg: cover-complement}. It is a kind of sweep-line algorithm~\cite{shamos1976geometric} over a $2$D voxel-grid. In such an algorithm, a horizontal ``sweep-line" scans from bottom to top of the $2$D voxel grid. The goal is to record rectangles that fully cover the partial voxel grid \begin{equation}\label{eq: Ck}
    C_k=\{(i,j) \in C: j\leq k\}
\end{equation}
of $C$ scanned up to height $k$ until the entire voxel grid is scanned. 

The algorithm requires sweep-line \textbf{events}.  At each of these \textbf{events}, the algorithm records an update to the collection of rectangles $H$ covering the complement $C$. These events can be represented by key-value pairs where the key determines an integer height at which the sweep-line performs an update and the value is the rectangle which is intersecting with the sweep-line.
\begin{tcolorbox}[enhanced, colback=white]\label{def: sweep-line-events}
We define our sweep-line \textbf{events} as the following key-value pairs:
\begin{enumerate}
    \item \textcolor{purple}{\textbf{(Bottom Event}): } $(\min((R)_2), R)$ 
    \item \textcolor{blue}{\textbf{(Top Event}): }$(\max((R)_2), R)$ 
\end{enumerate}
We will have two events that always occur in our sweep line, the first is an initial starting event and the second is the final ending event:
\begin{enumerate}
    \item $(1,[q] \times 1)$ is a starting \textcolor{blue}{\textbf{Top Event}}
    \item $(q,[q]\times q)$ is an ending \textcolor{purple}{\textbf{Bottom Event}}
\end{enumerate}
These are the bottom and top of the voxel grid and thus are considered as not being a part of the complement $C$.
\end{tcolorbox}
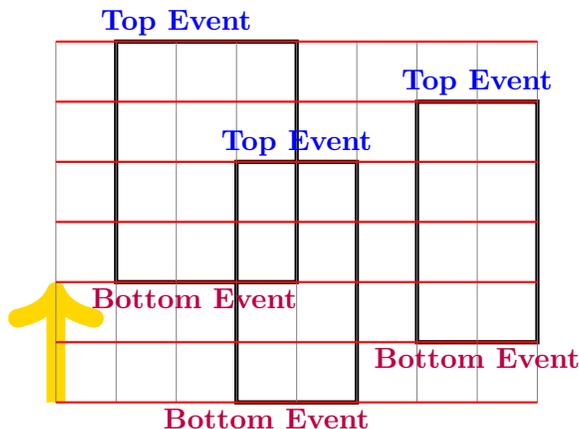
\begin{figure}[!h]
    \centering
\begin{tikzpicture}[scale=0.8]
    \definecolor{gold}{RGB}{255, 215, 0}
    \draw[->, thick, gold, line width=2.5mm] (0, 1) -- (0, 3);
        \draw[ultra thick] (3, 1) rectangle (5, 5);
        \draw[ultra thick] (6, 2) rectangle (8, 6);
        \draw[ultra thick] (1, 3) rectangle (4, 7);

        \draw[step=1cm,gray,very thin] (0,1) grid (8,7);

        \draw[thick, red] (0,1) -- (8,1);
        \draw[thick, red] (0,2) -- (8,2);
        \draw[thick, red] (0,3) -- (8,3);
        \draw[thick, red] (0,4) -- (8,4);
        \draw[thick, red] (0,5) -- (8,5);
        \draw[thick, red] (0,6) -- (8,6);
        \draw[thick, red] (0,7) -- (8,7);
        
        \node[anchor=south] at (3.5, 0.4) {\textcolor{purple}{\textbf{Bottom Event}}};
        \node[anchor=south] at (7, 1.4) {\textcolor{purple}{\textbf{Bottom Event}}};
        \node[anchor=south] at (2.3, 2.4) {\textcolor{purple}{\textbf{Bottom Event}}};
        \node[anchor=north] at (4, 5.7) {\textcolor{blue}{\textbf{Top Event}}};
        \node[anchor=north] at (7, 6.7) {\textcolor{blue}{\textbf{Top Event}}};
        \node[anchor=north] at (2, 7.7) {\textcolor{blue}{\textbf{Top Event}}};
    \end{tikzpicture}
    \caption{Illustration of a Sweep Line Algorithm with its \textcolor{blue}{\textbf{Top Events}} and \textcolor{purple}{\textbf{Bottom Events}}}
    \label{fig: sweep-line}
\end{figure}
The algorithm, similar to the sweep-line algorithm for the planar point location problem~\cite{sarnak1986planar} and the rectilinear polygon cover problem~\cite{franzblau1989performance}, requires tracking intervals covering $C_k$ from Equation \ref{eq: Ck} that close up later to form rectangles in $C$. 

This is an example of a persistence algorithm, \cite{edelsbrunner2002topological}. The intervals tracked from $C_k$ at each sweep line height form ``1-dimensional homology generators" that close up to form $2$-dimensional rectangles during an \textbf{event}. 

In ~\cite{sarnak1986planar}, these ``1-dimensional homology generators" are tracked with a \emph{partially persistent} search tree data structure. A data structure is called \emph{partially persistent}~\cite{driscoll1986making} if it can update its present state and query any of its past states. In the case of sweep-line algorithms, the state is indexed by the \textbf{events} ordered by height. The partial persistence is necessary to look back in time to the past homological generators.

Our algorithm can be summarized as the following procedure:
\stepcounter{algocf}
\begin{boxalgorithm}[label={alg: sweep-line-alg}]{Sweep-Line Algorithm To Cover $C$}
\textbf{Input: } $([q]^2,\mathcal{R})$ a range space. \\
\textbf{Output: } A collection $H$ of rectangles covering $C$.
\begin{enumerate}
    \item An empty collection $H$.
    \item \textcolor{ForestGreen}{\textbf{(Initial Creation): }} Store $(1,[q]\times 1)$ as a creation interval.
    \item For each \textbf{event} $(y,R)$ from low to high key values:
    \begin{enumerate}
    \item \textcolor{Goldenrod}{\textbf{(Query): }} Find adjacent interval(s) $I_l,I_r$ (if they exist) on either side of $\pi_1^y(R)$ 
    \item If the event is a  \textcolor{purple}{\textbf{Bottom Event}}:
    \begin{enumerate}
    \item \textcolor{brown}{\textbf{(Destruction): }} Delete/Mark-off any created intervals that are contained in $I_l \cup \pi_1^y(R) \cup I_r$
        \begin{enumerate}
            \item Store their corresponding rectangles in $H$.
        \end{enumerate}
        \item \textcolor{ForestGreen}{\textbf{(Creation): }} Store the intervals $I_l,I_r$ as created.
    \end{enumerate}
    \item Otherwise, if the event is a  \textcolor{blue}{\textbf{Top Event}}:
    \begin{enumerate}
        \item \textcolor{brown}{\textbf{(Destruction): }} Delete/Mark-off any created intervals that are contained in $I_l \cup \pi_1^y(R) \cup I_r$
        \begin{enumerate}
            \item Store their corresponding rectangles in $H$.
        \end{enumerate}
        \item \textcolor{ForestGreen}{\textbf{(Creation): }} Store the intervals $I_l,I_r$ as created.
    \end{enumerate}
    \end{enumerate}
    \item \textcolor{brown}{\textbf{(Final Destruction): }} All remaining created intervals form rectangles with the top of the voxel grid. Store these in $H$.
\end{enumerate}
\end{boxalgorithm}
The algorithm performs a sweep-line from the bottom to the top of the voxel grid. It tracks intervals adjacent to an event. These are ``created" homological generators, represented by intervals, for some ``homology group"~\cite{edelsbrunner2002topological, may1999concise}. The previously created intervals covered by them close off with a ``destruction" of the homological generators.

We show in Lemma \ref{lemma: cover-the-complement}, that Algorithm \ref{alg: sweep-line-alg} has an output size bound up to a constant factor with the number of boxes $N$. 
\begin{lemma}\label{lemma: cover-the-complement}
Let $\mathcal{R}$ be a collection of $N$ boxes, each box a subset of $[q]^2$. 
There is a box cover of $C$ with at most $4N$ boxes consisting of boxes contained in $C$.

This can be computed in time $O(N\log N)$ by implementing Algorithm \ref{alg: sweep-line-alg} with an interval tree, see Algorithm \ref{alg: cover-complement} and  Section \ref{sec: interval-tree}, Section \ref{sec: algorithms-appendix} in the Appendix.
\end{lemma}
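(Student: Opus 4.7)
\emph{Proof proposal.} The plan is to establish three claims separately: (i) correctness — that $H$ is a box cover of $C$ by boxes entirely contained in $C$; (ii) the size bound $|H| \le 4N$; and (iii) the runtime $O(N \log N)$ using an interval-tree implementation.

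For correctness I would maintain the following invariant across events processed in order of increasing height: after handling every event with key at most $y$, the currently-created (undestroyed) intervals, each tagged with its creation height, are in bijection with the maximal contiguous horizontal intervals of the slice $C \cap ([q] \times \{y+1\})$. The initial Creation step establishes the invariant at the bottom boundary with the sentinel $(1, [q]\times 1)$ playing the role of the row immediately below $C$. For the inductive step, at a Bottom Event at height $y$, the rectangle $R$ begins blocking $\pi_1^y(R)$: the Query step locates the active interval straddling $\pi_1^y(R)$, the Destruction step emits that interval as a rectangle spanning from its creation height up to $y-1$, and the Creation step installs the two new slivers $I_l, I_r$ on either side of $\pi_1^y(R)$. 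At a Top Event at height $y$, the rectangle $R$ releases $\pi_1^y(R)$ back into $C$: the active intervals adjacent to $\pi_1^y(R)$ are destroyed and merged with $\pi_1^y(R)$ into new active intervals $I_l, I_r$. In both cases, the destroyed interval's emitted rectangle lies entirely inside $C$ because, by the invariant, no rectangle of $\mathcal{R}$ overlapped its horizontal extent during its lifetime (otherwise a Bottom Event would have already destroyed it). Every point of $C$ is contained in some active interval at its height and is emitted when that interval is destroyed — at the latest at the Final Destruction step at height $q$ — so $\bigcup_{R' \in H} R' \supseteq C$.

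For the size bound, each of the $2N$ Bottom/Top events contributes at most two newly created intervals (via $I_l$ and $I_r$), and each created interval eventually produces exactly one rectangle of $H$. Together with the observation that $I_l$ or $I_r$ is empty whenever $\pi_1^y(R)$ meets the grid boundary (absorbing the $O(1)$ sentinel contributions), this yields $|H| \le 4N$. For the runtime, I would store the active intervals in an interval tree keyed by horizontal coordinate (Appendix Section \ref{sec: interval-tree}) supporting adjacent-interval queries, insertions, and deletions in $O(\log N)$ time each. Sorting the $2N$ events by height takes $O(N \log N)$; since every created interval is destroyed exactly once, the total number of tree operations across all events is $O(N)$, giving overall runtime $O(N \log N)$.

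The main obstacle I anticipate is the clean formalization of the invariant and its preservation through degenerate configurations — coincident event heights, rectangles sharing an edge or corner, and rectangles abutting the grid boundary. A deterministic tie-breaking rule on simultaneous events (for instance processing Top Events before Bottom Events at the same height, then breaking ties by rectangle index), together with explicit handling of $I_l$ and $I_r$ when $\pi_1^y(R)$ reaches the left or right column of $[q]^2$, will be necessary for the inductive case analysis to go through without exception. Once these conventions are pinned down, the correctness, size, and runtime arguments above should each proceed by direct induction on the event ordering.
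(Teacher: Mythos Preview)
Your proposal is correct and follows essentially the same sweep-line strategy as the paper: both arguments bound $|H|$ by counting created intervals (at most two per event over $2N$ events, giving $4N$) and obtain the $O(N\log N)$ runtime from sorting plus amortized $O(\log N)$ interval-tree operations. Your invariant-based correctness argument is in fact more careful than the paper's, which only records that every created interval is eventually destroyed and that there are at most $2N$ creation events, without explicitly verifying coverage of $C$ or containment in $C$.
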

\begin{proof}
\textbf{Correctness of Algorithm \ref{alg: cover-complement}: }

The algorithm forms horizontal ``sweep segments" at every sweep-line \textbf{event}. We define our sweep-line \textbf{event} by the value of the second dimension (vertically) of the upper and lower sides of the rectangles in $\mathcal{R}$.

For each sweep-line \textbf{event} in order from low to high, it must ``create" atmost two intervals which must be closed up to form a $2$D rectangle at a later event. 

\textbf{Every Created Interval is Paired with some Destruction Event: }

Certainly each created interval must eventually have some sweep-line event above it. Such sweep-line events destroy a created interval. 

Say that there are still created intervals that do not have a sweep-line event above them. We know, however, that the last sweep-line event destroys all remaining created intervals. This proves the claim.

\textbf{There are Atmost $2N$ Creation Events: }

Since there are $N$ rectangles and each can create atmost $2$ intervals, this proves the claim.

Algorithm \ref{alg: cover-complement} is an implementation of Algorithm \ref{alg: sweep-line-alg} using an interval tree. The interval tree acts as the storage data structure to track the created intervals. When an interval is created it is inserted into the interval tree. When an interval is destroyed, it is deleted from the interval tree. This is a direct correspondence between the memory requirements of the algorithm and the operations on the data structure. Thus, it does not change correctness. 

\textbf{Complexity of Algorithm \ref{alg: cover-complement}: }

Algorithm \ref{alg: cover-complement} requires two sorting operations and $O(N)$ number of query/insert/delete operations on the interval tree. These operations are $\log(N)$ in complexity, see Section \ref{sec: interval-tree} in the Appendix. Thus the algorithm runs in time $O(N\log N)$. 
\end{proof}
\begin{algorithm}[!h]
\SetAlgoLined
\SetKwRepeat{Do}{do}{while}
\SetKwComment{Comment}{/* }{ */}
\caption{Box Covering of the Complement of a Box Cover by Sweep Line using an Interval Tree}\label{alg: cover-complement}
\KwData{$\mathcal{R}=\{R\subseteq [q]^2: R\text{ is a box}\}; \lvert \mathcal{R} \rvert = N$}
\KwResult{A box cover $H$ of $[q]^2 \setminus \bigcup_{R \in \mathcal{R}}R$ consisting of  boxes from  $[q]^2 \setminus \bigcup_{R \in \mathcal{R}}R$}
$\mathcal{T} \gets \text{Empty Interval Tree}$\\
$\mathcal{S}_1\gets \emptyset,\mathcal{S}_2 \gets \emptyset$\\
\For{$i=1,2$}{
$\mathcal{S}^{\text{left}}_i\gets \{(\min((R)_i), (\pi^{\min((R)_i)}_{i}(R),-1 )):R \in \mathcal{R}\}$\\
$\mathcal{S}^{\text{right}}_i\gets \{(\max((R)_i), (\pi^{\max((R)_i)}_{i}(R), +1)):R \in \mathcal{R}\}$
\\
$\mathcal{S}_i\gets \mathcal{S}^{\text{left}}_i \cup \mathcal{S}^{\text{right}}_i$\Comment*[r]{A set of key value pairs}
}
$(X,\mathcal{I}_1)\gets \textsf{SortByKey}(\mathcal{S}_1)$ \Comment*[r]{In Increasing Order, with $\lvert X \rvert =\lvert \mathcal{I}_1\rvert$}
$(Y,\mathcal{I}_2)\gets \textsf{SortByKey}(\mathcal{S}_2)$ \Comment*[r]{In Increasing Order, with $\lvert Y \rvert =\lvert \mathcal{I}_2\rvert$}
$\textsf{Insert}(\mathcal{T}, [1,q])$\\
\For{$(y,(I,s)) \in (Y,\mathcal{I}_2) \text{ in increasing order of $Y$ }$}{
$(C_l,C_r) \gets \textsf{AdjacentNeighbors}(I,\mathcal{I}_1)$ \Comment*[r]{See Section \ref{sec: algorithms-appendix} in Appendix. }
$\mathcal{D}\gets \textsf{Query}(\mathcal{T},[\min(C_l),\max(C_r)], \supseteq)$\Comment{Query for any created intervals that are contained in $[\min(C_l),\max(C_r)]$.}
\For{$D \in \mathcal{D}$}{ 
$\textsf{Delete}(\mathcal{T}, D)$\Comment{These created intervals are now $2$D and can be deleted.}
}
\Comment{Newly created intervals}
\uIf{$s=-1$}{
$\textsf{Insert}(\mathcal{T}, C_l)$\\
$\textsf{Insert}(\mathcal{T}, C_r)$
}\Else{
$C \gets [\min(C_l),\max(C_r)]$\\
$\textsf{Insert}(\mathcal{T}, C)$
}
}
\Return $S$
\end{algorithm}
\subsection{$\textsc{SpecialRect-3SC} \leq_{PTAS} \textsc{VGridSpecialRect-3SC}$}
In order to reduce $\textsc{SpecialRect-$3$SC}$ to $\textsc{VGridSpecialRect-3SC}$, we would like to use a box collection embedding as in  Lemma \ref{lemma: integer-emb}  along with a cover of the complement so that the corresponding optimal solutions across the two problems are in one-to-one correspondence by the embedding.

We can describe such a reduction algorithm here:
\stepcounter{algocf}
\begin{boxalgorithm}[label={alg: reduction-special2vgrid}]{Reduction from \textsc{SpecialRect-3SC} to $\textsc{VGridSpecialRect-3SC}$}
\textbf{Input: } $(X,\mathcal{R})$ a range space  instance of \textsc{SpecialRect-3SC} with $N=\lvert \mathcal{R}\rvert$\\
\textbf{Output: } An instance of \textsc{VGridSpecialRect-3SC}
\begin{enumerate}
\item $\mathcal{R}''\gets \{\phi(R) \cap \mathbb{Z}^2: R \in \mathcal{R}\}$ where ${\phi}$ is a box collection embedding of Lemma \ref{lemma: integer-emb}.
\item $H(\mathcal{R}'')\gets $\text{ Box cover from the sweep-line algorithm on }$\mathcal{R}''$.
\item \textbf{Return } $([2N]^2,\mathcal{R}''\cup H(\mathcal{R}'')$
 \end{enumerate}
\end{boxalgorithm}
We make the following observation about the relationship between $H(\mathcal{R}'')$, $\mathcal{R}''$  and their solutions.
\begin{proposition}\label{prop: H-R-relationship}
Let $(X,\mathcal{R})$ be an instance of \textsc{SpecialRect-$3$SC}. 
Let $([2N]^2,\mathcal{R}''\cup H(\mathcal{R}''))$ be an instance of \textsc{VGridSpecialRect-$3$SC}.

Furthermore, the corresponding solution $\mathcal{S}$ for \textsc{SpecialRect-$3$SC} and $\mathcal{S}'$ of \textsc{VGridSpecialRect-$3$SC} are related by:
\begin{equation}
    \lvert \mathcal{S}'\rvert =\lvert \mathcal{S}\rvert + \lvert H^*\rvert 
\end{equation}
 where $H^*=H(\mathcal{R}'')$ is the minimal subcover of $H(\mathcal{R}'')$ covering the complement $C$ of $\bigcup_{R \in \mathcal{R}''}R$ in $[2N]^2$. 
\end{proposition}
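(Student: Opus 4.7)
The plan is to prove $\lvert \mathcal{S}'\rvert = \lvert \mathcal{S}\rvert + \lvert H^*\rvert$ by decomposing the optimum cover of $[2N]^2$ into two geometrically separated pieces and identifying each piece with a known optimum.

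\textbf{Step 1 (disjoint decomposition).} By Lemma \ref{lemma: cover-the-complement}, every rectangle in $H(\mathcal{R}'')$ is contained in $C = [2N]^2 \setminus \bigcup_{R\in\mathcal{R}''}R$. Consequently no rectangle of $\mathcal{R}''$ contains any integer point of $C$ and no rectangle of $H(\mathcal{R}'')$ contains any integer point of $\bigcup\mathcal{R}''$. Given the optimum $\mathcal{S}'$, I would split it as $\mathcal{S}' = \mathcal{S}'_1 \sqcup \mathcal{S}'_2$ with $\mathcal{S}'_1 \subseteq \mathcal{R}''$ and $\mathcal{S}'_2 \subseteq H(\mathcal{R}'')$; then $\mathcal{S}'_1$ must cover $\bigcup\mathcal{R}''$ and $\mathcal{S}'_2$ must cover $C$, because the two halves cannot help each other. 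By optimality, $\mathcal{S}'_2$ must be a minimum subcover of $C$ from $H(\mathcal{R}'')$, giving $\lvert\mathcal{S}'_2\rvert = \lvert H^*\rvert$ directly from the definition of $H^*$.

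\textbf{Step 2 (incidence preservation).} Using the interval collection embeddings $\phi_1,\phi_2$ from Proposition \ref{prop: interval-embedding}, I would establish that $\phi$ is strictly monotonic on the joint endpoint set, so for every $x \in \mathbb{R}^2$ and every $R \in \mathcal{R}$ we have $x \in R \iff \phi(x) \in \phi(R)$. This yields a bijection $R \mapsto \phi(R)\cap\mathbb{Z}^2$ from $\mathcal{R}$ onto $\mathcal{R}''$ under which subcollections correspond.

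\textbf{Step 3 (matching $\lvert\mathcal{S}'_1\rvert = \lvert\mathcal{S}\rvert$).} I would prove both inequalities. For ``$\geq$'', any $T' \subseteq \mathcal{R}''$ covering $\bigcup\mathcal{R}''$ in particular covers $\phi(X)$, so by Step 2 its $\phi$-preimage covers $X$ and has size $\geq \lvert \mathcal{S}\rvert$. For ``$\leq$'', given an optimum $T \subseteq \mathcal{R}$ covering $X$, I would argue that $\phi(T)$ already covers every integer point of $\bigcup\mathcal{R}''$. The key lemma to prove here is that for every integer point $p \in \bigcup\mathcal{R}''$ there exists some $x \in X$ such that $\{R \in \mathcal{R}'' : x \in R\} \subseteq \{R \in \mathcal{R}'' : p \in R\}$. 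This is where I would exploit the specific geometry of Definition \ref{def: phi}: the coordinate endpoints of every rectangle in $\mathcal{R}$ lie in $\Phi(U)$, so the integer ``gap'' coordinates introduced by $\phi$ lie strictly inside every rectangle that straddles them, and each such interior integer point is dominated (in terms of rectangle membership) by an endpoint-indexed $\phi(x)$ in the same row/column block.

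\textbf{Main obstacle.} The hard part will be Step 3's incidence-domination lemma. In full generality this is false (a minimum cover of a sparse point set need not cover the full union of rectangles), so the argument must genuinely use that \textsc{SpecialRect-$3$SC} rectangles are minimally inscribed around $\phi(\Phi(s))$ and that all rectangle endpoints come from $\Phi(U)$. Once this lemma is established, combining Steps 1--3 gives $\lvert \mathcal{S}'\rvert = \lvert\mathcal{S}'_1\rvert + \lvert\mathcal{S}'_2\rvert = \lvert \mathcal{S}\rvert + \lvert H^*\rvert$ as required.
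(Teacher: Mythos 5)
Your Step 1 is essentially the entirety of the paper's own proof: the published argument observes that $\mathcal{S}'$ splits into a part inside $\mathcal{R}''$ (which alone can cover the grid points of $\bigcup_{R\in\mathcal{R}''}R$) and a part inside $H(\mathcal{R}'')$ (which alone can cover $C$), and then notes -- a point you skip, although the statement asserts it and Equation \ref{eq: opt-H} in Lemma \ref{lemma: lower-bound} relies on it -- that the sweep-line rectangles of $H(\mathcal{R}'')$ meet only along boundaries, so the minimal subcover $H^*$ is $H(\mathcal{R}'')$ itself. Your Step 2 (incidence preservation under $\phi$) is harmless and is implicit in the paper.

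The genuine gap is Step 3, and you concede it yourself. The equality $\lvert\mathcal{S}'_1\rvert=\lvert\mathcal{S}\rvert$ needs the ``$\leq$'' half: a subcollection of $\mathcal{R}''$ of size $\lvert\mathcal{S}\rvert$ that covers \emph{every} integer point of $\bigcup\mathcal{R}''$, whereas $\mathcal{S}$ is only required to cover the finite point set $X$. Your incidence-domination lemma (for every grid point $p$ there is $x\in X$ with $\{R:x\in R\}\subseteq\{R:p\in R\}$) would close this, but you do not prove it, and as stated it is delicate: since each element of $U$ lies in exactly two sets, the lemma forces that whenever two rectangles of $\mathcal{R}''$ cross and their crossing cell contains a grid point lying in no third rectangle, those two rectangles must share a point of $X$ lying in no other rectangle -- a property of the layout in Definition \ref{def: phi} that has to be verified, not assumed, and your sketch about ``gap'' coordinates does not yet do so. So the proposal is a plan whose hardest step is open. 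For comparison, the paper does not carry out your Steps 2--3 at all: its proof simply identifies ``the optimal subcover of $\mathcal{R}''$'' (covering all grid points of the union) with the \textsc{SpecialRect-3SC} optimum $\mathcal{S}$, without addressing the point-set-versus-union distinction you raise; you have correctly isolated a subtlety the paper glosses over, but you have not resolved it, so the proof is incomplete.
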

\begin{proof}
    Since $C$ is the complement to $\bigcup_{R \in \mathcal{R}''} R$ with respect to $[2N]^2$ and $H(\mathcal{R}'')$ covers $C$ using rectangles on $C$, we can say that the optimal cover $\mathcal{S}'$ will be the disjoint union of the optimal subcover of $H(\mathcal{R}'')$ and the optimal subcover of $\mathcal{R}''$. 
    
    We notice that $H(\mathcal{R}'')$ is a collection of disjoint rectangles only intersecting on the boundaries. This is because in the horizontal direction the rectangles in $H(\mathcal{R}'')$ are always separated by some rectangle in $\mathcal{R}''$. In the vertical direction the rectangles only intersect on the boundary when there is an event (either top of bottom). Thus, its minimal subcover is itself.
    This gives the desired equality.
\end{proof}
We show in the following that the ``planarity"-like property of the instances of \textsc{SpecialRect-$3$SC} are sufficient to provide a lower bound to the size of the solution $\mathcal{S}'$ of the $\textsc{VGridSpecialRect-3SC}$ instance $([2N]^2,\mathcal{R}''\cup H(\mathcal{R}''))$. 
\begin{lemma}\label{lemma: lower-bound}
Let $(X,\mathcal{R})$ be an instance of \textsc{SpecialRect-$3$SC}. 

Let the instance $([2N]^2,\mathcal{R}''\cup H(\mathcal{R}''))$ of the $\textsc{VGridSpecialRect-3SC}$ problem be the output of Algorithm \ref{alg: reduction-special2vgrid}. 

The optimal minimizing solution $\mathcal{S}'$ for $\textsc{VGridSpecialRect-3SC}$ satisfies the lower bound:
    \begin{equation}
        \lvert \mathcal{S}'\rvert \geq \frac{\lvert \mathcal{R}''\rvert}{3}+ \lvert {H}^*\rvert
    \end{equation}
    where $H^*=H(\mathcal{R}'')$ is the optimal subcover of $H(\mathcal{R}'')$ covering the complement $C$ of $\bigcup_{R \in \mathcal{R}''}R$ in $[2N]^2$. 
\end{lemma}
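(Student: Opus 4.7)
The plan is to use Proposition \ref{prop: H-R-relationship} to split the lower bound into a purely geometric piece ($\lvert H^*\rvert$) and a purely combinatorial piece ($\lvert \mathcal{S}\rvert$), and then derive the combinatorial piece from a bounded-set-size counting argument in the underlying \textsc{Special-3SC} instance.

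First I would invoke Proposition \ref{prop: H-R-relationship}, which already yields the exact identity $\lvert \mathcal{S}'\rvert = \lvert \mathcal{S}\rvert + \lvert H^*\rvert$, where $\mathcal{S}$ is the optimum of \textsc{SpecialRect-3SC} on $(X,\mathcal{R})$. The remaining task is to show $\lvert \mathcal{S}\rvert \geq \lvert \mathcal{R}''\rvert/3$. Because $(X,\mathcal{R})$ is obtained from a \textsc{Special-3SC} instance $(U,S)$ via the point embedding $\Phi$ of Definition \ref{def: phi} (with $\mathcal{R}''$ the box-collection embedding of Lemma \ref{lemma: integer-emb} applied to $\mathcal{R}$), the bijection $\Phi$ converts a minimum cover of $X$ by $\mathcal{R}$ into a minimum cover of $U$ by $S$. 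Hence $\lvert \mathcal{S}\rvert$ equals the optimum of the \textsc{Special-3SC} instance $(U,S)$.

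Second, I would apply a maximum-set-size counting bound. Direct inspection of $S_t$ shows that every $s \in S$ has $\lvert s\rvert \leq 3$ (the five set sizes in $S_t$ are $2,2,3,2,2$). Hence for any cover $\mathcal{S} \subseteq S$ of $U$,
\begin{equation}
3\lvert \mathcal{S}\rvert \;\geq\; \sum_{s\in \mathcal{S}}\lvert s\rvert \;\geq\; \lvert U\rvert,
\end{equation}
so $\lvert \mathcal{S}\rvert \geq \lvert U\rvert/3$. Using $2n=3m$ from \textsc{Special-3SC}, we have $\lvert U\rvert = n+4m = \tfrac{3m}{2}+4m = \tfrac{11m}{2}$, while $\lvert \mathcal{R}''\rvert = \lvert \mathcal{R}\rvert = \lvert S\rvert = 5m$; therefore
\begin{equation}
\frac{\lvert U\rvert}{3} \;=\; \frac{11m}{6} \;\geq\; \frac{10m}{6} \;=\; \frac{5m}{3} \;=\; \frac{\lvert \mathcal{R}''\rvert}{3}.
\end{equation}
Substituting into the identity from Proposition \ref{prop: H-R-relationship} gives $\lvert \mathcal{S}'\rvert = \lvert \mathcal{S}\rvert + \lvert H^*\rvert \geq \lvert \mathcal{R}''\rvert/3 + \lvert H^*\rvert$, as required.

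The main obstacle is essentially bookkeeping rather than technical: one must check that the parameter relation $2n=3m$ built into \textsc{Special-3SC} leaves just enough slack so that the very coarse incidence bound $\lvert \mathcal{S}\rvert \geq \lvert U\rvert/3$ already implies $\lvert \mathcal{S}\rvert \geq \lvert \mathcal{R}''\rvert/3$. If the elements-to-sets ratio in the definition of \textsc{Special-3SC} were any smaller, this naive counting would fail, and one would instead need a refined argument exploiting the vertex-cover structure of the set system (for instance, a matching bound on the element-set incidence graph, where each element has degree exactly $2$). Here the slack is $\tfrac{m}{6}$, which is comfortable but not large, and the argument as written is essentially the shortest route.
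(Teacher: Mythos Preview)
Your proposal is correct and takes essentially the same approach as the paper: both invoke Proposition~\ref{prop: H-R-relationship} to split off the $\lvert H^*\rvert$ term and then prove $\lvert \mathcal{S}\rvert \geq \lvert \mathcal{R}''\rvert/3$ via the counting inequality $3\lvert \mathcal{S}\rvert \geq \lvert U\rvert = n+4m$ combined with $2n=3m$. The only difference is presentational: the paper first reformulates the \textsc{Special-3SC} set system as a degree-$\leq 3$ graph (sets $\mapsto$ vertices, elements $\mapsto$ edges, using that each element lies in exactly two sets) and states the bound as a vertex-cover inequality $3\lvert V^*\rvert \geq \lvert E\rvert$, whereas you obtain the identical inequality directly from $\lvert s\rvert \leq 3$ without the graph detour; your route is slightly more direct, and your closing remark already anticipates the paper's vertex-cover framing as the natural fallback had the numerics been tighter.
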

\begin{proof}
According to Proposition \ref{prop: H-R-relationship}, we know that $H(\mathcal{R}'')$ and $\mathcal{R}''$ are disjoint.

Thus, it suffices to lower bound the optimal minimal cover of the instance:
 \begin{equation}
     (\bigcup_{R' \in \mathcal{R}''}(R' \cap \mathbb{Z}^2),\mathcal{R}'')
 \end{equation}
 as a geometric set cover instance.

 The collection $(X,\mathcal{R})$ is an instance of \textsc{SpecialRect-$3$SC}. We would like to show that the minimal cover cannot be too small. 
 
 We know that the problem \textsc{SpecialRect-$3$SC} is reduced from \textsc{Special-$3$SC}. The ``low-degree" property of the instances of \textsc{Special-$3$SC} shows up as a ``planarity property" in the instances  of \textsc{SpecialRect-$3$SC}. We will use the ``low degree property" to show that a small number of rectangles cannot cover too many points at once. 

 Let $(U,S)$ be an instance of \textsc{Special-$3$SC}. 
 According to the constraint for \textsc{Special-$3$SC},  every set $s \in S$ has atmost $3$ points in it.

 Let us bijectively map the sets in $S$ to a set of nodes $V$ and the points $U$ bijectively to a set of edges $E$ where the set-point incidence is maintained. 
 
 We can check that the edges are well defined. This means that the set-point incidence is maintained as a node-edge incidence since each point belongs to exactly $2$ sets. 
 
 Using the bijection, the minimal vertex cover $V^* \subseteq V$ on graph $(V,E)$ of degree at most $3$ is thus in one-to-one correspondence with a minimal set cover $S^* \subseteq S$.   

 Since a minimal vertex cover $V^* \subseteq V$ already covers all the edges $E$, we can say that for $V^*$: 
 \begin{equation}
     3\lvert V^*\rvert \geq \lvert E \rvert = 4m+n
 \end{equation}
 Thus, we have that the optimal solution $S^* \subseteq S$ satisfies:
 \begin{equation}
     \lvert S^* \rvert \geq \frac{4m+n}{3}\geq \frac{5m}{3}=\frac{\lvert S \rvert}{3} 
 \end{equation}
 because $\lvert S^*\rvert = \lvert V^*\rvert, \lvert E \rvert = \lvert U \rvert$ and $2n=3m$.

 According to Proposition \ref{prop: H-R-relationship}, the minimal subcover $H^*$ of  $H(\mathcal{R}'')$  is itself:
 \begin{equation}\label{eq: opt-H}
     H^*=H(\mathcal{R}'')
 \end{equation}
 
 Since we have that: 
 \begin{equation}
     \bigcup_{R' \in \mathcal{R}''}(R' \cap \mathbb{Z}^2) \supseteq \phi\circ \Phi(U), 
 \end{equation}
 the optimal solution $\mathcal{S}'$ of \textsc{VGridSpecialRect-$3$SC} must contain the following union of two collections whose rectangles do not intersect:
 \begin{equation}
     \{\phi\circ \Phi(t) \cap \mathbb{Z}^2 : t \in S^*\} \cup H^*
 \end{equation}
 We also know that $\lvert \mathcal{R}'' \rvert=\lvert S\rvert$ since the sets in $S$ are bijectively mapped to boxes in $\mathcal{R}$ through $\Phi$ and the box collection embedding $\phi$ is a bijection. Thus we can continue bounding the previous lower bound:
 \begin{equation}
     \lvert \mathcal{S}' \rvert \geq \lvert S^* \rvert +\lvert H^*\rvert \geq \frac{\lvert S \rvert}{3}+\lvert H^*\rvert=\frac{\lvert \mathcal{R}'' \rvert}{3}+\lvert H^*\rvert= \frac{\lvert \mathcal{R}'' \rvert}{3}+\lvert H(\mathcal{R}'')\rvert
 \end{equation}
 where the last equality follows from Equation \ref{eq: opt-H}. 
 This gives the desired lower bound.  
\end{proof}
\begin{theorem} \label{thm: R2ZPTAS-reduction}
    $\textsc{SpecialRect-3SC} \leq_{PTAS} \textsc{VGridSpecialRect-3SC}$
\end{theorem}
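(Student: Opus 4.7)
The plan is to exhibit a PTAS reduction whose forward map is Algorithm \ref{alg: reduction-special2vgrid} and whose backward map restricts a \textsc{VGridSpecialRect-3SC} solution to its rectangles coming from $\mathcal{R}''$ before applying $\phi^{-1}$. Preservation of the approximation ratio will follow from Proposition \ref{prop: H-R-relationship} together with the lower bound of Lemma \ref{lemma: lower-bound} and the complement-cover size bound of Lemma \ref{lemma: cover-the-complement}.

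First I would verify that Algorithm \ref{alg: reduction-special2vgrid} runs in polynomial time and produces a valid instance $([2N]^2,\mathcal{R}''\cup H(\mathcal{R}''))$ of \textsc{VGridSpecialRect-3SC}. By Lemma \ref{lemma: integer-emb} the box collection embedding $\phi$ takes $O(N\log N)$ time and sends every rectangle of $\mathcal{R}$ into $[2N]^2$ with integer corners, and by Lemma \ref{lemma: cover-the-complement} the sweep line produces a complement cover $H(\mathcal{R}'')$ with $|H(\mathcal{R}'')|\leq 4N$ in $O(N\log N)$ time whose rectangles all lie inside the complement $C=[2N]^2\setminus \bigcup_{R\in \mathcal{R}''}R$. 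Thus $\mathcal{R}''\cup H(\mathcal{R}'')$ covers $[2N]^2$ by construction, giving a feasible instance of the target problem in polynomial time.

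Second, define the backward map
\begin{equation*}
g(\mathcal{S}') := \{\phi^{-1}(R) : R\in \mathcal{S}'\cap \mathcal{R}''\}.
\end{equation*}
This is clearly polynomial time. I claim $g(\mathcal{S}')$ covers the original point set $X$. Indeed, every $x\in X$ satisfies $\phi(x)\in \bigcup_{R\in \mathcal{R}''}R$, and by Definition \ref{def: phi} the embedded points sit strictly inside the integer-coordinate interior of their generating rectangles in $\mathcal{R}''$; since the rectangles of $H(\mathcal{R}'')$ are contained in the complement $C$ (Lemma \ref{lemma: cover-the-complement}), none of them covers any $\phi(x)$. Hence any covering rectangle for $\phi(x)$ in $\mathcal{S}'$ must come from $\mathcal{S}'\cap \mathcal{R}''$, and pulling back by the bijection $\phi^{-1}$ on endpoints yields a covering rectangle of $x$ in $\mathcal{R}$.

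Finally, I would push the approximation guarantee through. By Proposition \ref{prop: H-R-relationship} the rectangles of $H(\mathcal{R}'')$ meet only in boundary points, so $H(\mathcal{R}'')$ is its own minimum subcover and every feasible $\mathcal{S}'$ must contain all of it; thus $|g(\mathcal{S}')|=|\mathcal{S}'|-|H(\mathcal{R}'')|$. Writing $\mathrm{OPT}$ and $\mathrm{OPT}'$ for the optima of the two instances, the same proposition gives $\mathrm{OPT}'=\mathrm{OPT}+|H(\mathcal{R}'')|$, while Lemma \ref{lemma: lower-bound} yields $\mathrm{OPT}\geq N/3$ and Lemma \ref{lemma: cover-the-complement} yields $|H(\mathcal{R}'')|\leq 4N\leq 12\,\mathrm{OPT}$. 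Therefore, if $|\mathcal{S}'|\leq (1+\varepsilon)\mathrm{OPT}'$ then
\begin{equation*}
|g(\mathcal{S}')| = |\mathcal{S}'|-|H(\mathcal{R}'')| \leq (1+\varepsilon)\mathrm{OPT} + \varepsilon |H(\mathcal{R}'')| \leq (1+13\varepsilon)\mathrm{OPT},
\end{equation*}
so setting the PTAS-reduction function $\alpha(\varepsilon):=13\varepsilon$ (equivalently, using error parameter $\varepsilon/13$ on the target instance) completes the argument. The main obstacle I expect is the feasibility check in the second step: it is essential that no $\phi(x)$ can lie on the shared boundary between a rectangle of $\mathcal{R}''$ and one of $H(\mathcal{R}'')$, which in turn requires that the sweep-line events of Algorithm \ref{alg: cover-complement} occur only at $y$-coordinates determined by endpoints of rectangles in $\mathcal{R}''$, never at $y$-coordinates of embedded points of $\Phi(U)$. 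This is a short case-check on the sweep-line events once the ordering of Definition \ref{def: phi} is explicitly written out.
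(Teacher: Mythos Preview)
Your proposal is correct and follows essentially the same approach as the paper: the same forward map (Algorithm \ref{alg: reduction-special2vgrid}), the same backward map $g(x,s)=\phi^{-1}(s\setminus H(\mathcal{R}''))$, and the same chain of inequalities via Proposition \ref{prop: H-R-relationship}, Lemma \ref{lemma: cover-the-complement}, and Lemma \ref{lemma: lower-bound} yielding $\alpha(\epsilon)=\epsilon/13$ (your momentary $\alpha(\varepsilon)=13\varepsilon$ is immediately self-corrected in the next clause). The paper does not explicitly verify feasibility of $g(\mathcal{S}')$, so your added discussion of that point is extra rigor rather than a divergence.
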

\begin{proof}
We define the following PTAS reduction: 
\begin{enumerate}
    \item Let the reduction map $m: I_{\textsc{SpecialRect-3SC}} \rightarrow I_{\textsc{VGridSpecialRect-3SC}}$ map as follows:
    \begin{equation}
        m: (X,\mathcal{R})\mapsto ([2\lvert \mathcal{R}\rvert ]^2,\mathcal{R}'' \cup H(\mathcal{R}''))
    \end{equation}
    be Algorithm \ref{alg: reduction-special2vgrid} acting as a map from its input to its output. 
    \item Let $\alpha(\epsilon)\triangleq \frac{\epsilon}{13}$
    \item Let $x=(X,\mathcal{R})$ be an instance of \textsc{SpecialRect-$3$SC}. 
    
    Let $s=\mathcal{S}$ be a feasible solution to the \textsc{VGridSpecialRect-3SC} on instance $m(x)$. 
    
    Certainly since $\mathcal{S}\subseteq \mathcal{R}''\cup H(\mathcal{R}'')$.
    We can then define:
    \begin{equation}
        g(x,s)\triangleq \phi^{-1}(s\setminus H(\mathcal{R}''))
    \end{equation}
    This map $g$ is polynomial time computable since the set difference can be computed by enumeration and $\phi^{-1}$ only takes $O(1)$ to compute for each rectangle in the set difference.
\end{enumerate}

\textbf{PTAS preservation from \textsc{VGridSpecialRect-3SC} to \textsc{SpecialRect-3SC}: }

Let $\mathcal{A}$ be a $(1+\alpha(\epsilon))$-approximation algorithm to \textsc{VGridSpecialRect-3SC}. This means: 
\begin{equation}
    \mathcal{A}(m(x)) \leq (1+\alpha(\epsilon))\lvert \mathcal{S}'\rvert 
\end{equation}
where $\mathcal{S}'$ is an optimal minimizing solution to \textsc{VGridSpecialRect-3SC}. 

We then have that:
\begin{DispWithArrows*}[format=ll,wrap-lines,fleqn,mathindent = 2cm]
    \lvert g(x,s)\rvert &\Arrow{ \text{ (By Definition of $g$)}}\\
    = \lvert \mathcal{A}(m(x))\rvert -\lvert H(\mathcal{R}'')\rvert &\Arrow{\text{ (By $\mathcal{A}$ as a $(1+\alpha(\epsilon))$-approximation) }} \\
    \leq  (1+\alpha(\epsilon)) \lvert \mathcal{S}'\rvert -\lvert H(\mathcal{R}'')\rvert &\Arrow{\text{ (By Proposition \ref{prop: H-R-relationship})}}\\
    = (1+\alpha(\epsilon)) (\lvert \mathcal{S}\rvert+\lvert H(\mathcal{R}'')\rvert) -\lvert H(\mathcal{R}'')\rvert \\
    = (1+\alpha(\epsilon))\lvert \mathcal{S}\rvert +\alpha(\epsilon)\lvert H(\mathcal{R}'')\rvert &\Arrow{\text{ (By Lemma  \ref{lemma: cover-the-complement})}}\\
    \leq (1+\alpha(\epsilon))\lvert \mathcal{S}\rvert +4\alpha(\epsilon)\lvert \mathcal{R}''\rvert &\Arrow{\text{ (By Lemma \ref{lemma: lower-bound})}}\\
    \leq (1+\alpha(\epsilon))\lvert \mathcal{S}\rvert +12\alpha(\epsilon)({\lvert \mathcal{S}'\rvert-\lvert H(\mathcal{R}'')\rvert}) &\Arrow{\text{ (By Proposition \ref{prop: H-R-relationship}) }}\\
    =  (1+\alpha(\epsilon))\lvert \mathcal{S}\rvert +12\alpha(\epsilon){\lvert \mathcal{S}\rvert} &\Arrow{ (By Definition of $\alpha(\epsilon)$) }\\
    \leq (1+\frac{\epsilon}{13})\lvert \mathcal{S}\rvert +\epsilon\frac{12\lvert \mathcal{S}\rvert}{13}\\
    \leq (1+{\epsilon})\lvert \mathcal{S}\rvert
\end{DispWithArrows*}
\end{proof}
\begin{remark}
    It is necessary that we reduce from the \textsc{SpecialRect-$3$SC} problem. If we view \textsc{SpecialRect-$3$SC} as a subproblem of $2$-\textsc{DimIntVGridBoxCover}. Without the ``planarity"  property of the instances of \textsc{SpecialRect-$3$SC}, the lower bound from Lemma \ref{lemma: lower-bound} would not hold in general. 

    Consider, for example, a rectangle on $[2N]^2, N \in \mathbb{N}$ with $\Omega(N)$ other rectangles that intersect with it. This results in the following:
    \begin{equation}
        \Omega(N) \lvert S^*\rvert \geq N \Rightarrow \lvert S^*\rvert \geq O(1)   
    \end{equation}
\end{remark}
\begin{corollary}
$2$-\textsc{DimIntVGridBoxCover} is APX-Hard.
\end{corollary}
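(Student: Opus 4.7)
The plan is to package together the PTAS reductions already established in the paper with a trivial subproblem reduction. By the diagram at the end of Section 2 and the definitions of the two problems, $\textsc{VGridSpecialRect-3SC}$ is literally a subproblem of $2\textsc{-DimIntVGridBoxCover}$: every instance $([2N]^2, \mathcal{R}'' \cup H(\mathcal{R}''))$ of the former, produced by Algorithm \ref{alg: reduction-special2vgrid}, is already a legal instance $([q]^k, \mathcal{R})$ of the latter with $k=2$, $q=2N$, and the objective (minimum cardinality subcollection covering the voxel grid) coincides verbatim.

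First I would formalize the subproblem-to-reduction step. Take the reduction map $m: I_{\textsc{VGridSpecialRect-3SC}} \hookrightarrow I_{2\textsc{-DimIntVGridBoxCover}}$ to be the inclusion of instances, take $\alpha(\epsilon) = \epsilon$, and take $g(x,s) = s$, since any feasible subcover returned by the larger problem is a feasible subcover for the subproblem and has identical cardinality. Because the optimum for an instance of the subproblem coincides with the optimum of the same instance viewed in the larger problem (the family of candidate subcollections $\mathcal{R}'' \cup H(\mathcal{R}'')$ is fixed in both cases), a $(1+\epsilon)$-approximation for $2\textsc{-DimIntVGridBoxCover}$ immediately yields a $(1+\epsilon)$-approximation for $\textsc{VGridSpecialRect-3SC}$. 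This gives $\textsc{VGridSpecialRect-3SC} \leq_{PTAS} 2\textsc{-DimIntVGridBoxCover}$.

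Next I would chain PTAS reductions. From \cite{chan2014exact} we have $3\textsc{VC} \leq_{PTAS} \textsc{Special-3SC} \leq_{PTAS} \textsc{SpecialRect-3SC}$, from Theorem \ref{thm: R2ZPTAS-reduction} we have $\textsc{SpecialRect-3SC} \leq_{PTAS} \textsc{VGridSpecialRect-3SC}$, and from the previous paragraph we have $\textsc{VGridSpecialRect-3SC} \leq_{PTAS} 2\textsc{-DimIntVGridBoxCover}$. Since PTAS reductions compose and preserve APX-Hardness, and $\textsc{Special-3SC}$ is APX-Hard \cite{chan2014exact}, the composition yields that $2\textsc{-DimIntVGridBoxCover}$ is APX-Hard.

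The only step that deserves care is the first one, and even there the content is minimal: one must verify that the optimum does not inflate when the problem is enlarged. This holds because in the inclusion the candidate family and the set to be covered are unchanged, so the feasible region and objective are identical. Thus the corollary follows with no additional combinatorial work beyond what has already been done in Theorem \ref{thm: R2ZPTAS-reduction}.
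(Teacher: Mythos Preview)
Your proposal is correct and follows essentially the same route as the paper: the paper's one-line proof simply invokes Theorem \ref{thm: R2ZPTAS-reduction}, implicitly using that every instance produced by that reduction is already a valid instance of $2\textsc{-DimIntVGridBoxCover}$ (i.e., $\textsc{VGridSpecialRect-3SC}$ is a subproblem), so the same PTAS reduction lands in the larger problem and APX-Hardness of $\textsc{SpecialRect-3SC}$ propagates. You have merely made the trivial inclusion-as-PTAS-reduction step explicit, which is fine but adds no new content.
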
\label{cor: 2dimintvgrid-apxhard}
\begin{proof}
    This follows by Theorem \ref{thm: R2ZPTAS-reduction}.
\end{proof}
\bigskip
\subsection{\textsc{VGridSpecialRect-3SC}  $\leq_{PTAS} (k,D)$\textsc{-RectLossyVVFCompression}}\label{sec: dimintvgridboxcover2rectlossyvfcomp}
We show in the following the PTAS reduction from \textsc{VGridSpecialRect-3SC} to \\ $(k,D)$\textsc{-RectLossyVVFCompression}. This requires a reduction map that on an instance of \\\textsc{VGridSpecialRect-$3$SC} can: 
\begin{enumerate}
    \item Construct a voxelized vector field and
    \item Construct a piecewise multivariate polynomial.
\end{enumerate}

\textbf{1. Constructing the Voxelized Vector Field: }

Let the voxelized vector field be an inclusion of the $2$-dimensional voxel grid $[2N]^2$ from the instance $([2N]^2,\mathcal{R}')$ of \textsc{VGridSpecialRect-$3$SC} into $\mathbb{R}^2$.

\textbf{2. Constructing the Piecewise Multivariate Polynomial: }

If we can form a partition, or disjoint union, of shapes from the instance $([2N]^2,\mathcal{R}')$ of \\\textsc{VGridSpecialRect-3SC}. Certainly a piecewise multivariable polynomial could then be defined over such a partition of $[2N]^2 \subseteq \mathbb{R}^2$.

For a disjoint union of rectangles, any optimal cover of its union is itself. When the rectangles intersect, however, this may not be true.

    We know that the instance $([2N]^2,\mathcal{R}')$ of \textsc{VGridSpecialRect-$3$SC} has $\mathcal{R}'$ of the form:
    \begin{equation}
        \begin{split}
        \mathcal{R}'\triangleq \mathcal{R}''\cup H(\mathcal{R}'') \text{ s.t. }\mathcal{R}''\triangleq\{\phi(R) \cap \mathbb{Z}^2 :R \in \mathcal{R}\} \\\text{ and } H(\bullet) \text{ is computed by the sweep-line Algorithm \ref{alg: sweep-line-alg}.}
        \end{split}
    \end{equation}
    We also know that $H(\mathcal{R}'')$ is a disjoint union of rectangles that cover the complement of $\mathcal{R}''$. Thus, $\mathcal{R}''$ is the only subcollection of $\mathcal{R}'$ that has rectangles that can intersect. 
   Let:
   \begin{equation}
       \mathcal{U}(\mathcal{R}'')\triangleq \bigcup_{R \in \mathcal{R}''} R
   \end{equation}
We prove a relationship between the optimal cover of $\mathcal{U}(\mathcal{R}'')$ and the collection $\mathcal{R}''$:
\begin{proposition}\label{prop: subcover-opt}
Let $([2N]^2,\mathcal{R}'' \cup H(\mathcal{R}''))$ be an instance of \textsc{VGridSpecialRect-3SC}.

   The rectangle cover of minimum size for $\mathcal{U}(\mathcal{R}'')$ has the cardinality of the subcover $\mathcal{O}$ of $\mathcal{R}''$ consisting of rectangles whose corners are not covered by any other rectangle in $\mathcal{R}''$.
\end{proposition}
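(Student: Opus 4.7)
The plan is to prove the equality by sandwiching the minimum cover between two matching estimates, both equal to $|\mathcal{O}|$. I would first establish the lower bound, that every rectangle cover has size at least $|\mathcal{O}|$, and then the upper bound, that $\mathcal{O}$ is itself a valid cover.

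For the lower bound, the key observation is that every rectangle $R \in \mathcal{O}$ must appear in every covering subcollection $\mathcal{S} \subseteq \mathcal{R}''$ of $\mathcal{U}(\mathcal{R}'')$. Indeed, by the defining property of $\mathcal{O}$, each of the four corners of $R$ is a point of $\mathcal{U}(\mathcal{R}'')$ that is not contained in any other rectangle of $\mathcal{R}''$; hence the only rectangle in $\mathcal{R}''$ that can cover those corners is $R$ itself, forcing $R \in \mathcal{S}$. This yields $\mathcal{O} \subseteq \mathcal{S}$ and therefore $|\mathcal{S}| \geq |\mathcal{O}|$ for every feasible subcover.

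For the matching upper bound, I would show that $\mathcal{O}$ itself covers $\mathcal{U}(\mathcal{R}'')$. The critical structural input is the "$4$-boundary-point" intersection property of $\mathcal{R}''$ inherited from the \textsc{SpecialRect-3SC} instance through Algorithm~\ref{alg: reduction-special2vgrid}: any two distinct rectangles of $\mathcal{R}''$ are either disjoint or cross along their boundaries at exactly four points, producing a corner-overlap configuration in which each of the two rectangles has a corner strictly inside the other. Take any $R \in \mathcal{R}'' \setminus \mathcal{O}$. Since $R \notin \mathcal{O}$, at least one of the corners of $R$ lies in some $R_1 \in \mathcal{R}'' \setminus \{R\}$, and by the $4$-point configuration the subrectangle of $R$ anchored at that corner and meeting $R_1$ is entirely contained in $R_1$. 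I would repeat this for every covered corner of $R$ and iteratively chase each of the resulting containing rectangles outward, using the low-incidence structure inherited from the $3$-regular-vertex-cover instance underlying \textsc{Special-3SC} (each element of $U$ lies in at most three sets of $S$, and each $a_i$ lies in exactly two sets with a prescribed ordering supplied by Definition~\ref{def: phi}) to ensure that the chase terminates at rectangles of $\mathcal{O}$ whose union covers $R$.

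The main obstacle I anticipate is formalizing the termination and joint-covering in the upper-bound step. The corner-overlap property only gives local containment of corner-anchored subrectangles, so one must rule out "hollow" configurations where some interior region of $R$ escapes coverage by the rectangles reached through the chase. To do this, I would define a partial order $\prec$ on $\mathcal{R}''$ in which $R \prec R'$ whenever some corner of $R$ lies strictly inside $R'$ and $R'$ extends $R$ past that corner in both coordinate directions, and verify that this relation is well-founded using the planar embedding $\Phi$ and the bounded out-degree induced by the low-incidence structure. Well-foundedness then forces the chase to terminate at $\prec$-maximal elements, which by construction are exactly the elements of $\mathcal{O}$, completing the upper bound. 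Combining the two bounds gives minimum cover size equal to $|\mathcal{O}|$.
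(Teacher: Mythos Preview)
Your upper-bound argument rests on a misreading of the geometric content of Property~2b. When two axis-aligned rectangles intersect so that their boundaries meet in exactly four points, the configuration is a \emph{crossing} (a ``plus'' shape): each rectangle passes completely through the other, and \emph{no} corner of either rectangle lies inside the other. Your description---``a corner-overlap configuration in which each of the two rectangles has a corner strictly inside the other''---is in fact the two-boundary-point case, which Property~2b explicitly rules out. Consequently the entire ``iterative chase'' you propose (tracking corner-anchored subrectangles, building a well-founded order $\prec$, invoking the $3$-regular incidence structure to force termination) is aimed at a situation that never occurs, and the argument does not establish that $\mathcal{O}$ covers $\mathcal{U}(\mathcal{R}'')$.

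The correct structural consequence of Property~2b is the opposite of what you use: if some corner of $R\in\mathcal{R}''$ lies in another $R'\in\mathcal{R}''$, then the boundaries of $R$ and $R'$ cannot cross at four points (a crossing leaves all corners outside), so by Property~2b they do not cross at all, which forces $R\subsetneq R'$. Hence every $R\notin\mathcal{O}$ is strictly contained in some other rectangle of $\mathcal{R}''$, and following the finite strict-containment chain lands in $\mathcal{O}$; this gives the covering property in one line, with no chase or well-foundedness machinery needed. For the lower bound, note also that the paper proves minimality against \emph{arbitrary} rectangle covers contained in $\mathcal{U}(\mathcal{R}'')$, not just subcovers $\mathcal{S}\subseteq\mathcal{R}''$: the argument is that each corner $c$ of $R^*\in\mathcal{O}$ is a locally extreme point of $\mathcal{U}(\mathcal{R}'')$, so any rectangle in $\mathcal{U}(\mathcal{R}'')$ containing $c$ is already contained in $R^*$, and hence can be replaced by $R^*$ without increasing the cover size. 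Your lower bound, as written, only handles subcovers.
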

\begin{proof}
We state here Property 2b for an instance $(X,\mathcal{R})$ of \textsc{SpecialRect-$3$SC}. It states that the collection $\mathcal{R}$ satisfies the following property:

\textbf{Property 2b: }\label{prop: prop2b}
    ``\text{Every pair of rectangles $R,R' \in \mathcal{R}: R \neq R'$, either intersect at exactly four points } \\\text{between their boundaries or not at all."}
    
In the definition of an instance of \textsc{VGridSpecialRect-$3$SC}, because of the box collection embedding $\phi$ we must have that $\mathcal{R}''=\{\phi(R) \cap \mathbb{Z}^2: R \in \mathcal{R}\}$ of the instance $([2N]^2,\mathcal{R}''\cup H(\mathcal{R}''))$ satisfies Property 2b of a \textsc{SpecialRect-$3$SC} instance.

If a corner is covered, then in order to satisfy Property 2b, $R$ must be a strict subset of another rectangle $R'\in \mathcal{R}''$.

We have just shown that pairs of rectangles in $\mathcal{R}''$ either:
 \begin{enumerate}
     \item Do not intersect.
     \item Have one strictly contained in the other.
     \item Intersect at $4$ points between their boundaries.
 \end{enumerate}
We show that the collection $\mathcal{O}: \mathcal{O}\subseteq \mathcal{R}''$ consisting of all rectangles whose corners are not covered by any other rectangle in $\mathcal{R}''$ is a cover of $\mathcal{U}(\mathcal{R}'')$.

Since the collection $\mathcal{O}$ satisfies that all rectangles in $\mathcal{O}$ have their corners uncovered, we must have that it cannot have any rectangle satisfy Property 2 for pairs of rectangles in $\mathcal{R}''$.  In fact, since we want to minimize the size of the cover, the smaller of the two rectangles $R,R' \in \mathcal{R}'': R \subseteq R'$ cannot belong to $\mathcal{O}$.

\textbf{Minimality: }

    We prove that there is no rectangle cover of $\mathcal{U}(\mathcal{R}'')$ of smaller cardinality than $\mathcal{O}$.

    Say that the subcover $\mathcal{O}: \mathcal{O}\subseteq \mathcal{R}''$ is suboptimal in size. 
    This means some rectangle $R^* \in \mathcal{O}$ is not in any of the true optimal covers of $\mathcal{U}(\mathcal{R}'')$.

    Since $R^*$ does not belong to a true optimal cover, all the points $x \in R^*$ must now be covered by some alternative rectangles (which may or may not be in $\mathcal{R}''$). 

    None of the four corners of $R^*$ are covered by any other rectangle in $\mathcal{R}''$ besides $R^*$. Consider without loss of generality one of the corners $c \in R^*$ with: 
    \begin{equation}
        c=\min_{x \in R^*}\max_{j \in [2]} (x)_j
    \end{equation}
    We notice that the rectangle $R^*$ is maximal. This means that the largest rectangle, with respect to the inclusion operation, that contains corner $c \in R^*$ is $R^*$ itself.

    In order to form a rectangle in $\mathcal{U}(\mathcal{R}'')$ with $c$ as a bottom left corner, we must consider all points above and to the right of $c$. We notice the following:
    \begin{equation}
        (c)_j\leq (y)_j, \forall j\in [2] \Leftrightarrow y \in R^*, \forall y \in \mathcal{U}(\mathcal{R})
    \end{equation}
    Thus, we can say:
    \begin{equation}
        \textsf{BB}(c,y) \subseteq R^*, \forall y \in \mathcal{U}(\mathcal{R}) \text{ s.t. } (c)_j\leq (y)_j, \forall j\in [2]
    \end{equation} 
    In fact, if we maximize over all such $y$, there is a unique maximum. This maximum is $R^*$ itself.

    This means that any rectangle that covers the corner $c \in R^*$, can be replaced by $R^*$ without increasing the size of the cover.
    
    By symmetry, all such corners have this property. Thus, $R^*$ belongs to the optimal cover of $\mathcal{U}(\mathcal{R})$. Since $R^*$ was arbitrary from $\mathcal{O}$, it  must be that any optimal cover of $\mathcal{U}(\mathcal{R})$ has size atleast $\lvert \mathcal{O}\rvert $.
    \end{proof}
    We have just shown that the optimal covering of $\mathcal{U}(\mathcal{R}'')$ is a subcover of $\mathcal{R}''$. Since the complement $H(\mathcal{R}'')$ of $\mathcal{U}(\mathcal{R}'')$ with respect to $[2N]^2$ consists of disjoint rectangles, the optimal subcover of $\mathcal{R}$ for $[2N]^2$ is {$\mathcal{O}\cup H^*$} where $H^*$ is an optimal subcover of $H(\mathcal{R}'')$.
    
We can now design an algorithm that determines a piecewise multivariate polynomial on a voxel grid $[2N]^2$. We need this algorithm to design a PTAS reduction from \textsc{VGridSpecialRect-$3$SC} to $(k,D)$\textsc{-RectLossyVVFCompression}.
\stepcounter{algocf}
\begin{boxalgorithm}[label={alg: multi-poly-construction}]{A Piecewise Multivariate Polynomial to Encode the Solution of \textsc{VGridSpecialRect-$3$SC}}
    \begin{itemize}
        \item[] \textbf{Input: }An instance $([2N]^2,\mathcal{R}')$ of \textsc{VGridSpecialRect-$3$SC}
        \item[] \textbf{Output: } A piecewise multivariate polynomial $f: [2N]^2 \rightarrow \mathbb{R}$.
    \end{itemize}
    \begin{enumerate}
         \item \textbf{For }{$R \in \mathcal{R}'$: }
    \begin{enumerate}
        \item[] \textbf{If }{$\exists R' \in \mathcal{R}'$ and $R'\neq R$ and $R' \cap R \neq \emptyset$: }
        \begin{enumerate}
        \item[] $\mathcal{I}\gets \mathcal{I}\cup R$
        \end{enumerate}
        \item[] \textbf{Else }
        \begin{enumerate}
        \item[] $\mathcal{I}_{\mathcal{R}'}^c\gets \mathcal{I}_{\mathcal{R}'}^c\cup \{R\}$
        \end{enumerate}
    \end{enumerate}
        \item Let $f(x,y)\gets 0, \forall (x,y) \in {\mathcal{I}}$
        \item \textbf{For }{$(i,R_i) \in \mathbb{N}\times \mathcal{I}_{\mathcal{R}'}^c$ (In Order): }
        \begin{enumerate}
            \item[] $f\mid_{R_i}\gets i$
        \end{enumerate}
        \item \textbf{Return }f
    \end{enumerate}
\end{boxalgorithm}
Algorithm \ref{alg: multi-poly-construction} collects the following two disjoint sets:
\begin{enumerate}
    \item The set $\mathcal{I}$ is the union of all rectangles that intersect with some other rectangle from $\mathcal{R}'$.
    \item The collection $\mathcal{I}^c_{\mathcal{R}'}$ consists of all rectangles from $\mathcal{R}'$ not intersecting with $\mathcal{I}$.
\end{enumerate}
It then sets the multivariate piecewise polynomial $f$ to $0$ on $\mathcal{I}$ and to some different integer on each of the remaining rectangles. We have the following properties of this Algorithm.
\begin{proposition}
   Algorithm \ref{alg: multi-poly-construction} correctly outputs a piecewise multivariate polynomial. It has a run time of $O(N^2)$. 
\end{proposition}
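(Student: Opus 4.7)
The plan is to verify correctness by arguing that the three assignments performed by Algorithm \ref{alg: multi-poly-construction} produce a well-defined function whose pieces are genuinely disjoint and each are polynomials, and then to account for the cost of each of the three loops.

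First I would verify that the output is a piecewise multivariate polynomial. For this I need two things: (a) the domain on which $f$ is defined covers all of $[2N]^2 \cap \mathcal{U}(\mathcal{R}'')$ (or more precisely, $\bigcup_{R\in\mathcal{R}'} R$, which by the voxel grid cover property of \textsc{VGridSpecialRect-$3$SC} is $[2N]^2$); and (b) the pieces on which the polynomial assignments are made are pairwise disjoint, so that the value of $f$ on each point is unambiguous. For (a), every $R \in \mathcal{R}'$ is classified by the loop in Step 1 into either contributing to $\mathcal{I}$ or to $\mathcal{I}^c_{\mathcal{R}'}$, so every point of $[2N]^2$ lies in some assigned piece. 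For (b), by construction $\mathcal{I}^c_{\mathcal{R}'}$ consists of rectangles that do not intersect any other rectangle in $\mathcal{R}'$, so in particular they are disjoint both from $\mathcal{I}$ and from each other. Thus the assignments $f \mid_{\mathcal{I}} \equiv 0$ and $f \mid_{R_i} \equiv i$ do not conflict, and $f$ is a well-defined piecewise constant function. Constants are trivially multivariate polynomials of degree $0$, so $f$ is a piecewise multivariate polynomial in the sense of Section \ref{sec: preliminaries}.

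Next I would bound the runtime by examining each of the three loops in turn. We have $|\mathcal{R}'|=|\mathcal{R}''|+|H(\mathcal{R}'')|$; since $|\mathcal{R}''|=|\mathcal{R}|=5m=O(N)$ and by Lemma \ref{lemma: cover-the-complement} $|H(\mathcal{R}'')|\le 4N=O(N)$, we have $|\mathcal{R}'|=O(N)$. Step 1 performs, for each $R \in \mathcal{R}'$, a pairwise intersection check against all other rectangles in $\mathcal{R}'$; each check is $O(1)$ (comparing two axis-aligned boxes in $\mathbb{R}^2$), giving a total of $O(|\mathcal{R}'|^2)=O(N^2)$. Step 2 writes the value $0$ on the set $\mathcal{I} \subseteq [2N]^2$, which has at most $(2N)^2=O(N^2)$ voxel coordinates, for a cost of $O(N^2)$. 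Step 3 iterates over the at most $O(N)$ rectangles in $\mathcal{I}^c_{\mathcal{R}'}$, and for each writes a constant value on a rectangle of area at most $(2N)^2$, but since the rectangles in $\mathcal{I}^c_{\mathcal{R}'}$ are pairwise disjoint the total number of voxel assignments across Step 3 is at most $(2N)^2=O(N^2)$. Summing the three contributions gives $O(N^2)$ overall.

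The only subtlety I anticipate is being precise about what ``writing'' $f$ on a region costs in the output representation: if $f$ is stored symbolically as a pair (region, constant) rather than as a per-voxel table, then Steps 2 and 3 only need $O(|\mathcal{R}'|)=O(N)$ work and the dominant cost becomes the $O(N^2)$ pairwise intersection test of Step 1. Either way, the final bound is $O(N^2)$, so this representational choice does not affect the stated complexity. The correctness argument hinges only on the disjointness observation above, which follows immediately from the defining property of $\mathcal{I}^c_{\mathcal{R}'}$.
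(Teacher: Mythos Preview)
Your proposal is correct and follows essentially the same approach as the paper: disjointness of $\mathcal{I}$ and the rectangles in $\mathcal{I}^c_{\mathcal{R}'}$ for correctness, and the $O(N^2)$ pairwise intersection test in Step~1 as the dominant cost. Your direct disjointness observation is in fact cleaner than the paper's induction-with-overwriting argument, and the paper explicitly adopts the symbolic representation you flag in your final paragraph, so Steps~2--3 cost $O(N)$ there; either way the bound is the same.
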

\begin{proof}
\textbf{Correctness: }

By the If-Else statement of Algorithm \ref{alg: multi-poly-construction}, the set $\mathcal{I}$ and all rectangles from the collection $\mathcal{I}_{\mathcal{R}'}^c$ must be disjoint.

Furthermore, inside the For-loop the function $f$ is  either obtaining an integer value on a new coordinate $(x,y) \in [2N]^2$ or is overwriting  values already assigned by enumerating the rectangles in $\mathcal{I}_{\mathcal{R}'}^c$. Due to the overwriting, the function $f$ is well defined. Thus,  the function $f$ always maintains that it is defined on a partition of a subset of $[2N]^2$. 

We can prove this by induction:

\textbf{Base Case: }

The set $f(\mathcal{I})=0$ is a well defined partition.

\textbf{Induction Step: }

Let $I_n$ be the domain of $f$ defined after $n$ steps of the For-loop.

For the next rectangle $R_{n+1}$. Following the overwrite step inside the For-Loop, $R_{n+1}\gets n+1$ all previous values from $I_n\setminus R_{n+1}$ stay the same. Thus, we have that $I_{n+1}$ is a partition.

    We can check that the output $f$ is a piecewise polynomial since it is a constant integer on a partition.

\textbf{Complexity: }

The For loop takes $N$ time, the inner If takes $O(N)$ time without a data structure.  

For the assignment to $f$, we can view this as a concatentation of  $\lvert \mathcal{I}_{\mathcal{R}'}^c\rvert$ many piecewise functions to a string representation of the piecewise multivariate polynomial. This concatenation process takes time $O(1)$ for a rectangle and is performed $O(N)$ times.  

Thus the algorithm takes $O(N^2)$ in total.
\end{proof}
\begin{remark}
Algorithm \ref{alg: multi-poly-construction} can be run in time $O(N\log(N))$ with an appropriate data structure, such as an interval tree, for checking rectangle intersection queries. See~\cite{edelsbrunner1983new, bentley1980optimal} for rectangle intersection algorithms. 
\end{remark}
\begin{theorem}\label{thm: vgrid2rectlossyvfcomp}
\begin{equation}\textsc{VGridSpecialRect-3SC} \leq_{\text{PTAS}}(k,D)\textsc{-RectLossyVVFCompression} \text{ if $k,D\geq 2$.}
     \end{equation}
\end{theorem}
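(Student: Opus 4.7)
The plan is to construct a PTAS reduction from \textsc{VGridSpecialRect-3SC} to $(2,2)\textsc{-RectLossyVVFCompression}$ and then lift to all $k,D \geq 2$ via the subproblem inclusion of Proposition \ref{prop: k'kD'Dsubprob}. The reduction map $m$ sends an instance $([2N]^2, \mathcal{R}')$ to a triple $(X, f, \varepsilon = 0.1)$, where $X$ is the $(2,2)$-voxelized vector field whose support is $[2N]^2$ under any injective vector field $V$, and $f$ is produced by Algorithm \ref{alg: multi-poly-construction}. By construction, $f$ takes value $0$ on the union $\mathcal{I}$ of all intersecting rectangles from $\mathcal{R}'$ and distinct positive integers $i \in \mathbb{N}$ on each isolated rectangle $R_i \in \mathcal{I}^c_{\mathcal{R}'}$. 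The map $m$ is polynomial time by the runtime analysis of Algorithm \ref{alg: multi-poly-construction}.

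The core step is to show that the optimum of the compression instance has box count exactly $|\mathcal{O}| + |\mathcal{I}^c_{\mathcal{R}'}|$, matching the \textsc{VGridSpecialRect-3SC} optimum via Proposition \ref{prop: H-R-relationship}. Because $f$-values across distinct pieces differ by at least $1$ while $2\varepsilon = 0.2 < 1$, Proposition \ref{prop: max-VR-threshold} forces every box in any feasible cover to lie entirely within a single piece of $f$. The covering thus decomposes: covering $\mathcal{I} = \mathcal{U}(\mathcal{R}'')$ requires exactly $|\mathcal{O}|$ boxes by Proposition \ref{prop: subcover-opt}, while each isolated rectangle requires its own single box.

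For the recovery map $g$, partition any feasible cover $\mathcal{S}$ by the pieces of $f$: boxes lying in a piece $R_i \in \mathcal{I}^c_{\mathcal{R}'}$ are mapped to the containing isolated rectangle $R_i \in \mathcal{R}'$, and boxes lying in $\mathcal{I}$ are swapped for covering rectangles from $\mathcal{R}''$ of the same or smaller count using the containment argument in the proof of Proposition \ref{prop: subcover-opt} (each maximal box in $\mathcal{I}$ extends to a rectangle in $\mathcal{R}''$ without cardinality loss); finally append $H(\mathcal{R}'')$ to cover the complement. This produces a valid \textsc{VGridSpecialRect-3SC} solution in polynomial time. By Proposition \ref{prop: size-boxcover-compression}, the codeword size is of the form $A|\mathcal{S}| + B$ with $A, B$ depending only on the input, so a $(1+\alpha(\epsilon))$-approximation of compression ratio transfers to $|\mathcal{S}| \leq (1+\alpha(\epsilon))|\mathcal{S}^*| + O(1)$ on box count. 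Using the $\Omega(N)$ lower bound on $|\mathcal{S}^*|$ from Lemma \ref{lemma: lower-bound} to absorb the additive $O(1)$, a calibrated $\alpha(\epsilon) = \Theta(\epsilon)$ (analogous to the chain-of-inequalities calculation in Theorem \ref{thm: R2ZPTAS-reduction}) yields the required $(1+\epsilon)$-approximation on the \textsc{VGridSpecialRect-3SC} side.

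The main obstacle is managing the additive overhead $B$ in the codeword size formula, which prevents compression ratio and box count from being directly proportional. This is handled by exploiting the $\Omega(N)$ lower bound on $|\mathcal{S}^*|$ so that the constant-sized overhead becomes negligible in the asymptotic approximation ratio. A secondary technical concern is that the recovery map must output a cover using only rectangles from $\mathcal{R}'$, even when an approximate compression algorithm returns arbitrary boxes; Proposition \ref{prop: subcover-opt} supplies the needed structural swap from arbitrary boxes in $\mathcal{I}$ to rectangles in $\mathcal{R}''$ without cardinality loss, closing the reduction.
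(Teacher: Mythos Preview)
Your approach follows the same skeleton as the paper: the reduction map $m$ via Algorithm \ref{alg: multi-poly-construction}, the decomposition of feasible covers by the level sets of $f$ using Proposition \ref{prop: max-VR-threshold}, and the identification of the optimum with $\mathcal{O}\cup H(\mathcal{R}'')$ via Proposition \ref{prop: subcover-opt}. The paper, however, is considerably more cavalier: it sets $g(x,s)=s$ and $\alpha(\epsilon)=\epsilon$ outright, arguing only that the two problems share the same optimum (in box-count), and it treats $|\mathcal{S}|$ rather than the compression ratio as the cost of a compression solution throughout the PTAS calculation. Your extra work on the recovery map (swapping arbitrary boxes in $\mathcal{I}$ for rectangles of $\mathcal{R}''$) and on the additive term $B$ in Proposition \ref{prop: size-boxcover-compression} is addressing genuine gaps the paper leaves implicit; in that sense your argument is more rigorous, though the swap step as you state it (``each maximal box in $\mathcal{I}$ extends to a rectangle in $\mathcal{R}''$'') is not quite what Proposition \ref{prop: subcover-opt} gives---what it gives is that \emph{any} box cover of $\mathcal{U}(\mathcal{R}'')$ has cardinality $\geq |\mathcal{O}|$, so you may simply replace the whole sub-cover by $\mathcal{O}$.

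One small correction: you write $\mathcal{I}=\mathcal{U}(\mathcal{R}'')$, but these are not equal in general. Algorithm \ref{alg: multi-poly-construction} puts into $\mathcal{I}$ only those rectangles of $\mathcal{R}'$ that intersect some other rectangle; isolated rectangles of $\mathcal{R}''$ (and all of $H(\mathcal{R}'')$) land in $\mathcal{I}^c_{\mathcal{R}'}$. The paper handles this by writing $\mathcal{O}=\mathcal{O}'\cup C(\mathcal{I})$ where $\mathcal{O}'$ is the optimal cover of $\mathcal{I}$ and $C(\mathcal{I})$ collects the isolated rectangles in $\mathcal{U}(\mathcal{R}'')\setminus\mathcal{I}$. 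This does not change your box-count accounting, but the identification as written is incorrect and should be adjusted.
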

\begin{proof}
    We define the following PTAS reduction from \textsc{VGridSpecialRect-3SC} to \\$(2,2)\textsc{-RectLossyVVFCompression}$:
    \begin{enumerate}
        \item Let the reduction map $m: I_{\textsc{VGridSpecialRect-3SC}} \rightarrow I_{(2,2)\textsc{-RectLossyVVFCompression}}$ map as follows:
    \begin{equation}
        m: ([2N]^2,\mathcal{R}')\mapsto (\Psi, f_{\mathcal{R}'}, 0.1)
    \end{equation}
    where $\Psi: [2N]^2 \rightarrow [2N]^2$ is the identity map on $[2N]^2 \subseteq \mathbb{Z}^2$. The function $f_{\mathcal{R}'}: [2N]^2 \rightarrow \mathbb{R}$ is as determined by Algorithm \ref{alg: multi-poly-construction}. 
    \item Let $\alpha(\bullet)\triangleq {\bullet}$
    \item Let $x=([2N]^2,\mathcal{R}')$ be an instance of $(2,2)$\textsc{-RectLossyVVFCompression}. 
    
    Let $s=\mathcal{S}$ be a feasible solution to the $(2,2)$\textsc{-RectLossyVVFCompression} on instance $m(x)$. Let
    \begin{equation}
         g(x,s)\triangleq s
    \end{equation}
    \end{enumerate}

\textbf{The Optimal Solution Doesn't Change: }

Certainly the optimal solution to $x=([2N]^2,\mathcal{R}')$ from \textsc{VGridSpecialRect-$3$SC} is $\mathcal{O} \cup H^*$. According to Proposition \ref{prop: subcover-opt}, this consists of the union of the optimal cover of $\mathcal{U}(\mathcal{R}'')$ and $H(\mathcal{R}'')$. 

According to Algorithm \ref{alg: multi-poly-construction}, we have that we can construct a piece-wise multivariate polynomial where the rectangles from $\mathcal{I}$ take on value $0$ and all other rectangles are assigned some positive integer value.

According to Proposition \ref{prop: max-VR-threshold}, $\textsf{iVR}^{d_f}_{2\varepsilon}(\Psi)$ contains the optimal solution. The rectangles in the optimal solution cannot have diameter larger than $2\varepsilon$. We set the distortion bound to $\varepsilon=0.1$ through the reduction map $m$. Thus, $2\varepsilon=0.2$ is smaller than the closest function value difference of $1$ between any adjacent regions. We can thus say that the rectangles from  $H(\mathcal{R}'')$ cannot combine to form a feasible cluster of $\textsf{iVR}^{d_f}_{2\varepsilon}(\Psi)$. This means the optimal solution to \textsc{VGridSpecialRect-$3$SC} consists of the optimal cover of $\mathcal{I}$ and $H(\mathcal{R}'')$. 

Certainly $\mathcal{I} \subseteq \mathcal{U}(\mathcal{R}'')$ since 
$H(\mathcal{R}'')=[2N]^2\setminus \mathcal{U}(\mathcal{R}'') \subseteq [2N]^2\setminus \mathcal{I}$. Certainly all rectangles from $H(\mathcal{R}'')$ are disjoint and thus belong to $[2N]^2\setminus \mathcal{I}$. 

We also know that $\mathcal{U}(\mathcal{R}'') \setminus \mathcal{I}$ consists of disjoint rectangles according to the definition of $\mathcal{I}$. We know that a cover of a union of rectangles, no two intersecting except at their boundaries, is the collection of rectangles themselves. 

Thus, $\mathcal{O}$ consists of the union of an optimal cover $\mathcal{O}'$ of $\mathcal{I}$ and the collection \begin{equation}
    C(\mathcal{I})=\{R \text{ is a rectangle}:R \subseteq \mathcal{U}(\mathcal{R}'') \setminus \mathcal{I}\}
\end{equation}
Since $\mathcal{O}$ is of minimum size as a cover of $\mathcal{U}(\mathcal{R}'')$, $\mathcal{O}'$ is also of minimum size. 

Thus, the optimal cover of $m(x)=(\Psi,f_{\mathcal{R}'},0.1)$ is  
\begin{equation}
    \mathcal{O}' \cup C(\mathcal{I})\cup H(\mathcal{R}'')=\mathcal{O}\cup H(\mathcal{R}'')
\end{equation}
This is the same as the solution to $x$.

   \textbf{The Preservation of the PTAS from \textsc{SpecialRect-$3$SC} to \textsc{VGridSpecialRect-3SC}: }
    
    Let $\mathcal{A}$ be a $(1+\alpha(\epsilon))$-approximation algorithm to $(2,2)$\textsc{-RectLossyVVFCompression}. This means: 
\begin{equation}
    \lvert \mathcal{A}(m(x))\rvert \leq (1+\alpha(\epsilon))\lvert \mathcal{S}^*\rvert 
\end{equation}
where $\mathcal{S}^*$ is an optimal minimizing solution to $(2,2)$\textsc{-RectLossyVVFCompression}. 

We then have that:
\begin{equation}
    \lvert g(x, \mathcal{A}(m(x))) \rvert =\lvert \mathcal{A}(m(x))\rvert
    \leq (1+\epsilon)\lvert (\mathcal{S}')^*\rvert
\end{equation}
where $\lvert \mathcal{S}^*\rvert =\lvert (\mathcal{S}')^*\rvert $ according to what we have proven above.

According to Proposition \ref{prop: k'kD'Dsubprob}, the instances of $(2,2)$\textsc{-RectLossyVVFCompression} can be viewed as instances of $(k,D)$\textsc{-RectLossyVVFCompression} without changing the solution. Thus, there is a PTAS reduction from $(2,2)$\textsc{-RectLossyVVFCompression} to   $(k,D)$\textsc{-RectLossyVVFCompression}:
\begin{enumerate}
    \item Let the reduction map be the embedding map given in Proposition \ref{prop: k'kD'Dsubprob}.
    \item Let $\alpha(\bullet)\triangleq \bullet$
    \item Let $g(\bullet,s)\triangleq \{(\pi^{0}_1(R),\pi^{0}_2(R)): R \in s\}$. 
\end{enumerate}
Certainly for an algorithm $\mathcal{A}$ solving $(k,D)$\textsc{-RectLossyVVFCompression}, if:
\begin{equation}
    \lvert \mathcal{A}(m(x)) \rvert \leq (1+\alpha(\epsilon))\lvert \mathcal{S}^*\rvert
\end{equation}
then:
\begin{equation}
    \lvert g(x, \mathcal{A}(m(x))) \rvert =\lvert \mathcal{A}(m(x))\rvert
    \leq (1+\epsilon)\lvert (\mathcal{S}')^*\rvert
\end{equation}
where the first equality follows by injectivity of the projection map $g$ on $\mathcal{A}(m(x))$ and the last inequality follows from $\lvert \mathcal{S}^*\rvert =\lvert (\mathcal{S}')^*\rvert $. 

Composing the two PTAS reductions gives the desired result.
\end{proof}
We can now prove that $(k,D)$\textsc{-RectLossyVVFCompression} is APX-Hard. 
\begin{theorem}    $(k,D)$\textsc{-RectLossyVVFCompression} is APX-Hard for $k,D\geq 2$
\end{theorem}
\begin{proof}
This follows by composing the two reductions from Theorem \ref{thm: R2ZPTAS-reduction} and Theorem \ref{thm: vgrid2rectlossyvfcomp}  along with the fact that \textsc{SpecialRect-$3$SC} is APX-Hard.
\end{proof}
\ifshowline
\section*{Acknowledgment}
The author was supported by a postdoctoral appointment with Hanqi Guo.
\fi
 \bibliographystyle{alphaurl} 
\bibliography{citations} 
\clearpage
\appendix
\section{Appendix}
\subsection{Metric Spaces}\label{sec: metric-spaces}
\begin{definition}
    On a set $X$, a metric is defined as a function $d: X \times X \rightarrow \mathbb{R}$ with the following three properties:
\begin{enumerate}
    \item $d(x,x)=0, \forall x \in X$
    \item $d(x,y)=d(y,x), \forall x,y \in X$
    \item $d(x,z) \leq d(x,y)+d(y,z), \forall x,y,z \in X$
\end{enumerate}
\end{definition}
\begin{definition}
   For $p\in \mathbb{N}$, let the $l_p$-norm of $\mathbb{R}^{D}, D \in \mathbb{N}$ be defined as:
\begin{equation}
    \lvert (x_1,...,x_D) \rvert_p \triangleq \sqrt[{p}]{\sum_{i=1}^{D} x_i^p}
\end{equation} 
\end{definition}
\begin{observation}
The $l_p$-norm forms a metric called the $l_p$-metric as follows:
\begin{equation}
    (x,y) \mapsto \lvert x-y \rvert_p
\end{equation}
\end{observation}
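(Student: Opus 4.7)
The plan is to verify directly that the map $d_p(x,y) \triangleq |x-y|_p$ satisfies the three axioms of a metric from the definition given just above in the excerpt. Since $d_p$ is built from the $l_p$-norm, the proof reduces to checking: (i) $|z|_p \geq 0$ with equality iff $z=0$, (ii) $|-z|_p = |z|_p$, and (iii) $|z+w|_p \leq |z|_p + |w|_p$ (the Minkowski/triangle inequality for the norm). The metric axioms then follow by the substitutions $z = x-y$, etc.

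First I would dispatch the easy axioms. For axiom 1, note that $|x-x|_p = \sqrt[p]{\sum_i 0^p} = 0$. Conversely, if $|x-y|_p = 0$ then each summand $(x_i - y_i)^p$ must vanish, forcing $x_i = y_i$ for all $i$, hence $x = y$. For axiom 2, observe
\begin{equation}
|x-y|_p = \sqrt[p]{\sum_{i=1}^{D}(x_i - y_i)^p} = \sqrt[p]{\sum_{i=1}^{D}(y_i - x_i)^p} = |y-x|_p,
\end{equation}
using that raising to the $p$-th power eliminates the sign (where I would first replace $x_i^p$ by $|x_i|^p$ to make this rigorous for odd $p$ as well; the definition given in the paper suppresses absolute values but the intended meaning is the standard $l_p$-norm).

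The substantive step is the triangle inequality, which will be the main obstacle since it is exactly Minkowski's inequality. I would state and invoke Minkowski's inequality: for vectors $u, v \in \mathbb{R}^D$ and $p \geq 1$,
\begin{equation}
|u + v|_p \leq |u|_p + |v|_p.
\end{equation}
Applying this with $u = x - y$ and $v = y - z$ yields $(x - z) = u + v$ and hence
\begin{equation}
|x - z|_p = |u + v|_p \leq |u|_p + |v|_p = |x - y|_p + |y - z|_p,
\end{equation}
which is exactly axiom 3.

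For completeness I would sketch Minkowski's proof via H\"older's inequality with conjugate exponent $q = p/(p-1)$: expand $|u+v|_p^p = \sum_i |u_i + v_i| \cdot |u_i + v_i|^{p-1}$, bound $|u_i + v_i| \leq |u_i| + |v_i|$, split into two sums, apply H\"older to each, and divide through by $|u+v|_p^{p-1}$. H\"older in turn follows from Young's inequality $ab \leq a^p/p + b^q/q$. The only edge case is $p = 1$, where the triangle inequality is immediate from $|u_i + v_i| \leq |u_i| + |v_i|$ summed coordinatewise, so no H\"older machinery is needed. Since the paper only states the observation for $p \in \mathbb{N}$ and does not develop integration theory, citing Minkowski as a standard fact (or giving the short H\"older-based proof) should suffice.
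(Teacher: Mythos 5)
Your proof is correct: checking the three axioms from the paper's metric definition and invoking Minkowski's inequality (with the H\"older-based sketch and the $p=1$ case handled separately) is the standard argument, and the paper itself states this observation without any proof, treating it as a known fact. You were also right to flag that the paper's displayed definition of the $l_p$-norm omits absolute values; replacing $x_i^p$ by $\lvert x_i\rvert^p$ is needed for odd $p$, and your symmetry and positivity arguments are rigorous once that reading is adopted.
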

\begin{example}
    The Euclidean distance on $\mathbb{R}^D$ is the metric induced by the $l_2$-norm.
\end{example}
\subsection{Vector Algebra}\label{sec: vector-algebra}
\begin{definition}
    An abelian group $(S,+)$ is a set $S$ with the following properties:
    \begin{enumerate}
        \item $0 \in S$
        \item $\forall x \in S, \exists! y \in S: x+y=0$
        \item $\forall x,y \in S: x+y \in S$
        \item $\forall x,y \in S: x+y=y+x$
    \end{enumerate}
\end{definition}
\begin{definition}
A $\mathbb{Z}$-module is a set $M$ with two operations $(+,\cdot)$ that satisfy the following two properties:
\begin{enumerate}
\item $M$ is an abelian group
\item $\cdot: \mathbb{Z} \times M \rightarrow M$ acts as follows:
\begin{equation}
    r\cdot x= \overbrace{x+\cdots+ x}^{r\text{ times}}
\end{equation}
\end{enumerate}
\end{definition}
\begin{definition}
Let $M$ be a $\mathbb{Z}$-module.

    A set of elements $m_1,...,m_k \in M$ are \textbf{dependent} if:
    \begin{equation}
        \exists r_1,...,r_k \in \mathbb{Z}: \sum_{i=1}^k r_i m_i=0
    \end{equation}
    If no such $r_i \in \mathbb{Z}$ exist, then the $m_i \in M: i=1,...,k$ are \textbf{independent}. 
\end{definition}
\begin{example}
    The integers form a $\mathbb{Z}$-module over itself.
\end{example}
\begin{proposition}\label{prop: span-Zk}
The coordinate space $\mathbb{Z}^k$ forms a $\mathbb{Z}$-module and is generated by forming integer combinations with the standard generating vectors $e_q, q \in [k]$ as follows:
    \begin{equation}
        \mathbb{Z}^k=\{\sum_{q\in [k]}v_q e_q: v_q \in \mathbb{Z}\}
    \end{equation}
\end{proposition}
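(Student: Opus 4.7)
The plan is to verify the two claims of the proposition in sequence: first that $\mathbb{Z}^k$ carries a $\mathbb{Z}$-module structure under componentwise operations, and second that the standard vectors $e_q$ span it.

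For the module structure, I would first define componentwise addition $(x_1,\ldots,x_k) + (y_1,\ldots,y_k) \triangleq (x_1+y_1,\ldots,x_k+y_k)$ using the abelian group structure of $\mathbb{Z}$ over itself. Commutativity and associativity would then descend coordinatewise from $(\mathbb{Z},+)$. The zero element is $(0,\ldots,0)$, and the unique additive inverse of $(x_1,\ldots,x_k)$ is $(-x_1,\ldots,-x_k)$; uniqueness in each coordinate follows from uniqueness of inverses in $\mathbb{Z}$. This establishes condition (1) of a $\mathbb{Z}$-module. For condition (2), I would define $r \cdot (x_1,\ldots,x_k) \triangleq (rx_1,\ldots,rx_k)$ and verify by induction on $|r|$ that this equals the $r$-fold sum $\overbrace{(x_1,\ldots,x_k)+\cdots+(x_1,\ldots,x_k)}^{r\text{ times}}$; the base case $r = 1$ is immediate and the inductive step reduces to componentwise addition.

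For the generating claim, I would prove set equality in both directions. For the inclusion $\mathbb{Z}^k \subseteq \{\sum_{q \in [k]} v_q e_q : v_q \in \mathbb{Z}\}$, take an arbitrary $v = (v_1,\ldots,v_k) \in \mathbb{Z}^k$ and observe that, since $e_q$ is the vector with a $1$ in coordinate $q$ and $0$ elsewhere, the scalar product $v_q e_q$ equals the vector with $v_q$ in coordinate $q$ and $0$ elsewhere (by the repeated-addition formula from condition (2)). Componentwise addition then gives $\sum_{q \in [k]} v_q e_q = (v_1,\ldots,v_k) = v$. For the reverse inclusion, any such sum has integer entries in each coordinate because closure of $\mathbb{Z}$ under addition and multiplication ensures each component lies in $\mathbb{Z}$, so the sum lies in $\mathbb{Z}^k$.

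There is no real obstacle here; the proposition is essentially a bookkeeping check that the standard module structure on a finite product satisfies the axioms given earlier in the Preliminaries, together with the familiar coordinate-decomposition identity $v = \sum_q v_q e_q$. The only place requiring care is the scalar-multiplication axiom (2), which the paper defines via repeated addition rather than as an external action, so I would make the inductive verification of $r \cdot v = (rv_1,\ldots,rv_k)$ explicit to avoid conflating the two definitions; once that is in place, the generating statement is immediate.
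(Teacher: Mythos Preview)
Your proposal is correct and thorough. The paper itself states this proposition without proof, treating it as an elementary fact about $\mathbb{Z}$-modules; your explicit verification of the abelian-group axioms, the repeated-addition scalar action, and the two-sided set inclusion for the span is more than the paper supplies and is entirely sound.
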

Due to Proposition \ref{prop: span-Zk}, for each vector $v  \in \mathbb{Z}^k$ we denote $v_q \in \mathbb{Z}$ as the $q$-th entry of $v$.

\subsection{Languages and Problems}
\begin{definition}
A language $L$ is a set of strings in some alphabet $\Sigma$. If not specified, we assume $\Sigma=\{0,1\}$, the set of bits. We assume that any data can have some encoding into a bit string. 
\end{definition}
\begin{definition}
    A prefix-free language $L$ is a language where no string $x \in L: x=zy$ ($z$ concatenated with $y$ on the right hand side) has $z \in L$.
\end{definition}
\begin{definition}
    A Turing machine is a $7$ tuple $M=\langle Q,\Gamma ,b,\Sigma ,\delta ,q_{0},F \rangle$ where:
    \begin{enumerate}
        \item $Q$ is a finite, non-empty set of states.
        \item $\Gamma$ is a finite, non-empty set of tape alphabet symbols;
        \item $b\in \Gamma$ is the blank symbol.
        \item $\Sigma \subseteq \Gamma \setminus \{b\}$
        \item $\delta :(Q\setminus F)\times \Gamma \rightarrow Q\times \Gamma \times \{L,R\}$ maps a nonterminal state to another state and moves either left (L) or right (R) on its tape. If the input pair is not recognized, terminate. 
        \item $q_0 \in Q$ is the initial state. 
        \item $F\subseteq Q$ is the set of final states or accepting states. The initial tape contents is said to be accepted by $M$ if it eventually halts in a state from $F$.
    \end{enumerate}
\end{definition}
\begin{definition}
A decidable language $L$ is a language that has some Turing machine $M$ so that for any string $x$, membership of $x \in L$ can be decided in finite time.  
\end{definition}
\begin{definition}
An undecidable language is a language that is not decidable.
\end{definition}
\begin{definition}
    A \textbf{Turing reduction} from language $L_1$ to language $L_2$, denoted $L_1 \leq_T L_2$ means that to decide membership in $L_1$ by use of a membership oracle for deciding membership in $L_2$.
\end{definition}
\begin{definition}
    For two languages $L_1,L_2$ if $L_1\leq_T L_2$ and $L_2\leq_T L_1$, then $L_1$ and $L_2$ are called Turing equivalent.
\end{definition}
\subsection{Kolmogorov Complexity and Universal Turing Machines}
\begin{definition}
    A universal Turing machine $U_T$ is a Turing machine that accepts inputs of the form:
\begin{equation}
    P;x
\end{equation}
where $P$ is a program and $x$ is an input string and outputs the output of $P$ on $x$:
\begin{equation}
U_T(P;x)=P(x)
\end{equation}
\end{definition}
\begin{definition}
    A universal Turing machine is called a prefix universal Turing machine if the language it accepts is prefix-free.
\end{definition}
\begin{definition}
The conditional Kolmogorov complexity for a pair of bit strings $(x,y)$ is the shortest length code $Q^*$ where
\begin{equation}
    U_T(Q^*)=U_T(P;x)=y
\end{equation}
This is denoted by:
\begin{equation}
    K_T(y\mid x)\triangleq \textsf{size}(Q^*)
\end{equation}
\end{definition}
\begin{lemma}\label{lemma: undecidable-kolmogorov}
Let $\mathcal{X}$ be a set of binary codes and let $x \in \mathcal{X}$.

If $\mathcal{X}$ is infinite, it is undecidable to compute the optimal $Q^*$ so that:
\begin{equation}
    U_T(Q^*(x,\hat{x}))=U_T(P;x)=\hat{x}
\end{equation}
for some $\hat{x} \in \mathcal{X}$.
\end{lemma}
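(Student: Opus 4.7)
My plan is to prove the lemma by contradiction, via a Berry-paradox style diagonalization adapted from the classical proof that Kolmogorov complexity is uncomputable.

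First, I would assume toward contradiction that there is a total algorithm $A$ which, given any $x$, outputs the optimal $Q^*$ together with a witness $\hat{x} \in \mathcal{X}$ satisfying $U_T(Q^*) = \hat{x}$. From $A$ I would extract a total computable function $\kappa(\hat{y}, x)$ returning the conditional Kolmogorov complexity $K_T(\hat{y} \mid x)$ for every $\hat{y} \in \mathcal{X}$. One way to do this is to dovetail a simulation of $U_T$ over all programs of increasing length and observe which outputs fall in $\mathcal{X}$; alternatively, one runs $A$ on restricted feasible sets $\{\hat{y}\}$ or $\mathcal{X} \setminus F$ for finite $F$, using the optimality guarantee to pin down the complexity of each individual target. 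Since $\mathcal{X}$ is a set of binary codes, it is at worst recursively enumerable, so the filtering is effective.

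Because $\mathcal{X}$ is infinite and there are fewer than $2^{N+1}$ programs of length at most $N$ for any fixed $N$, only finitely many elements of $\mathcal{X}$ can satisfy $K_T(\hat{y} \mid x) \leq N$. Hence for every $N$ there exists $\hat{y} \in \mathcal{X}$ with $\kappa(\hat{y}, x) > N$. I would then build a fixed small program $P^*$ which, on input $(x, N)$, enumerates $\mathcal{X}$ and returns the first $\hat{y}$ for which $\kappa(\hat{y}, x) > N$; by the unboundedness observation $P^*$ always halts. The total description $(P^*; x; N)$ on the prefix universal Turing machine has size at most
\begin{equation}
c + \textsf{size}(x) + O(\log N),
\end{equation}
where $c$ is a constant depending only on $U_T$ and on the fixed descriptions of $A$ and of the enumeration procedure for $\mathcal{X}$.

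Choosing $N$ large enough that $c + \textsf{size}(x) + O(\log N) < N$, which is possible because the left-hand side grows only logarithmically in $N$, gives the desired contradiction: $U_T$ then produces an element $\hat{y} \in \mathcal{X}$ of conditional complexity strictly greater than $N$ via a program of length strictly less than $N$, violating the definition of $K_T$. The main obstacle I expect is the reduction step that turns set-wide optimality of $Q^*$ into pointwise computability of $K_T(\hat{y} \mid x)$ for arbitrary $\hat{y} \in \mathcal{X}$; if $A$ only reports the minimizing pair and not complexities of other elements, one must either instrument $A$ with restricted feasible sets, or diagonalize directly against $A$'s output by asking $A$ to solve instances in which the ``easy'' witnesses have been surgically removed from $\mathcal{X}$. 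Either variant funnels into the same Berry-style contradiction, so I would adopt whichever formalization best matches the prefix-free conventions fixed in the paper.
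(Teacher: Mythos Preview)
Your approach is essentially the same Berry-paradox argument the paper gives: count programs to show only finitely many $\hat{x}$ can have $K_T(\hat{x}\mid x)<N$, then build a short program that searches for one with complexity exceeding $N$, obtaining a contradiction. Two minor notes: since $x$ is supplied as auxiliary input in conditional complexity, it need not be encoded inside $P^*$, so the relevant bound is $c+O(\log N)$ rather than $c+\textsf{size}(x)+O(\log N)$ (your weaker bound still suffices); and the reduction obstacle you flag---passing from a set-minimizing oracle to pointwise computability of $K_T(\hat{y}\mid x)$---is in fact glossed over in the paper's own proof, which simply treats $Q^*(x,\hat{x})$ as available for each pair and runs the textbook argument directly.
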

\begin{proof}
The length $\textsf{size}(Q^*(x,\hat{x}))$ is exactly $K_T(\hat{x}\mid x)$. 
Thus we only require a standard textbook proof of undecidability of computing $K_T(y\mid x)$ for infinite number of $y$:

    For any $n \in \mathbb{N}$, consider the following set:
    \begin{equation}
        S=\{\hat{x} \in \mathcal{X}: \textsf{size}(Q^*(x,\hat{x}))<n\}
    \end{equation}
    We know that:
    \begin{equation}
        \lvert S\rvert \leq 2^n-1
    \end{equation}
     since for all  $\hat{x} \in S$ there is a code $Q^*(\hat{x}\mid {x})$ of length  strictly less than $n$. Counting all such codes gives the inequality.

     There are exactly $2^n$ strings of length $n$. Thus, by Pigeonhole on $\mathcal{X}$ there is a string $\hat{x}_n$ with $\textsf{size}(\hat{x}_n)=n$. Thus:
     \begin{equation}
         Q^*(x,\hat{x}_n)\geq n
     \end{equation}

     Consider the following program on an input $n \in \mathbb{N}$:
     
     \begin{enumerate}
     \item $z\gets \epsilon$ (empty string)
         \item For all binary strings  $\hat{x}$ up to length $n$:
         \begin{enumerate}
             \item[] If $Q>Q^*(x,\hat{x}))$ then $Q \gets Q^*(x,\hat{x}), z \gets \hat{x}$
         \end{enumerate}
         \item Output $z$
     \end{enumerate}

     This program computes $\hat{x}_n$ and can be encoded with a constant $C$ bits with $O(\log(n))$ bits to encode $n$.

     Thus: 
     \begin{equation}
         Q^*(x,\hat{x}_n)\leq C+O(\log(n))
     \end{equation}
     This contradicts that for any $n \in \mathbb{N}$ there exists $\hat{x}_n$ with 
     \begin{equation}
        Q^*(x,\hat{x}_n)\geq n
     \end{equation}
\end{proof}

\subsection{Asymptotic Analysis}\label{sec: asymptotics-appendix} 
We would like to measure the large scale behavior of a function. We can do this with big-O notation:
 \begin{definition}
 \textbf{Big-O Asymptotics}
 
 For two functions $g: \mathbb{N} \rightarrow \mathbb{N}$ and $f: \mathbb{N} \rightarrow \mathbb{R}$ we have that:
 \begin{equation}
     g(n)=O(f(n)) \text{ iff }
 \end{equation}
 \begin{equation} \exists C \in \mathbb{R}: C>0, \exists N \in \mathbb{N}, \forall n \geq N, g(n) \leq C f(n)
 \end{equation}
  \end{definition}
    We would like the function $f: \mathbb{N} \rightarrow \mathbb{R}$ in the Big-O notation to be simple. We will use polynomials to do this.
  \begin{definition}
In asymptotics, we assume positive coefficient \textbf{polynomials}  of the form:
\begin{equation}
    f(x)= \sum_{k_i \in \mathbb{R}_{> 0}: i=0,...,d} c_i x^{k_i}, c_i \in \mathbb{R}_{\geq 0}
\end{equation}
The positive real number $\max_{i=1}^d k_i$ is called the \textbf{degree} of the polynomial $f(x)$. 
\end{definition}

 A polynomial composed with a logarithm is called a ``\textbf{polylog}."
\begin{definition}
 The notation 
 \begin{equation}
     g(n)=\tilde{O}(f(n))
 \end{equation} means:
 \begin{equation} \exists \text{a polylog }p, \exists C \in \mathbb{R}: C>0, \exists N \in \mathbb{N}, \forall n \geq N, g(n) \leq C f(n) p(n)
 \end{equation}
 \end{definition}
\subsubsection{Terminology and Notation for Algorithms and Complexity Theory}\label{sec: algorithms-appendix}
 \begin{definition}
An \textbf{optimization problem} $\mathcal{P}=(I,S,c)$ is determined by a space of input instances $I_P$, a space of feasible solutions $S_P$, and a cost function $c_P: S_P \rightarrow \mathbb{N}$. 

The maximization/minimization objective is for every instance $x \in I_P$ to find some solution $s_P(x) \in S_P$  where:
\begin{enumerate}
    \item $s_P: I_P \rightarrow S_P$ is some (nonunique) solution map from instances to solutions and
    \item $c_P(s_P(x))$ is maximized/minimized. 
\end{enumerate}
\end{definition}
\begin{definition}
   A \textbf{decision problem} $\mathcal{P}=(I_P,S_P)$ is determined by a space of input instances $I_P$ and a space of feasible solutions $S_P$ where every $x \in I_P$ has some (atleast one) corresponding solution $s_P(x) \in S$ for a (nonunique) map $s_P: I_P \rightarrow S_P$.
   
   The decision objective is for every instance $x \in I$ to decide (true or false) whether there is a solution $s_P(x) \in S$ corresponding to $x$.
\end{definition}
\begin{proposition}(Problem Language Correspondence)
    Any decision problem $\mathcal{P}=(I_P,S_P)$ is Turing equivalent to deciding membership in the language:
\begin{equation}
    L_{\mathcal{P}}:=\{x \in I_P: s_P(x) \in S_P\text{ for some }s_P: I_P\rightarrow S_P\}
\end{equation}
\end{proposition}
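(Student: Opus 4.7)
The plan is to establish Turing equivalence by exhibiting Turing reductions in both directions: $\mathcal{P} \leq_T L_{\mathcal{P}}$ and $L_{\mathcal{P}} \leq_T \mathcal{P}$. By definition of Turing equivalence, this is both necessary and sufficient. The whole argument is essentially unpacking definitions, so the proposal is short; most of the work is lining up the oracle semantics correctly.

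For the first direction, $\mathcal{P} \leq_T L_{\mathcal{P}}$, I will describe an oracle Turing machine $M^{L_{\mathcal{P}}}$ that decides the decision problem $\mathcal{P}$ given a membership oracle for $L_{\mathcal{P}}$. On input $x \in I_P$, $M^{L_{\mathcal{P}}}$ simply queries the oracle on the string $x$ and returns \textsf{True} iff the oracle answers \textsf{True}. Correctness is immediate from the definition of $L_{\mathcal{P}}$: $x \in L_{\mathcal{P}}$ iff there exists some solution map $s_P$ with $s_P(x) \in S_P$, which is exactly the condition that the decision problem $\mathcal{P}$ is supposed to decide on input $x$.

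For the reverse direction, $L_{\mathcal{P}} \leq_T \mathcal{P}$, I will describe an oracle Turing machine $M^{\mathcal{P}}$ that decides membership in $L_{\mathcal{P}}$ given an oracle that decides $\mathcal{P}$. On an arbitrary bit string input $x$, $M^{\mathcal{P}}$ first checks whether $x$ encodes a valid instance, i.e. whether $x \in I_P$; if not, it rejects, since $L_{\mathcal{P}} \subseteq I_P$ by construction. If $x \in I_P$, it queries the $\mathcal{P}$ oracle on $x$ and returns the oracle's output, which is precisely the answer to whether a feasible $s_P(x) \in S_P$ exists, i.e., whether $x \in L_{\mathcal{P}}$.

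The only subtle step is the well-formedness check ``$x \in I_P$'' in the second direction; the hard part is making sure this is a legitimate component of a Turing reduction. I plan to handle this by adopting the standard convention used throughout the paper that the instance space $I_P$ admits a decidable encoding (each instance is a finite bit string with delimiters, as made explicit in Section~\ref{sec: voxelized-vectorfield-data} via the $\textsf{Bit}$ map and associated size conventions), so $I_P$ is itself a decidable language and the well-formedness check requires no oracle call. With that convention in place, both reductions run in constant oracle calls and polynomial time, so $\mathcal{P} \equiv_T L_{\mathcal{P}}$ follows, which completes the proposed proof.
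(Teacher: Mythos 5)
Your proposal is correct and follows essentially the same route as the paper: the paper's proof likewise just unpacks the definition ($x \in L_{\mathcal{P}}$ iff $x \in I_P$ and $s_P(x) \in S_P$) and notes the two trivial oracle reductions, one in each direction. Your extra care about the well-formedness check $x \in I_P$ via a decidable encoding is a detail the paper's proof silently glosses over, so if anything your version is slightly more complete.
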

\begin{proof}
   We know that:
    \begin{equation}
        x\in L_{\mathcal{P}} \text{ iff } (x\in I_P\text{ and } s_P(x) \in S_P)
    \end{equation}
We now have that:

     For any instance $x \in I_P$, any membership oracle for $L_{\mathcal{P}}$ decides the decision objective of $\mathcal{P}$.
     
    For any input string $x$, membership in $L_{\mathcal{P}}$ can be decided by a membership oracle for the decision objective $\mathcal{P}$.
\end{proof}
\begin{proposition}\label{prop: NP-Reduce2-NPO}
Let $\mathcal{P}=(I,S,c)$ be a maximization problem. 
Deciding membership in the  language:
\begin{equation}
    L_{\mathcal{P}_{\max}}:=\{(x,K) \in I\times \mathbb{N}: s_P(x) \in S_P, c_P(s_P(x)) \geq K \text{ for some }s_P: I_P\rightarrow S_P\}
\end{equation}
 Turing reduces to $\mathcal{P}$.
 
Let $\mathcal{P}=(I,S,c)$ be a minimization problem. 
Deciding membership in the  language:
\begin{equation}
    L_{\mathcal{P}_{\min}}:=\{(x,K) \in I\times \mathbb{N}: s_P(x) \in S_P, c_P(s_P(x)) \leq K \text{ for some }s_P: I_P\rightarrow S_P\}
\end{equation}
 Turing reduces to $\mathcal{P}$.
\end{proposition}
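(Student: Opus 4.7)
The plan is to give a direct Turing reduction: assuming oracle access to $\mathcal{P}$ (the optimization problem that, on input $x \in I_P$, returns some optimal feasible solution $s_P^*(x) \in S_P$), I will decide membership in $L_{\mathcal{P}_{\max}}$ (respectively $L_{\mathcal{P}_{\min}}$) by a single oracle call followed by one comparison.

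For the maximization case, given a candidate string $(x,K)$, first I would check whether $x \in I_P$ (this is a syntactic check on the instance encoding); if not, reject. Otherwise query the oracle for $\mathcal{P}$ on $x$, obtaining some $s^* \in S_P$ that attains $c_P(s^*) = \max_{s \in S_P, s = s_P(x)} c_P(s)$. Then evaluate $c_P(s^*)$ and accept iff $c_P(s^*) \geq K$. Correctness follows from the definition of a maximum: there exists a solution map $s_P : I_P \to S_P$ with $c_P(s_P(x)) \geq K$ if and only if the maximum cost over feasible solutions assigned to $x$ is at least $K$, which is exactly $c_P(s^*)$. The minimization case is symmetric: accept iff $c_P(s^*) \leq K$, where $s^*$ is now a cost-minimizing feasible solution returned by the oracle.

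Both directions of the reduction use only one oracle call plus a comparison, so the procedure is a Turing reduction (in fact a very restricted one). The only mildly subtle point is that $\mathcal{P}$ is defined to return a solution in $S_P$, not directly the optimal cost; but $c_P$ is a fixed computable function, so evaluating $c_P(s^*)$ after the oracle call is part of the reduction's local computation and poses no obstacle. The analogous statement for the minimization case is obtained by reversing the inequality throughout. No step here is technically hard; the proposition is essentially a formalization of the standard folklore fact that threshold decision problems reduce to their associated optimization problems, and the proof is just a one-call oracle simulation.
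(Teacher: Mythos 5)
Your argument is correct: with oracle access to the optimization problem $\mathcal{P}$ (which returns an optimal feasible solution $s^*$ for the queried instance), a single call followed by evaluating $c_P(s^*)$ and comparing against $K$ decides $L_{\mathcal{P}_{\max}}$ (resp.\ $L_{\mathcal{P}_{\min}}$), and the only hypothesis you need beyond the oracle is that $c_P$ is computable, which is implicit in the paper's NPO setting. This is, however, a different (and leaner) route than the paper takes: the paper's proof of this proposition is only a pointer to the companion proposition on NP-hardness transfer, whose argument introduces an intermediate problem $\mathcal{P}_D^L$ recording the computational history of a binary search over the threshold parameter and then composes two reductions. That binary-search machinery is really aimed at relating the decision and optimization problems when one only has threshold answers available; for the direction asserted here (decision Turing-reduces to optimization) it is unnecessary, since the optimization oracle already hands you a cost-optimal witness and the threshold question is settled by one comparison. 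Your one-call simulation is therefore not just adequate but arguably the cleaner proof of the stated claim; the paper's heavier construction buys nothing extra for this particular direction, and is only needed for the converse-style statement it was originally written for.
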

\begin{proof}
    The proof is identical to that from Proposition \ref{prop: NP-Reduce2-NPO}. 
\end{proof}
\begin{definition}
    A \textbf{subproblem} of a decision problem $\mathcal{P}=(I_P,S_P)$ is a decision problem $\mathcal{Q}=(J_Q,T_Q)$ where:
    \begin{enumerate}
        \item $J_Q\subseteq I_P, T_Q \subseteq S_P$ and
        \item $s_P(x) \in T_Q,  \forall x \in J_Q$.
    \end{enumerate} 
    Similarly, a \textbf{subproblem} of an optimization problem $\mathcal{P}=(I_P,S_P,c_P)$ is the optimization problem $\mathcal{Q}=(J_Q,T_Q,c_P)$ where:
    \begin{enumerate}
        \item $J_Q\subseteq I_P, T_Q \subseteq S_P$ and
        \item $s_P(x) \in T_Q,  \forall x \in J_Q$ where $s_P$ satisfies (nonuniquely) the optimization objective of $\mathcal{P}$.
    \end{enumerate}
\end{definition}
\begin{proposition}\label{prop: subproblem-Treduce}
    For decision problem $\mathcal{P}=(I_P,S_P)$ any subproblem $\mathcal{Q}=(I_Q,S_Q)$ is Turing reducible to $\mathcal{P}$.
\end{proposition}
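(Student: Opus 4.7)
The plan is to construct an explicit oracle Turing machine deciding $L_{\mathcal{Q}}$ using a membership oracle for $L_{\mathcal{P}}$. By the Problem Language Correspondence (Proposition stated earlier in the excerpt), it suffices to establish $L_{\mathcal{Q}} \leq_T L_{\mathcal{P}}$, because deciding a decision problem is Turing-equivalent to deciding membership in its associated language.

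The oracle machine $M^{L_{\mathcal{P}}}$ would, on an input string $x$, first perform a syntactic check that $x \in J_Q$ (rejecting otherwise), then simply forward $x$ to the oracle for $L_{\mathcal{P}}$ and return its answer. Correctness of this one-call reduction follows from the two defining conditions of a subproblem. For the forward direction, if $x \in L_{\mathcal{Q}}$ then $x \in J_Q \subseteq I_P$ and there exists a solution $s(x) \in T_Q \subseteq S_P$, so $x \in L_{\mathcal{P}}$ and the oracle accepts. For the reverse direction, the oracle only gets consulted when $x \in J_Q$; if it accepts, then some $s_P(x) \in S_P$ exists, and condition (2) of the subproblem definition, namely $s_P(x) \in T_Q$ for all $x \in J_Q$, upgrades this witness to a witness for $L_{\mathcal{Q}}$.

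The main obstacle to worry about is the first step: verifying $x \in J_Q$. This is not a question that the $L_{\mathcal{P}}$-oracle helps with, so the reduction implicitly requires that $J_Q$ is a decidable subset of the ambient string space. In the setting of this paper this is harmless, since every subproblem we consider (for example, restricting \textsc{LossyIndexedCompression} to voxelized vector field instances) is specified by a syntactic constraint on the input, which is trivially decidable. I would include a short remark to this effect so the reader sees that the definition of ``subproblem'' tacitly assumes $J_Q$ is computable; the Turing reduction then requires no further oracle calls and runs in constant overhead on top of the recognizer for $J_Q$ and a single oracle query.
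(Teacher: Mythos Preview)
Your proposal is correct and takes essentially the same approach as the paper: query the $L_{\mathcal{P}}$ oracle on the input and return its answer. The paper's own proof is a one-line sketch that only explicitly states $L_{\mathcal{Q}} \subseteq L_{\mathcal{P}}$ and implicitly restricts attention to inputs $x \in I_Q$, whereas you spell out both implications and flag the need for $J_Q$ to be decidable---a point the paper leaves implicit.
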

\begin{proof}
    With a membership oracle for $L_{\mathcal{P}}=\{x \in I_P: \exists s_P, s_P(x) \in S_P$, for any input $x \in I_Q$ we can decide $x \in L_{\mathcal{P}}$ using the membership oracle from $L_{\mathcal{P}}$ since $L_{\mathcal{Q}} \subseteq L_{\mathcal{P}}$.
\end{proof}
\begin{definition}
    For problem $\mathcal{P}=(I,S)$, the \textbf{input size} $\textsf{size}(x)$ of an instance $x \in I$ is determined by the number of bits to represent the input.
\end{definition}
For any problem, we are interested in designing \textbf{algorithms} that take input instances of a problem and compute some feasible solution, called the output, for the input instance.
\begin{definition}
      A \textbf{deterministic algorithm} $\mathcal{A}$ is a sequence of operations that computes a function from a set $I$ to a set $S$ where each $x \in I$ is called an input and $\mathcal{A}(x) \in S$ is called the output computed by the algorithm.
\end{definition}
We can measure the computational ``cost" of an algorithm, which we call its complexity.
\begin{definition}
The \textbf{complexity} of a deterministic algorithm is the number of operations that the algorithm performs in order to compute its output, as a function of the input size.

The \textbf{asymptotic complexity} is the complexity in terms of asymptotic behavior such as Big-O.
\end{definition}
\begin{definition}
    A deterministic algorithm \emph{runs in polynomial time} if its asymptotic complexity is $O(q(n))$ for $q$ a polynomial.
\end{definition}
\begin{proposition}\label{prop: NP-Reduce2-NPO}
   If the corresponding decision problem $\mathcal{P}_D$ of an NPO problem $\mathcal{P}$ is NP-Hard, then $\mathcal{P}$  is NP-Hard.
\end{proposition}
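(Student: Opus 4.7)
The plan is to exhibit a Turing reduction from the decision problem $\mathcal{P}_D$ to the NPO problem $\mathcal{P}$, and then chain this with the assumed NP-Hardness of $\mathcal{P}_D$. Recall that $\mathcal{P}$ being NP-Hard means every language in NP Turing reduces to $\mathcal{P}$ (viewed as a function problem with oracle access). Since $\mathcal{P}_D$ is NP-Hard by hypothesis, every NP language Turing reduces to $L_{\mathcal{P}_D}$. By transitivity of Turing reducibility, it then suffices to show $L_{\mathcal{P}_D} \leq_T \mathcal{P}$.

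To construct this reduction, suppose without loss of generality $\mathcal{P}=(I,S,c)$ is a maximization problem, so $L_{\mathcal{P}_D}= L_{\mathcal{P}_{\max}}$ as in Proposition \ref{prop: NP-Reduce2-NPO} stated earlier. Given an oracle for $\mathcal{P}$ that, on input $x \in I$, returns an optimal feasible solution $s^*(x) \in S$ achieving $\max_{s_P} c_P(s_P(x))$, I would decide membership of $(x,K) \in I \times \mathbb{N}$ in $L_{\mathcal{P}_{\max}}$ as follows:
\begin{enumerate}
    \item Query the oracle on $x$ to obtain $s^*(x)$.
    \item Evaluate $c_P(s^*(x))$; since $c_P : S_P \to \mathbb{N}$ is part of the problem specification, this evaluation can be done by a polynomial-time computation on the oracle's output.
    \item Accept $(x,K)$ if and only if $c_P(s^*(x)) \geq K$.
\end{enumerate}
Correctness is immediate: there exists a feasible $s_P(x)$ with $c_P(s_P(x)) \geq K$ if and only if the maximum value $c_P(s^*(x))$ is at least $K$. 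The minimization case is symmetric, replacing $\geq K$ with $\leq K$ and taking $s^*(x)$ to be an optimal minimizer.

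Composing the Turing reduction from an arbitrary NP language $L$ to $L_{\mathcal{P}_D}$ (guaranteed by the NP-Hardness hypothesis) with the Turing reduction $L_{\mathcal{P}_D} \leq_T \mathcal{P}$ constructed above yields $L \leq_T \mathcal{P}$ for every $L \in \mathrm{NP}$, which is precisely the definition of $\mathcal{P}$ being NP-Hard. There is really no substantive obstacle here; the only point worth noting carefully is that ``NP-Hard'' for a function problem (as opposed to a decision problem) is interpreted via Turing reductions, since Karp reductions are by definition between languages. Once this convention is fixed, the proof is a short oracle simulation.
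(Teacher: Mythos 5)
Your proof is correct, but it follows a different route than the paper. You give the standard single-call Turing reduction: query the optimization oracle once for an optimal $s^*(x)$, evaluate $c_P(s^*(x))$ in polynomial time, and compare against $K$; then you compose with the assumed NP-Hardness of $\mathcal{P}_D$ and invoke the convention that NP-Hardness of a non-decision problem means every NP language Cook/Turing reduces to it. This is essentially the content of the paper's own adjacent proposition asserting $L_{\mathcal{P}_{\max}} \leq_T \mathcal{P}$ (and $L_{\mathcal{P}_{\min}} \leq_T \mathcal{P}$), so your key lemma already appears in the paper, just not as the proof of this statement. The paper instead stays inside its Karp-style, commuting-diagram notion of reduction: it introduces an intermediate problem $\mathcal{P}_D^L$ on the same instance space $I\times\mathbb{N}$ whose solutions record the computational history of a binary search over the threshold $k$ (driven by the membership oracle for $\mathcal{P}_D$ until the optimum $k^*$ is reached), and then chains $\mathcal{P}_D \leq_p \mathcal{P}_D^L \leq_p \mathcal{P}$ via the identity map and the projection $(x,k)\mapsto x$. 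What the paper's construction buys is compatibility with its formal definition of NP-Hardness, which is stated only for Karp reductions between problems with solution maps; what your argument buys is brevity and transparency, at the price of relying on the Turing-reduction interpretation of hardness for optimization problems, which the paper never formally defines but which you correctly flag as the needed convention. One small point to make explicit in your write-up: for the oracle answer to be usable you need the returned optimal solution to have size polynomial in $\lvert x\rvert$ and $c_P$ to be polynomial-time evaluable, both of which are standard ingredients of the NPO definition and are implicit in the paper's verification-based definition of NPO.
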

\begin{proof}
   Consider an  intermediate problem $\mathcal{P}_D^L$ which has the same instance space $I\times \mathbb{N}$ of $\mathcal{P}_D$. 
   
   For problem $\mathcal{P}_D^L$, given an instance $(x,k) \in I \times \mathbb{N}$ its solution is a $O(\log(k))$ lengthed  tuple:\begin{equation}
       \tilde{s}_k\triangleq (s_k) \| (s_{k'}(x))_{k' \in T: T \subseteq \mathbb{N}}
   \end{equation}
   of the computational history  $T \subseteq \mathbb{N}$ of the parameter $k$ when  running a binary search on:
   \begin{enumerate}
       \item  Minimization: 
       \begin{enumerate}
           \item $[k^*,k^*+k]$ if $k<k^*$
           \item $[1,k]$ if $k>k^*$ 
       \end{enumerate}
       \item Maximization:
       \begin{enumerate}
           \item $[k^*-k,k^*]$ if $k<k^*$
           \item $[1,k]$ if $k>k^*$ 
       \end{enumerate}
   \end{enumerate}
   using the membership oracle from $\mathcal{P}_D$
   until the optimal solution $k^*$ from an oracle from $\mathcal{P}$ is reached.
   
  1. We have that $\mathcal{P}_D \leq_p \mathcal{P}_D^L$ by the identity reduction on the instance spaces:
  
  $(\Rightarrow): $ If the feasible solution for instance $(x,k)$ of $\mathcal{P}_D$ is a true solution, (e.g.  $c(s_k(x))\geq k$ for maximization), then the first entry of the corresponding solution to $(x,k)$ in $\mathcal{P}_D^L$ is a true solution. All subsequent solutions $s_{k'}(x)$ must be true solutions since the binary search maintains the truthiness of the solutions in its computational history.

  $(\Leftarrow): $ By projection on the first entry of $\tilde{s}_k$.
  
  2. We also have that $\mathcal{P}_D^L\leq_p \mathcal{P}$ by the reduction $(x,k) \mapsto x$:
  
  $(\Rightarrow):$ 
  If the feasible solution $\tilde{s}_k$ for instance $(x,k)$ in $\mathcal{P}_D^L$ is a true solution, then the binary search must deterministically converge to the optimal solution $s_{k^*}(x)$ in $O(\log(k))$ number of calls to $\mathcal{P}_D$. 

  $(\Leftarrow): $ 
  By projection on the last entry of $\tilde{s}_k$.
  \end{proof}
We discuss the terminology for algorithms and complexity theory. For notations such as asymptotics and problem formulations, see Section \ref{sec: asymptotics-appendix} and Section \ref{sec: algorithms-appendix} in the Appendix.
\begin{definition}
    We say a function $f: A \rightarrow B$ is \emph{polynomial time computable} if there exists  a deterministic algorithm running in polynomial time that computes $f$.
\end{definition}
A problem can also be \textbf{verified}. This means for a given instance-feasible solution pair $(x,s) \in I\times S$, we check (true/false) if $s$ is a corresponding solution to $x$ according to the problem.  
\begin{definition}
    A problem $\mathcal{P}=(I,S, \bullet)$ is in nondeterministic polynomial time (NP) if there is a polynomial time deterministic algorithm that can verify whether for a pair $(x,s) \in I\times S$ that there is some solution $s_P(x)$ with $s_P(x)=s$.

    We call the set of all such decision problems the \emph{complexity class} NP. 
    
    The set of all such optimization problems forms the \emph{complexity class} NPO.
\end{definition}
For any minimization/maximization problem $\mathcal{P}=(I,S,c)$ in NPO there is a \emph{corresponding decision problem} of the form $(I\times \mathbb{N},S)$ where true solutions to instances $(x,k) \in I \times \mathbb{N}$ satisfy:
\begin{enumerate}
    \item $c(s(x)) \leq k$ for a minimization problem 
    \item $c(s(x)) \geq k$ for a maximization problem. 
\end{enumerate}
\begin{proposition}
   If the corresponding decision problem $\mathcal{P}_D$ of an NPO problem $\mathcal{P}$ is NP-Hard, then $\mathcal{P}$  is NP-Hard.
\end{proposition}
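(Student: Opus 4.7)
The plan is straightforward: I would exhibit a polynomial-time Turing reduction $\mathcal{P}_D \leq_T \mathcal{P}$ and compose it with the assumed NP-Hardness of $\mathcal{P}_D$. Assume without loss of generality that $\mathcal{P} = (I,S,c)$ is a minimization NPO problem (the maximization case is symmetric). Given an oracle for $\mathcal{P}$ and an instance $(x,k) \in I \times \mathbb{N}$ of $\mathcal{P}_D$, the reduction would: (i) query the oracle on $x$ to obtain an optimal solution $s^{*}(x) \in S$; (ii) compute $c(s^{*}(x))$, which is polynomial-time because $c$ is polynomial-time computable by the NPO definition; and (iii) accept iff $c(s^{*}(x)) \leq k$.

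For correctness, recall from Proposition \ref{prop: NP-Reduce2-NPO} that $(x,k) \in L_{\mathcal{P}_D^{\min}}$ iff there exists a feasible $s_P(x) \in S_P$ with $c(s_P(x)) \leq k$. Because $s^{*}(x)$ attains the minimum cost over all feasible solutions, such an $s_P(x)$ exists iff $c(s^{*}(x)) \leq k$, matching the acceptance condition precisely. The work outside the oracle query is a single evaluation of $c$ plus one comparison, hence polynomial, so $\mathcal{P}_D \leq_T \mathcal{P}$ in polynomial time. By hypothesis every language in NP polynomial-time Turing-reduces to $\mathcal{P}_D$; composing the two reductions yields a polynomial-time Turing reduction from every NP language to $\mathcal{P}$, which is exactly the definition of NP-Hardness of $\mathcal{P}$.

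The argument is essentially definitional, and I do not anticipate a genuine obstacle. The only real subtlety is the choice of reduction notion: NP-Hardness for an optimization problem must be formulated with respect to polynomial-time Turing (rather than Karp) reductions, since the outputs of $\mathcal{P}$ are feasible solutions rather than yes/no answers. Under that convention, the elaborate intermediate-problem construction with binary search sketched in the appendix's proof of Proposition \ref{prop: NP-Reduce2-NPO} can be bypassed, as the single-query simulation above already establishes the required reduction.
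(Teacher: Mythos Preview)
Your proof is correct and is in fact the standard textbook argument: one oracle call to $\mathcal{P}$ returns an optimal solution, you evaluate $c$ on it, and compare to $k$. This is a clean polynomial-time Turing reduction $\mathcal{P}_D \leq_T \mathcal{P}$, and your remark that NP-Hardness of an optimization problem must be read under Turing (not Karp) reductions is exactly the right clarification.

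The paper takes a more roundabout route. It introduces an intermediate problem $\mathcal{P}_D^L$ whose instances are the same pairs $(x,k)$ but whose \emph{solutions} are $O(\log k)$-length tuples encoding the computational history of a binary search for the optimal value $k^*$, and then argues $\mathcal{P}_D \leq_p \mathcal{P}_D^L \leq_p \mathcal{P}$ via identity and projection maps on instances. The apparent motivation is to keep the reductions in a Karp-like many-one form by packaging the oracle interaction into the solution object itself. What this buys, in principle, is a reduction that formally matches the paper's own Karp-based definition of NP-Hardness; what it costs is clarity, since the binary search is doing work (locating $k^*$ from a decision oracle) that is entirely unnecessary when one already has an oracle for the optimization problem $\mathcal{P}$. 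Your single-query argument bypasses all of this and is both simpler and more transparent about the underlying reduction notion.
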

For Proof, see Section \ref{sec: algorithms-appendix} in the Appendix.
\begin{definition}
    A Karp reduction \cite{garey2002computers} from decision problem $A=(I_A,S_A)$ to decision problem $B=(I_B,S_B)$, denoted $A \leq_p B$ is a polynomial time algorithm that computes a function $m: I_A\rightarrow I_B$ that takes instances of $A$ to instances of $B$ so that the following diagram \emph{commutes}:
    \begin{equation}
    \begin{tikzcd}
	{I_A} & {I_B} \\
	{S_A} & {S_B}
	\arrow["m", from=1-1, to=1-2]
	\arrow["{s_A}"', from=1-1, to=2-1]
	\arrow["{s_B}", from=1-2, to=2-2]
	\arrow["i", tail reversed, from=2-1, to=2-2]
\end{tikzcd}
\end{equation}
where $i: S_A \rightarrow S_B$ is an injection and $s_A: I_A \rightarrow S_A,s_B: I_B \rightarrow S_B$ are the (nonunique) solution maps.

This means:
\begin{equation}
    \exists i: S_A \rightarrow S_B \text{ injective }, \forall x \in I_A:
    s_B(m(x))=i(s_A(x))
\end{equation}
\end{definition}
We can define two more complexity classes. These are of importance for determining the theoretical intractability of a problem.
\begin{definition}
The decision problem $\mathcal{P}$ is NP-Complete if any problem in NP can be Karp reduced to $\mathcal{P}$.

A decision problem $\mathcal{P}$ is NP-Hard if there is a Karp reduction from a NP-Complete problem to $\mathcal{P}.$
\end{definition}
It is conjectured that NP-Hard problems are not feasible in practice to compute.

We can also design algorithms that approximate problems. We formalize this in the following:
\begin{definition}
    For a problem $\mathcal{P}=(I_P,S_P,c_P)$ in NPO, a deterministic algorithm $\mathcal{A}$ approximates a maximization/minimization problem $\mathcal{P}$ with \textbf{approximation factor} $f(n), n \in \mathbb{N}$ if for every $x \in I$ with $\textsf{size}(x)=n:$
\begin{equation}
    \frac{c_P(s_P(x))}{c_P(\mathcal{A}(x))}\leq f(n)
\end{equation}
If such an algorithm $\mathcal{A}$ exists, then we say that $\mathcal{P}$ is $f(n)$-APX. The algorithm is called an \emph{approximation algorithm}.
\end{definition}
\begin{definition}
A problem in NPO is \textbf{approximable (APX)} if there is a polynomial time deterministic algorithm with constant approximation factor.
\end{definition}
If we introduce a parameter $\epsilon>0$ so that the approximation factor of an approximation algorithm is $(1+\epsilon)$, we form another complexity class:
\begin{definition}
     A problem  in NPO having a polynomial time deterministic algorithm with an approximation factor $(1+\epsilon)$ is called a \textbf{polynomial time approximation scheme (PTAS)}.
\end{definition}
\begin{proposition}
The following tower of complexity classes holds:
    \begin{equation}
        \text{PTAS}\subseteq\text{APX}\subseteq\text{NPO}
    \end{equation}
\end{proposition}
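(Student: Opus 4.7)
The plan is to verify each of the two containments directly from the definitions given in the excerpt, so the proof will be short and definitional rather than requiring any reduction machinery.

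First I would handle the containment $\text{APX}\subseteq\text{NPO}$. By the definition stated just above the proposition, a problem is in APX when it belongs to NPO and additionally admits a polynomial-time deterministic approximation algorithm with constant approximation factor. Thus membership in APX includes membership in NPO as part of its own definition, and the inclusion is immediate. I would state this in one sentence, citing the definition of APX.

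Next I would handle $\text{PTAS}\subseteq\text{APX}$. A problem $\mathcal{P}\in\text{PTAS}$ is an NPO problem for which, for every $\epsilon>0$, there is a polynomial-time deterministic algorithm $\mathcal{A}_\epsilon$ achieving approximation factor $(1+\epsilon)$. To witness membership in APX, I only need a single polynomial-time algorithm with some constant approximation factor. The key step is to fix any specific value of $\epsilon$, say $\epsilon=1$, and take $\mathcal{A}:=\mathcal{A}_1$. This $\mathcal{A}$ runs in polynomial time and satisfies
\begin{equation}
\frac{c_P(s_P(x))}{c_P(\mathcal{A}(x))}\leq 1+\epsilon = 2
\end{equation}
for every instance $x\in I_P$, so $f(n)\equiv 2$ serves as a constant approximation factor. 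Since $\mathcal{P}$ is also already in NPO by the definition of PTAS, this shows $\mathcal{P}\in\text{APX}$.

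I do not expect any substantive obstacle: both containments follow immediately from unpacking the definitions given in the excerpt, and the only nontrivial move is the choice of a fixed $\epsilon$ to extract a constant-factor approximation from the PTAS family $\{\mathcal{A}_\epsilon\}_{\epsilon>0}$. I would close by combining the two containments into the stated tower $\text{PTAS}\subseteq\text{APX}\subseteq\text{NPO}$.
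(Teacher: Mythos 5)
Your proof is correct and takes essentially the same route as the paper, whose entire proof is the one-line instruction to ``check the definition of the classes''; you have simply spelled out that check, with the fixing of $\epsilon=1$ to extract a constant factor being exactly the step that verification requires.
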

\begin{proof}
    Check the definition of the classes.
\end{proof}
We define a reduction for the APX complexity class that is used analogously to the Karp reductions for the complexity class NP.
\begin{definition}
Let $\mathcal{P}=(I_P,S_P,c_P),\mathcal{Q}=(I_Q,S_Q,c_Q)$ be two problems in NPO.

    A PTAS reduction from $\mathcal{P}$ to $\mathcal{Q}$, denoted: $\mathcal{P}\leq_{PTAS}\mathcal{Q}$ has:
    \begin{enumerate}
        \item (Reduction Map) A polynomial time computable function $m: I_P\rightarrow I_Q$.
        \item (Approximation Factor) $\alpha: \mathbb{R}_{> 0} \rightarrow \mathbb{R}_{> 0}$ 
        \item (Solution Map) A polynomial time computable function $g: I_P \times S_Q \rightarrow S_P$ so that the following diagram:
        \begin{equation}
           \begin{tikzcd}
	{I_P} & {I_Q} && {} \\
	{S_P} & {I_P \times S_Q} & {S_Q}
	\arrow["m", from=1-1, to=1-2]
	\arrow["{{{s_P}}}"', from=1-1, to=2-1]
	\arrow["{\text{id}\times \vardiamond}"{description}, hook, from=1-1, to=2-2]
	\arrow["{{{s_Q}}}", from=1-2, to=2-3]
	\arrow["{{\mathcal{A}}}"{description}, shift left=3, curve={height=-12pt}, from=1-2, to=2-3]
	\arrow["g", from=2-2, to=2-1]
	\arrow["{{\vardiamond\times \text{id}}}", hook', from=2-3, to=2-2]
\end{tikzcd}
        \end{equation}
        has the following relationship:
    \begin{equation}
    \begin{split}
       \text{$\forall x\in I_P$, If there is an Algorithm $\mathcal{A}$ satisfying: }\\
       \lvert \mathcal{A}(m(x))\rvert  \leq  (1+\alpha(\epsilon))\lvert s_Q(m(x))\rvert \\
       \text{ Then }
          \lvert g(x,\mathcal{A}(m(x)))\rvert  \leq (1+\epsilon) \lvert s_P(x) \rvert
        \end{split}
        \end{equation}
        for $s_Q, s_P$ (nonunique) solution maps for $\mathcal{P},\mathcal{Q}$ and where $\vardiamond$ is a placeholder map (any map can go here), and $\text{id}$ is the identity map.
    \end{enumerate}
\end{definition}
It is known that a composition of PTAS reductions is a PTAS reduction.
\begin{definition}
    We say that a problem is \textbf{APX-Hard} if any problem in APX PTAS reduces to it.

    A problem is \textbf{APX-Complete} if it is APX-Hard and belongs to APX.
\end{definition}
It is known that if $P\neq NP$ then any APX-Hard problem cannot be in PTAS by using the PCP theorem~\cite{dinur2007pcp} in a reduction from \textsc{$3$-SAT} to the APX-Complete problem \textsc{MAXSAT}.
 \subsection{The Voxel Grid Index Space}\label{sec: more-connected-shapes}
We define here an integer index space in $k$-dimensions. 
\begin{definition}\label{def: ortho-orientability}
A $k$-dimensional \textbf{voxel grid} $M$ is a set of points with the following Euclidean properties:
\begin{enumerate}
    \item All points $x \in M$ can be given a unique coordinate via the injective coordinate map: 
    \begin{equation}
        I: x \mapsto (i_1,...,i_k) \in \mathbb{Z}^k
    \end{equation}
    \item Each point $x \in M$ with coordinate $I(x)=(i_1,...,i_k)$ has a neighborhood of coordinates
    \begin{equation}
        N^M_{I(x)}\triangleq \{(i_1,...,i_j',..,i_k): i_j'=i_j+s, s \in \{-1,1\}, j=1,...,k\}
    \end{equation}
\end{enumerate}
\end{definition}
\begin{example}\label{ex: 2d-grid}
A simple example of a $2$-dimensional voxel grid $M$ is $[n_1]\times [n_2] \subseteq \mathbb{N}^2$.

Each point $(i,j) \in [n_1]\times [n_2]$ is its own coordinate. It has a neighborhood:
\begin{equation}
\{(i,j),(i-1,j),(i+1,j), (i,j-1), (i,j+1)\}
\end{equation}
The points $(0,\bullet_2), (n_1,\bullet_2), (\bullet_1, 0), (\bullet_1,n_2), \bullet_1 \in [n_1], \bullet_2 \in [n_2]$ are virtual points not on the grid but belonging to the neighborhood of each point on the boundary of the grid.
\end{example}
 \begin{definition}
    In a $k$-dimensional voxel grid $M$, we denote 
    \begin{equation}
        e_q\triangleq(\overbrace{0,...,0,1}^{q},\overbrace{0,...,0}^{k-q}), q \in [k]
    \end{equation}
    as a standard generating vector of $\mathbb{Z}^k$ in $k$ dimensions.
\end{definition}
We have shown that $\mathbb{Z}^k$ is a $\mathbb{Z}$-module. In particular, the vectors in $\mathbb{Z}^k$ can be added together just like those from $\mathbb{R}^k$.
 \subsubsection{Connected Shapes In Voxel Grids}\label{sec: Connected-Shapes}

We define a connected shape in terms of a sequence of points.
\begin{definition}
Let $M$ be a $k$-dimensional voxel grid.

    A sequence of points $(x_1,...,x_n) \in M^n,n \in \mathbb{N}$ is a \textbf{walk} on $M$ if \begin{equation}
        I(x_{i+1}) \in N_{I(x_i)}^M, \forall i=1,...,n-1
    \end{equation}
\end{definition}
For a set of points, we can use walks to define connectivity in $j$ dimensions for $j\leq k$:
\begin{definition}
Let $M$ be a $k$-dimensional voxel grid.

A nonempty set of points $S \subseteq M$ is \textbf{connected in $j$ dimensions}, $j\leq k$, if for any pair of points $x,y \in S$ there are $j$ sequences of integer coordinates $(I(z^p_i))_{i=1}^{n_p},  p=1,...,j, n_p \in \mathbb{N}$ satisfying the following:
\begin{enumerate}
\item (Endpoints at $x,y$) $z^p_1=x, z^p_{n_p}=y, z_i^p \in S, \forall p\in [j]$
    \item (Connected) The sequence $(z^p_i)_{i=1}^{n_p}$ is a walk on $M$ for all $p \in [j]$. 
    \item (Distinguishable) No pair of sequences intersect by sharing any points except $x$ and $y$. 
\end{enumerate} 
\end{definition}
We identify a connected set in one dimension and parameterize it by the dimension it has a nontrivial range in:
\begin{definition}
    A \textbf{line segment $L$ in dimension $q$} in a $k$-dimensional voxel grid $M$ is a nonempty set of points connected in one dimension where all of its indices only differ in a single dimension $q \in [k]$.
\end{definition}
Since a line segment in dimension $q$ can be parameterized by only a single variable, we can denote a line segment by its two endpoints:
\begin{subequations}
    \begin{equation}
        [a,b]_q=\{x \in M: (I(a))_q\leq (I(x))_q\leq (I(b))_q\}, a, b \in M; 
    \end{equation}
    \begin{equation}
        \text{ s.t. }I(b)-I(a)=d e_q, d \in \mathbb{N}\text{ for some } q \in [k]
    \end{equation}
    \end{subequations}
\subsubsection{Bounding Box}\label{sec: Boxes}
Given a set of points, we can define a simple set that covers all of these points through maximizing its size along all the dimensions.
\begin{definition}
Let $S$ be a subset of a $k$-dimensional \textbf{voxel grid} $M$. 

A \textbf{bounding box} of $S$ is the following region:
\begin{equation}
    \textsf{BB}(S)\triangleq \{x \in M: (\min_{s \in S} I(s))_q \leq (I(x))_q \leq (\max_{s \in S} I(s))_q, \forall q \in [k]\}
\end{equation}
A bounding box can also be defined by the same construction in Euclidean space $\mathbb{R}^k$.
\end{definition}
The dimensions of nontrivial length for the bounding box of a set of points $S \subseteq M$ are called the \textbf{axes} of $S$, denoted $J(S)$. More formally:
\begin{definition}
Let $S$ be a subset of a $k$-dimensional \textbf{voxel grid} $M$. 
    \begin{equation}
        J(S)=\{q \in [k]: (\min_{s \in \textsf{BB}(S)} I(s))_q<(\max_{s \in \textsf{BB}(S)} I(s))_q\}
    \end{equation}
\end{definition}
\begin{definition}\label{def: box}
    We say that a $j$-dimensional \textbf{box} is a set of points connected in $j$ dimensions whose bounding box is itself.
\end{definition}
\begin{definition}
    A $2$-dimensional box is called a \textbf{rectangle}.
\end{definition}
We define a ``corner" on a subset of a voxel grid. 
\begin{definition}
    Let $P$ be a set of $j$-connected points  in a $k$-dimensional \textbf{voxel grid} $M$. 
    
    A point $x \in P$ is called a $j$-dimensional \textbf{corner} if there is another unique point $y \in M$, called the \textbf{witness point}, whose index is reachable by a sum of $j$ independent unit vectors from $I(x) \in \mathbb{Z}^k$  satisfying:
    \begin{equation}
    \exists! y \in M \text{ with } I(y)=I(x)+\sum_{q \in J(P)} v_qe_q; v_q \in \{ -1,1\}
    \end{equation}
    \begin{enumerate}
        \item The corner $x \in P$ is \textbf{convex} if $\textsf{BB}(\{x,y\}) \subseteq P$  and 
        \item The corner $x \in P$ is \textbf{concave} if $\textsf{BB}(\{x,y\}) \subseteq P$ but $y \notin P$.
    \end{enumerate}
    
\end{definition}
We can now recover the usual definition of a corner from a three dimensional cube:
\begin{example}
    A three dimensional cube, which can be defined by the voxel grid $C:=(\{0,1\})^3$, has all of its $2^3$ points as corners where for corner $(c_1,c_2,c_3) \in C$ its witness is $(1-c_1,1-c_2,1-c_3) \in C$  
\end{example} 
Generalizing the three dimensional cube to $j$ dimensions in a voxel grid of dimension $k$, we can define unit boxes:
\begin{definition}
    We call a $j$-dimensional box with exactly $2^j$ convex corner points a $j$-dimensional \textbf{unit box}.
    \end{definition}
This allows us to define a more general shape:
\begin{definition}
For voxel grid $M$, a union of $j$-dimensional unit boxes $P \subseteq M$ connected in $j$ dimensions is called a \textbf{$j$-dimensional general orthogonal polyhedron}.   

We denote $\text{dim}(P)$ as the dimension $j$ of the general orthogonal polyhedron.
\end{definition}
In two dimensions, we call these orthogonal polygons.
When the dimension $j$ is arbitrary, we just say \textbf{general orthogonal polyhedron}. 

A general orthogonal polyhedron may have partial disconnections such as ``holes." These can be defined formally in the sense of homotopy~\cite{munkres2000topology}, namely through the ``fundamental group." We will not define this in general and will allow for such partial disconnections.

\textbf{Spanning a Box: }

According to Proposition \ref{prop: box-corner-properties} in the Appendix, the $2^j$ corners of a $j$-dimensional box completely determine it.  We show that only two corners are actually needed. 
\begin{observation}
    A \textbf{line segment} in a $k$-dimensional voxel grid $M$ is a $1$-dimensional box.
\end{observation}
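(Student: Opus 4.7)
The plan is to directly verify the two clauses of Definition \ref{def: box} applied to $j=1$ on the object $L=[a,b]_q$. Recall the definition says that a $1$-dimensional box is a set of points connected in $1$ dimension whose bounding box equals itself. Both conditions are essentially built into the definition of a line segment; the work is to spell this out and identify the axis set $J(L)$.

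First I would address connectedness in one dimension. By the definition of a line segment in dimension $q$, the set $L$ is already a nonempty set of points connected in one dimension, so clause one is immediate. Second, I would compute $\textsf{BB}(L)$. Since every $s\in L$ satisfies $I(s)-I(a)\in \mathbb{Z}\, e_q$, for any coordinate direction $p\neq q$ we have $(I(s))_p=(I(a))_p$, giving $\min_{s\in L}(I(s))_p=\max_{s\in L}(I(s))_p=(I(a))_p$. For $p=q$, the parametrization of $L$ as $\{x\in M:(I(a))_q\leq (I(x))_q\leq (I(b))_q\}$ shows $\min_{s\in L}(I(s))_q=(I(a))_q$ and $\max_{s\in L}(I(s))_q=(I(b))_q$. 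Substituting into the formula for $\textsf{BB}(L)$ yields exactly the set $[a,b]_q=L$, so clause two holds.

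Finally, to declare $L$ a $1$-dimensional box (rather than a $j$-dimensional box for a different $j$), I would check that $J(L)=\{q\}$, which follows from the same coordinate comparison: the min and max agree in every direction $p\neq q$ and differ in direction $q$ (assuming $a\neq b$; the degenerate case $a=b$ reduces to a singleton, which I would briefly note is trivially a $0$-dimensional box and can either be excluded by insisting line segments have at least two points, or handled as a boundary case). The only minor subtlety is making sure the definition of line segment as stated actually fills in every intermediate voxel, which is guaranteed by the walk-based definition of connectedness in one dimension combined with the restriction that indices differ only in direction $q$. There is no substantive obstacle here; the observation is essentially an unpacking of definitions.
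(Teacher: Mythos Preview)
Your proposal is correct and is precisely the definitional unpacking the paper intends; in fact the paper states this Observation without any proof, treating it as immediate from the definitions of line segment and box, so your argument simply makes explicit what the paper leaves to the reader.
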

 \begin{proposition}\label{prop: 2-corner-box-appendix}
For a $j$-dimensional box $R$ in a $k$-dimensional voxel grid $M$, there are exactly two opposing convex corners, denoted $c_1(R),c_2(R)$, in $R$ that can be used to recover all other points in $R$: 
\begin{equation}
    \textsf{BB}(c_1(R),c_2(R))=R
\end{equation}
with:
\begin{equation}\label{eq: min-corner}
        c_1(R)= \arg\min_{x \in R} \max_{q \in J(R)} (I(x))_q
    \end{equation}
    \begin{equation}\label{eq: max-corner}
        c_2(R)= \arg\max_{x \in R} \min_{q \in J(R)} (I(x))_q
    \end{equation}
\end{proposition}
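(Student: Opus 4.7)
The plan is to exploit the product structure that a $j$-dimensional box inherits from the identity $R=\textsf{BB}(R)$. By Definition \ref{def: box}, this identity together with the definition of the bounding box gives
\begin{equation}
R = \{x \in M : a_q \leq (I(x))_q \leq b_q,\ \forall q \in [k]\},\quad a_q \triangleq \min_{s\in R}(I(s))_q,\ b_q \triangleq \max_{s\in R}(I(s))_q.
\end{equation}
For $q \notin J(R)$ we have $a_q = b_q$ (a fixed coordinate $c_q$), and for $q \in J(R)$ we have $a_q < b_q$, so $R$ factors as a product of nontrivial integer intervals in the $J(R)$ directions and singletons in the remaining axes.

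First I would pin down $c_1(R)$ and $c_2(R)$ explicitly as the unique points in $R$ whose coordinates on every axis $q \in J(R)$ equal $a_q$ and $b_q$ respectively, and which share the fixed value $c_q$ on axes $q \notin J(R)$. Both are in $R$ by construction. I would then argue they are the intended argmin/argmax: the coordinatewise-minimal point of $R$ minimizes every axis simultaneously, hence in particular achieves the minimum of $\max_{q \in J(R)}(I(x))_q$; the coordinatewise-maximal point is dual. The second step is to verify both are convex corners in the sense of the definition. For $c_1(R)$ let the witness be the point $y$ with $I(y) = I(c_1(R)) + \sum_{q \in J(R)} e_q$. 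Since $a_q + 1 \leq b_q$ for every $q \in J(R)$ (because $J(R)$ collects axes of nontrivial extent in a voxel grid, so the gap is at least one voxel), $y$ lies in $R$, is unique, and $\textsf{BB}(\{c_1(R), y\}) \subseteq R$ because each coordinate of any point in that bounding box lies in $[a_q, a_q+1] \subseteq [a_q, b_q]$. Thus $c_1(R)$ is a convex corner; the symmetric argument, using witness $I(y) = I(c_2(R)) - \sum_{q \in J(R)} e_q$, handles $c_2(R)$. They are opposing because their $J(R)$-coordinates are complementary extrema.

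Third, I would establish the reconstruction identity $\textsf{BB}(\{c_1(R), c_2(R)\}) = R$. Applying the definition of the bounding box to the two-point set, the per-axis minimum and maximum are exactly $a_q$ and $b_q$ for $q \in J(R)$ (since $c_1(R)$ and $c_2(R)$ together realize both extremes in every such axis), and both equal $c_q$ for $q \notin J(R)$. This matches $\textsf{BB}(R) = R$ coordinate by coordinate, giving the equality. Finally, to confirm ``exactly two'' opposing convex corners in the statement, I would observe that any opposing convex-corner pair must realize the componentwise minimum and componentwise maximum of $R$ (otherwise the $+1$/$-1$ witness in some axis of $J(R)$ would fall outside $R$), and such a pair is forced to be $(c_1(R), c_2(R))$.

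The main obstacle is a small subtlety at Step 1: the written formula $\arg\min_{x\in R}\max_{q\in J(R)}(I(x))_q$ can in principle be achieved by multiple points of $R$ when the intervals $[a_q,b_q]$ have unequal lengths (for example in $R=[3,5]\times[4,7]$ both $(3,4)$ and $(4,4)$ minimize the max), so the argument must interpret the argmin as selecting the unique point whose every coordinate is minimal, equivalently the componentwise minimizer. Once that reading is fixed, uniqueness is immediate and the rest of the proof goes through cleanly; the alternative is to state the proposition with an explicit tiebreaker (e.g., lexicographic order or a sum in place of a max), which is a cosmetic rather than substantive change.
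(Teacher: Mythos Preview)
Your proof is correct and follows the same three-step outline as the paper: identify the two corners, check they are convex, and verify the bounding-box identity. The difference lies in how the corners are pinned down. The paper attempts to prove directly that the written $\arg\min$ and $\arg\max$ are unique via a contradiction argument (assume two minimax optima $c_1',c_1''$, form their coordinatewise maximum, and claim this yields a third distinct optimum); you instead explicitly identify $c_1(R),c_2(R)$ as the coordinatewise minimum and maximum of $R$ and then argue these realize the minimax and maximin.

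Your route is in fact the cleaner one, and your closing caveat about non-uniqueness of the literal $\arg\min$ is well-founded: in $R=[3,5]\times[4,7]$ both $(3,4)$ and $(4,4)$ achieve $\min_{x}\max_q(I(x))_q=4$, so the $\arg\min$ as written is not a singleton. The paper's uniqueness argument tacitly assumes that two distinct minimax optima must differ in at least two coordinates with opposite signs (so that the coordinatewise max strictly dominates both), which your example shows is false; hence the paper's Part~1 has a gap that your reading ``the componentwise minimizer'' repairs. Once that interpretation is fixed, the remaining steps---convexity via the $\pm 1$ witness along the $J(R)$ axes, and recovery of $R$ as $\textsf{BB}(\{c_1(R),c_2(R)\})$---are essentially identical in both arguments.
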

 \begin{proof}
\textbf{1. We first show that $c_1(R)$ and $c_2(R)$ are unique solutions to their minimax and maximin equations respectively: } 

Without loss of generality, say that there are two solutions $c_1',c_1'' \in R, c_1'\neq c_1''$ to the minimax equation, Equation \ref{eq: min-corner}. 

Because $R$ is a box, we can take the max over every dimension of the two points $c_1',c_1'' \in R$ and still get a point in $R$: 
\begin{equation}
    c_1 \in R: (I(c_1))_q=\max_{x \in \{c_1',c_1''\}} (I(x))_q, \forall q \in [k]
\end{equation}
Since $c_1'\neq c_1''$ and each are minimax optima, we must have that there are two $q_1,q_2 \in [k], q_1\neq q_2$ with:
\begin{equation}
    (I(c_1'))_{q_1}< (I(c_1''))_{q_1}, (I(c_1'))_{q_2}> (I(c_1''))_{q_2}
\end{equation}
Thus, $c_1\neq c_1'$ and $c_1 \neq c_1''$. 

We can find $c_1$ over any pair of minimax optima $c_1',c_1''$ so long as there is more than one solution. This is a contradiction. 

Thus, there exists a unique solution to the maximin equation. Similarly, there is a unique solution to the maximin equation as well.

\textbf{2. We show that $c_1(R), c_2(R)$ are both convex corners: }

Certainly these are convex corners by the proof of Proposition \ref{prop: box-corner-properties}.

\textbf{3. We show that $c_1(R)$ and $c_2(R)$ can recover all of $R$: }

    We have that:
    \begin{equation}
        (I(c_1(R)))_q \leq (I(z))_q \leq (I(c_2(R)))_q, \forall q \in J(R)
    \end{equation}
    by definition of $c_1(R)$ and $c_2(R)$.
    
    Since these inequalities hold for each dimension separately, this means we can obtain $I(z)$ by means of addition with standard generating vectors with $I(c_1(R))$. 
    \begin{equation}
    \forall z \in R, \exists (d_q)_{q \in [k]} \in \mathbb{Z}^{k}:  d_q \leq  (I(c_2(R)))_q-(I(c_1(R)))_q  
\end{equation}
\begin{equation}
    \text{ s.t. }I(z)=I(c_1)+\sum_{q \in J(R)}d_qe_q
\end{equation}
\end{proof}
 \begin{proposition}\label{prop: box-corner-properties}
    For a $j$-dimensional box $R \subseteq M$ in a $k$-dimensional voxel grid $M$, every corner is \textbf{convex} and there are exactly $2^j$ of them.
\end{proposition}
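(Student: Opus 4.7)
The plan is to identify the corners of $R$ as exactly the $2^j$ points at which each axis coordinate attains either its minimum or its maximum over $R$, show that each such point admits a distinguished witness inside $R$ (making it a corner in the sense of the definition), and then use the defining property $R=\textsf{BB}(R)$ to conclude convexity; non-corner status for every other point of $R$ then drops out by a uniqueness argument.

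Concretely, for $q\in[k]$ let $a_q=\min_{s\in R}(I(s))_q$ and $b_q=\max_{s\in R}(I(s))_q$. By definition of a $j$-dimensional box, $|J(R)|=j$, we have $a_q<b_q$ for $q\in J(R)$ and $a_q=b_q$ for $q\notin J(R)$, and $R=\{x\in M:a_q\leq(I(x))_q\leq b_q,\ \forall q\in[k]\}$. Let $V$ be the set of points $x\in R$ whose coordinate in each $q\in J(R)$ equals $a_q$ or $b_q$ (coordinates outside $J(R)$ are forced). There are $2^j$ such points since the two choices per axis are independent. For each $x\in V$ I would construct the witness $y$ by setting $v_q=+1$ when $(I(x))_q=a_q$ and $v_q=-1$ when $(I(x))_q=b_q$; since $b_q-a_q\geq 1$, the resulting $(I(y))_q$ still lies in $[a_q,b_q]$, so $y\in R\subseteq M$. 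This is the distinguished witness, and convexity follows immediately: both $x$ and $y$ lie in $R=\textsf{BB}(R)$, so $\textsf{BB}(\{x,y\})\subseteq\textsf{BB}(R)=R$.

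To exclude any other point, suppose $x\in R\setminus V$ so that $a_{q^*}<(I(x))_{q^*}<b_{q^*}$ for some $q^*\in J(R)$. Then both sign choices $v_{q^*}=\pm 1$ (combined with the inward choice for the remaining axes) produce points $y^\pm$ that still satisfy $a_q\leq(I(y^\pm))_q\leq b_q$ for every $q\in J(R)$, hence $y^\pm\in R\subseteq M$. This gives two distinct witnesses, contradicting the uniqueness clause of the corner definition. Combined with the count $|V|=2^j$, this yields the proposition.

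The main obstacle is how to read the ``$\exists! y\in M$'' clause cleanly: if $R\subsetneq M$, then for an extreme vertex of $R$ some ``outward'' sign choices may still land in $M$, and a literal reading would make even the true geometric corners fail the uniqueness test. I would resolve this by invoking the interpretation used implicitly throughout Section~\ref{sec: Boxes} (and already relied on in Proposition~\ref{prop: 2-corner-box-appendix}), namely that the witness is the unique point of the form $I(x)+\sum_{q\in J(R)}v_qe_q$ that lies in $R$; equivalently, the corner definition is applied with $M$ restricted to $R$'s ambient box structure. Under this reading the uniqueness argument above goes through verbatim, and the rest of the proof is routine bookkeeping on the coordinate extremes.
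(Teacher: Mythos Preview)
Your proposal is correct and follows essentially the same approach as the paper: identify the $2^j$ extreme points (your set $V$ is the paper's $C(R)$), exhibit the inward witness to establish convexity, and then rule out all other points by producing two admissible witnesses. Your version is slightly more streamlined than the paper's, which separately argues ``no more convex corners'' and ``no concave corners,'' whereas you observe directly that two witnesses already violate the uniqueness clause of being a corner at all; you are also more explicit than the paper about the ambiguity in ``$\exists! y\in M$'' (the paper's own Part~1 only shows the other sign choices land outside $R$, not outside $M$, so it tacitly uses the same restricted reading you propose).
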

\begin{proof}
    \textbf{1. We claim that there are atleast $2^j$ convex corners for a box $R$: }

    Since $R$ is a box, $BB(R)=R$. We can thus use the axes of $R$ in formulating the  following extreme points parameterized by subsets of $J(R)$:  
    \begin{equation}\begin{split}
        C(R)=\{x_Q \in R: \min_{z_Q \in R} (I(z_Q))_q=(I(x_Q))_q, \forall q \in Q \text{ and }\\\max_{z_Q \in R} (I(z_Q))_p=(I(x_Q))_p, \forall p \in J(R) \setminus Q; \forall Q \subseteq J(R)\}
        \end{split}
    \end{equation}
    These points have their coordinates take on  either mins or maxes at each dimension from $J(R)$.
    
    We can check that every point in $C(R)$ is a convex corner.
    
    Let $Q \subseteq J(R)$ be a subset of the axes of $R$. For each point $x_Q \in C(R)$ depending on the subset $Q \subseteq J(R)$, consider the construction of the following witness point depending on $Q$:
    \begin{equation}
    y_Q \in R: 
    \begin{cases}
        (I(y_Q))_q=(I(x_Q))_q+1 & q\in Q\\
        (I(y_Q))_q=(I(x_Q))_q-1 & q \in J(R)\setminus Q
        \end{cases}
    \end{equation}
    By definition of $R$ being a box, all boxes spanned by subsets of $R$ still belong to $R$. Thus, 
    \begin{equation}
        \textsf{BB}(\{y_Q,x_Q\})\subseteq R
    \end{equation}
    This is unique since all other points of the form $y_{Q'} \in R$ depending on a subset $Q' \subseteq J(R), Q' \neq Q$ 
    result in $y_{Q'}$ coming outside the box $R$. 
    
    We have two cases:
    \begin{enumerate}
        \item $\exists q \in Q' \setminus Q$:
        
        By the definition of $Q$ providing the dimensions for minimizing $I(y_{Q})$ and $J(R)\setminus Q$ providing the dimensions for maximizing $I(y_{Q})$. This means that $(I(y_{Q'}))_q$ will surpass the maximum possible value in $R$.
    \begin{equation}
        (I(y_{Q'}))_q >(\max_{z_Q \in R} (I(z_Q))_q
    \end{equation}
    \item $\exists q \in Q \setminus Q'$:

    Similarly, $(I(y_{Q}))_q$ will surpass the minimum possible value in $R$.
    \begin{equation}
        (I(y_{Q'}))_q <\min_{x_Q \in R} (I(x_Q))_q
    \end{equation}
    \end{enumerate}
    \textbf{2. We prove that there are no more convex corners: }

    We consider all the remaining points in $R$:

    Every remaining point $ x\in R \setminus C(R)$ satisfies the following inequalities:
    \begin{equation}
        \exists q^* \in J(R): \min_{x' \in R} (I(x'))_{q^*}<(I(x))_{q^*} <\max_{x' \in R} (I(x'))_{q^*}
    \end{equation}
    For any of these points $x \in R \setminus C(R)$, there are atleast two different points $y_1,y_2 \in R$ acting as potential witnesses to $x \in R$ being a convex corner:
    \begin{subequations}\label{eq: different witnesses}
    \begin{equation}
    (I(y_1))_{q^*}=(I(x))_{q^*}-1; 
    \end{equation}
    \begin{equation}
        (I(y_2))_{q^*}=(I(x))_{q^*}+1
    \end{equation}
    \begin{equation}
        (I(y_{\bullet}))_{q}=\begin{cases}
            (I(x))_{q}+1 & \text{ if }(I(x))_q<\max_{x' \in R} (I(x'))_q\\
            (I(x))_{q}-1& \text{ else }
        \end{cases}: q \neq q^*
    \end{equation}
    \end{subequations}
    This shows that $y_1\neq y_2$, thus $x$ cannot be a convex corner. A convex corner has a unique witness.
    
    Since $x$ was arbitrarily chosen from $R\setminus C(R)$, we must have that none of these points are convex corners.
    
\textbf{3. We prove that there are no more concave corners: }

    If $x \in R \setminus C(R)$ is a concave corner, then it has a unique witness $y \in M\setminus R$ where the witness must satisfy:
    \begin{subequations}
    \begin{equation}
        \textsf{BB}(\{x,y\}) \setminus\{y\} \subseteq R
    \end{equation}
    \begin{equation}
        I(y)=I(x)+\sum_{q \in J(R)}d_qe_q; d_q  \in \{-1,1\}
    \end{equation}
    \end{subequations}
    According to Equations \ref{eq: different witnesses}, for $ x\in R\setminus C(R)$ there are two different points $y_1,y_2 \in R$ that also satisfy this equation. They cannot be witnesses to concavity since they belong to $R$. We pick these two points $y_1,y_2 \in BB(\{x,y\})$ to have coordinates $I(y_1),I(y_2)$ that differ from $I(y)$ in the dimensions $q_1,q_2 \in J(R)$.
    \begin{subequations}
    \begin{equation}
        I(y_1)=I(x)+\sum_{q \in J(R)\setminus \{q_1\}}d_{q}e_q+d_{q_1}'e_{q_1}; d_{q_1}'\neq d_{q_1} 
    \end{equation}
    \begin{equation}
        I(y_2)=I(x)+\sum_{q \in J(R)\setminus \{q_2\}}d_{q}e_q+d_{q_2}'e_{q_2}; d_{q_2}'\neq d_{q_2}, q_2\neq q_1 
    \end{equation}
    \end{subequations}
    Because $R$ is a box, we know that all points $z \in R$ must satisfy: 
    \begin{equation}
        \min_{z' \in R}(I(z'))_q\leq (I(z))_q\leq \max_{z' \in R}(I(z'))_q, \forall q \in J(R)
    \end{equation}
    However, $y \in M \setminus R$ has that:
    \begin{subequations}
    \begin{equation}
        \min_{z' \in R}(I(z'))_q\leq \min((I(y_1))_q,(I(y_2))_q,(I(y))_q) 
    \end{equation}
    \begin{equation}
        \leq (I(y))_q  \leq \max((I(y_1))_q,(I(y_2))_q,(I(y))_q)
    \end{equation}
    \begin{equation}
          \leq \max_{z' \in R} (I(z'))_q, \forall q \in J(R)
    \end{equation}
    \end{subequations}
    This means that $y \in R$, this is a contradiction. Thus, $x \in R \setminus C(R)$ is not a concave corner.

    We have shown that all $x \in R \setminus C(R)$ cannot be either a convex or concave corner.

    Thus, $C(R)$ is the complete set of convex corners of $R$.
    \end{proof}
    \subsection{Properties of $(k,D)\textsc{-RectLossyVVFCompression}$}
    We can show the following hardness relationship amongst the parameter pairs $(k,D)$, namely that both the $(k,D)\textsc{-RectLossyVVFCompression}$ problem and the $(k,D)\textsc{-RectLossyVVFCompression}$  problem are atleast as hard as their lower dimensional version. 
\begin{proposition}\label{prop: k'kD'Dsubprob}
    Any instance of the $(k',D')$\textsc{-RectLossyVVFCompression} problem can be viewed as an instance of the $(k,D)$\textsc{-RectLossyVVFCompression} problem if $k\geq k', D\geq D'$.

    Similarly, this relationship is true between the $(k,D)$\textsc{-RectLossyVVFCompression} problem and the $(k',D')$\textsc{-RectLossyVVFCompression} problem if $k\geq k', D\geq D'$. 
\end{proposition}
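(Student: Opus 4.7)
The plan is to construct an explicit input embedding that turns every $(k',D')$-instance into a $(k,D)$-instance on which feasible box covers with box summaries are in bijection with those of the original instance, and then check that the compression ratio is preserved. Concretely, given an instance $(X',f',\varepsilon)$ with $X' : [n_1]\times\cdots\times[n_{k'}] \to \mathbb{R}^{D'}$, I would first pad the voxel grid with $k-k'$ trivial axes of length~$1$ to obtain a $(k)$-dimensional voxel grid $[n_1]\times\cdots\times[n_{k'}]\times[1]\times\cdots\times[1]$, and pad every vector $X'_i \in \mathbb{R}^{D'}$ with $D-D'$ zero coordinates to obtain $X_i = (X'_i, 0,\dots,0) \in \mathbb{R}^{D}$. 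This preserves injectivity of the underlying vector field, satisfies the compactness-of-range condition, and yields a bona fide $(k,D)$-voxelized vector field~$X$.

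Next I would extend the piecewise multivariate polynomial $f' : \mathbb{R}^{D'} \to \mathbb{R}$ to $f : \mathbb{R}^{D} \to \mathbb{R}$ by the formula $f(x_1,\dots,x_D) := f'(x_1,\dots,x_{D'})$, which is syntactically the same algebraic string with the variables $x_{D'+1},\dots,x_D$ never appearing. This is still a piecewise multivariate polynomial, its length grows by at most an additive constant in the symbolic representation, and its degree is unchanged. Since $f(X_i) = f'(X'_i)$ for every voxel $i$, the energy-gap distortion function satisfies $d_f(X_i,X_j) = d_{f'}(X'_i,X'_j)$, so the Indexed Vietoris--Rips complexes coincide and the lossy constraint is preserved.

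Then I would verify the bijection between feasible solutions. Boxes in the padded grid whose projection onto the trivial axes equals $[1,1]\times\cdots\times[1,1]$ are in one-to-one correspondence with boxes in the original $k'$-dimensional grid (encoded by the same two corner points with the extra coordinates all equal to $1$); any box that does not take this form is redundant because no voxel lies outside the padded subgrid. Consequently, the map sending a box cover with box summaries $\mathcal{C}(X',f',\varepsilon)$ to the cover obtained by appending the trivial coordinates to every corner and leaving the summaries unchanged is an inclusion-preserving bijection onto the space of box covers with box summaries for $(X,f,\varepsilon)$ that only uses nonredundant boxes, and the solution map $g$ for the reverse direction simply projects the corners back to their first $k'$ coordinates.

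The final step is a size-preservation check: using Proposition~\ref{prop: size-boxcover-compression}, the only quantities that change under the embedding are $\textsf{size}(X)$ (which grows by a constant factor independent of the clustering since each padded vector costs $O(D-D')$ extra bits per voxel), $\textsf{size}(n)$ (unchanged, since the number of voxels is the same), $\textsf{size}(f)$ (increased by an additive constant), and the per-box corner encoding (increased by an additive constant $O((k-k')\,\textsf{size}(1))$ per box). These are all independent of the clustering choice, so the compression ratio and the combinatorial objective $\lvert\mathcal{S}\rvert$ are preserved exactly in the sense needed for a Karp reduction on the decision version and a trivial PTAS reduction on the optimization version. The mirror-image argument for $\textsc{VGridSpecialRect-3SC}$-style instances is identical, embedding $[q]^{k'}$ into $[q]^{k}$ at the slice where the last $k-k'$ coordinates are fixed to $1$ and padding each rectangle by $[1,1]$ in the extra axes. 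The only subtle point will be confirming that the optimal cover in the larger problem cannot ``cheat'' by using boxes that extend into the trivial padded dimensions in a nontrivial way, which follows because any such box intersects the populated slice in exactly the same set of voxels as its $k'$-dimensional projection.
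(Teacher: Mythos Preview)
Your proposal is correct and follows essentially the same approach as the paper: both construct a zero-padding embedding of the voxel grid and vector range, then argue that optimal solutions cannot exploit the padded dimensions. Your version is more careful than the paper's brief proof---in particular, your choice to pad the grid with trivial axes of length~$1$ cleanly preserves injectivity of the underlying vector field, whereas the paper pads to a larger grid and assigns the zero vector on the extra slice, which technically conflicts with the injectivity requirement in the definition of a voxelized vector field; your size-preservation bookkeeping via Proposition~\ref{prop: size-boxcover-compression} is also a welcome addition that the paper omits.
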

\begin{proof}
    For instance $(\tilde{X},\tilde{f},\varepsilon)$, with $\tilde{X}: [n_1]\times ... \times [n_{k'}] \rightarrow \mathbb{R}^{D'}$, $\tilde{f}: \mathbb{R}^{D'} \rightarrow \mathbb{R}$ we can form an instance $(X,f,\varepsilon)$ of  $(k,D)$\textsc{-RectLossyVVFCompression} with $X: [n_1]\times ... \times [n_{k}] \rightarrow \mathbb{R}^{D}$, $f: \mathbb{R}^{D} \rightarrow \mathbb{R}$ by the following two injective embeddings:
    \begin{equation}
        X_{(i_1,...,i_k)}:=\begin{cases}
            \tilde{X}_{(i_1,...,i_{k'})} \|\overbrace{(0,...,0)}^{D-D'} \ & \text{ if } i_j=1, k'<j\leq k
            \\
            \overbrace{(0,...,0)}^{D}& \text{ else}
        \end{cases}
    \end{equation}
    \begin{equation}
        f(x):=\begin{cases}
            \tilde{f}(x)& \text{ if }x\in \mathbb{R}^{D'}\times \{0\}^{D-D'}\\
            0 & \text{else}
        \end{cases}
    \end{equation}
    Certainly the optimal solution in the $(k,D)$\textsc{-RectLossyVVFCompression} problem cannot involve the extended dimensions $k-k'$ and $D-D'$. Thus  there is an injection between it and the solution to the instance $(\tilde{X},\tilde{f},\varepsilon)$ of the $(k',D')$\textsc{-RectLossyVVFCompression} problem
\end{proof}
\begin{proposition}\label{prop: recbitsize}
\begin{equation}
    \textsf{size}(\textsf{RecTupleBit}(L_1,...,L_n))=r+\sum_{i=1}^n(\textsf{RecTupleBit}(L_i)+r)
\end{equation}
where $C$ is the number of bits to encode the delimiters. (e.g. $C=2$ if the delimeters are the comma and two parentheses.
\end{proposition}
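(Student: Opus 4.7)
The plan is to prove the identity by a direct bookkeeping argument on the delimiters introduced by a single application of $\textsf{RecTupleBit}$, so no induction on the nesting depth is actually required for the stated equation itself. By definition, $\textsf{RecTupleBit}(L_1,\ldots,L_n)$ produces a tuple of the form $(\textsf{RecTupleBit}(L_1),\ldots,\textsf{RecTupleBit}(L_n))$ in the base case this reduces to the tuple of $\widetilde{\textsf{Bit}}(x_i)$'s. In either case, the outer tuple contributes exactly three kinds of delimiters: one opening parenthesis, one closing parenthesis, and $n-1$ separating commas. The paper fixes these as the length-$2$ bit strings $00$, $01$, $10$, so each delimiter costs a constant $r$ bits.

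I would first observe that the $\textsf{RecTupleBit}$ operator concatenates its subexpressions without any further modification beyond inserting these delimiters. Hence the total bit length decomposes additively as
\begin{equation*}
\textsf{size}(\textsf{RecTupleBit}(L_1,\ldots,L_n)) \;=\; \underbrace{r}_{\text{open paren}} \;+\; \sum_{i=1}^n \textsf{size}(\textsf{RecTupleBit}(L_i)) \;+\; \underbrace{(n-1)r}_{\text{commas}} \;+\; \underbrace{r}_{\text{close paren}}.
\end{equation*}
Grouping the $n$ comma/closing-paren contributions so that each inner encoding is paired with one of its $r$-bit neighbors yields
\begin{equation*}
\textsf{size}(\textsf{RecTupleBit}(L_1,\ldots,L_n)) \;=\; r \;+\; \sum_{i=1}^n \bigl(\textsf{size}(\textsf{RecTupleBit}(L_i)) + r\bigr),
\end{equation*}
which is exactly the stated formula (with the minor typographic understanding that the summand on the right hand side should be read as $\textsf{size}(\textsf{RecTupleBit}(L_i)) + r$, and that the constant $C$ referenced at the end of the statement coincides with $r$).

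There is essentially no obstacle here; the only care required is to verify that the case split in the definition of $\textsf{RecTupleBit}$ does not alter the count. In the recursive branch the inner sizes are exactly $\textsf{size}(\textsf{RecTupleBit}(L_i))$ by definition, while in the base branch each $L_i$ is a leaf value $x_i$ and $\textsf{RecTupleBit}(L_i) = \widetilde{\textsf{Bit}}(x_i)$, so $\textsf{size}(\textsf{RecTupleBit}(L_i)) = \textsf{size}(\textsf{Bit}(x_i)) + 2$; either way the additive decomposition above holds. For clients of this lemma, a straightforward induction on the nesting depth then yields a closed-form expansion over all the leaves and internal tuple nodes, which is how Proposition~\ref{prop: size-boxcover-compression} derives its explicit formula for $\textsf{size}(\mathcal{C}(X,f,\varepsilon))$.
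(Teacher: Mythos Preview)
Your proof is correct and follows essentially the same approach as the paper: the paper's proof is the one-line observation that a length-$n$ tuple contributes $n-1$ commas and $2$ parentheses (i.e.\ $n+1$ delimiters of $r$ bits each), and you have simply written out that bookkeeping in full and regrouped. Your additional remarks on the case split and the typographic fix for the summand are accurate but go beyond what the paper records.
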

\begin{proof}
    This follows by the recursive definition of \textsf{RecTupleBit} and the delimiting structure of a length $n$ tuple as $n-1$ commas and $2$ parentheses.
\end{proof}
\begin{proposition}\label{prop: appendix-sizes}
\begin{equation}
    \textsf{size}(\mathcal{C}(X,f,\varepsilon))= \lvert \mathcal{S}(X,f,\varepsilon)\rvert(10C+4\textsf{size}(n)+2v)+5C+\textsf{size}(\varepsilon^*)+\textsf{size}(\varepsilon)+\textsf{size}(f)
    \end{equation}
\end{proposition}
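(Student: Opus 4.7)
The plan is to peel the nested tuple structure of $\mathcal{C}(X,f,\varepsilon)$ one level at a time, applying Proposition~\ref{prop: recbitsize} at each level to accumulate both the payload bits and the delimiter bits. The top-level structure output by \textsf{ConcatParams} is a $2$-tuple whose first entry is itself a $|\mathcal{S}(X,f,\varepsilon)|$-tuple of pairs $(\textsf{Encode}(R),\bullet_R)$ and whose second entry is the triple $(\varepsilon^*,\varepsilon,f)$. Because \textsf{RecTupleBit} distributes over nested tuples, I can compute $\textsf{size}(\mathcal{C}(X,f,\varepsilon))$ by summing, at each depth, a term of the form $(\text{arity}+1)r$ for the opening delimiter plus the commas/closing delimiter, along with the sizes of the subterms.

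First I would descend into the tail $(\varepsilon^*,\varepsilon,f)$: by Proposition~\ref{prop: recbitsize} its contribution is exactly $4r+\textsf{size}(\varepsilon^*)+\textsf{size}(\varepsilon)+\textsf{size}(f)$. Combined with the outer $2$-tuple delimiters ($3r$ at the top level), this accounts for the constant ``$5r$'' plus the three explicit size terms in the closed form. Next I would analyze the per-rectangle payload: each pair contributes $3r$ of delimiters plus $\textsf{size}(\textsf{Encode}(R))+\textsf{size}(\bullet_R)$. Unfolding $\textsf{Encode}(R)=(c_1(R),c_2(R))$ adds another $3r$, and each corner $c_i(R)\in\mathbb{Z}^k$ is itself a $k$-tuple of integers, each coordinate of size at most $\textsf{size}(n)$ by Assumption~\ref{ass: voxeldata} on the voxel grid. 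Aggregating these, every rectangle contributes a fixed delimiter cost plus a constant multiple of $\textsf{size}(n)$ for the corner coordinates and $v(\textsf{size}(f),n,\textsf{size}(X))$ for the summary value $\bullet_R=\hat{f}(R)$. Multiplying by $|\mathcal{S}(X,f,\varepsilon)|$ and adding the one extra opening delimiter from the outer $|\mathcal{S}|$-tuple produces exactly the factored form stated in the proposition.

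The main obstacle is purely bookkeeping: I need to count delimiters consistently across every nesting level so that the constants in front of $r$ come out to the exact values $10r$ (per rectangle) and $5r$ (outside the summation). The safest way to avoid off-by-one errors is to write out one explicit symbolic expansion of the recursion in Proposition~\ref{prop: recbitsize} on the full nested tuple and then collect coefficients of $r$, $\textsf{size}(n)$, and $v$ separately. Since $\widetilde{\textsf{Bit}}$ appends a fixed $2$-bit suffix to each leaf number, any additional bits it introduces can be absorbed into the $v$ and $\textsf{size}(\cdot)$ terms and do not change the asymptotic form; verifying this absorption is the only subtle point. Once the bookkeeping is done, the identity follows immediately from the recursive formula of Proposition~\ref{prop: recbitsize}.
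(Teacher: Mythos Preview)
Your approach is essentially identical to the paper's: both proofs repeatedly apply Proposition~\ref{prop: recbitsize} to each level of the nested tuple produced by \textsf{ConcatParams} and then collect coefficients. One substantive discrepancy to fix: you unfold each corner $c_i(R)\in\mathbb{Z}^k$ as a $k$-tuple of coordinates, but the paper instead encodes each corner as a \emph{single} flat index in $[n]$ (using the bijection from the voxel-domain assumption), so that $\textsf{size}(c_i(R))=\textsf{size}(n)$ and $\textsf{size}(\textsf{Encode}(R))=3r+2\,\textsf{size}(n)$; this is what makes the coefficient of $\textsf{size}(n)$ come out to the dimension-independent value $4$ rather than something depending on $k$. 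Your quick sanity check that $3r+4r$ yields the ``$5r$'' term is also off, but you already flagged delimiter bookkeeping as the hazard and deferred to a full symbolic expansion, so that is not a real gap in the plan.
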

\begin{proof}
By repeated use of Propostion \ref{prop: recbitsize}:
\begin{subequations}
    \begin{equation}
        \textsf{size}(\textsf{Encode}(R))= C+2(\text{size}(n)+C)
        \end{equation}
        \begin{equation}
        \textsf{size}((\textsf{Encode}(R),\bullet_R))=C+2(v+\textsf{size}(\textsf{Encode}(R))+C)
    \end{equation}
    \begin{equation}
        \textsf{size}((\varepsilon^*,\varepsilon,f))=C+\textsf{size}(\varepsilon^*)+\textsf{size}(\varepsilon)+\textsf{size}(f)+3C
    \end{equation}
    \begin{equation}
    \begin{split}
        \textsf{size}(\mathcal{C}(X,f,\varepsilon))=(\lvert \mathcal{S}(X,f,\varepsilon)\rvert+1)C+\lvert \mathcal{S}(X,f,\varepsilon)\rvert \textsf{size}((\textsf{Encode}(R),\bullet_R))\\+4C+\textsf{size}(\varepsilon^*)+\textsf{size}(\varepsilon)+\textsf{size}(f)
        \end{split}
    \end{equation}
    \end{subequations}
    \begin{subequations}
    \begin{equation}
    \therefore \qquad \textsf{size}(((\textsf{Encode}(R),\bullet_R), \cdots )_{R \in \mathcal{S}(X,f,\varepsilon)})\end{equation}
    \begin{equation}
    \begin{split}
        =(\lvert \mathcal{S}(X,f,\varepsilon)\rvert+1)C+\lvert \mathcal{S}(X,f,\varepsilon)\rvert (C+2(v+C+2(\text{size}(n)+C)+C))\\
        +4C+\textsf{size}(\varepsilon^*)+\textsf{size}(\varepsilon)+\textsf{size}(f)
        \end{split}
    \end{equation}
    \begin{equation}
    =\lvert \mathcal{S}(X,f,\varepsilon)\rvert(10C+4\textsf{size}(n)+2v)+5C+\textsf{size}(\varepsilon^*)+\textsf{size}(\varepsilon)+\textsf{size}(f)
    \end{equation}
\end{subequations}
\end{proof}
    \subsection{Interval Tree Data Structure}\label{sec: interval-tree}
    An interval tree \cite{edelsbrunner1980dynamic} is a dynamic data structure that stores a collection of intervals on the real line. It has three operations that can be performed on it. Let $\mathcal{T}$ denote an interval tree. The three operations and their complexity are shown below:
    \begin{enumerate}
        \item (Insert):  $\textsf{Insert}(\mathcal{T},I): I= [a,b], a,b \in \mathbb{R}$
        \begin{enumerate}
            \item Adds an interval to the collection of intervals stored by $\mathcal{T}$. 
            \item This takes time $O(\log(n))$.
        \end{enumerate}
        \item (Delete):  $\textsf{Delete}(\mathcal{T},I): I= [a,b], a,b \in \mathbb{R}$
        \begin{enumerate}
            \item Delete an existing interval from $\mathcal{T}$. If there is none, then do nothing.
        \item This takes time $O(\log(n))$.\end{enumerate} 
        \item (Query) $\textsf{Query}(\mathcal{T},I): I= [a,b], a,b \in \mathbb{R}$
        \begin{enumerate}
            \item Queries for all possible intervals  in $\mathcal{T}$ intersecting with $I$ an existing interval from $\mathcal{T}$. 
            \item This takes time $O(\log(n)+k)$ where $k$ is the number of intervals returned by the query.
        \end{enumerate}
    \end{enumerate}
    \clearpage
\subsection{Subroutines for the the Sweep-Line Algorithm \ref{alg: sweep-line-alg} Implementation}
The first algorithm determines the adjacent neighboring intervals around an interval $I$ as determined by the nearest two intervals from $\mathcal{I}_1$. This is a subroutine of the Sweep Line Algorithm \ref{alg: sweep-line-alg}. 
\begin{algorithm}[!h]
\SetAlgoLined
\SetKwComment{Comment}{/* }{ */}
\caption{Adjacent Neighbors in Dimension $1$ of an Interval}\label{alg: adj-neighbors}
\SetKwFunction{FAdjNeigh}{AdjacentNeighbors}
\SetKwProg{Fn}{Function}{:}{}
\KwData{$(I,\mathcal{I}_1): $ an interval $I$ and a collection of intervals $\mathcal{I}_1$}
\KwResult{$(C_l,C_r): $ a pair of intervals. The interval $C_l$ is adjacent to $I$ on the left and in between $I$ and an interval from $\mathcal{I}_1$. The interval $C_r$ is adjacent to $I$ on the right and in between $I$ and an interval from $\mathcal{I}_1$.}
\Fn{\FAdjNeigh{$I,\mathcal{I}_1$}}{
$i_{\min}\gets \min(I)-1$\\
$i_{\max}\gets \max(I)+1$\\
$(I_{l},\text{``sign}_o^l\text{"}) \gets$ \text{BinarySearch}($1,N,i_{\min}, \mathcal{I}_1, -1$)\Comment{Binary Search for Closest Integer Key to $i_{\min}$ in Dimension $1$.}
\uIf{$\text{``sign}_o^l\text{"}=+1$}{
$C_l \gets [\max(I_l)+1,i_{\min}]$\Comment*[r]{The Interval Strictly in between the Left Endpoint of $I$ and the Right Endpoint $I_r$}
}\Else{
$C_l \gets [i_{\min}+1,i_{\min}+1]$ \Comment*[r]{The Left Endpoint of $I$}
}
$(I_{r},\text{``sign}_o^r\text{"}) \gets$ \text{BinarySearch}($1,N,i_{\max}, \mathcal{I}_1, +1$)\Comment{Binary Search for Closest Integer Key to $i_{\max}$ in Dimension $1$.}
\uIf{$\text{``sign}_o^l\text{"}=-1$}{
$C_r \gets [i_{\max}, \min(I_r)-1]$\Comment*[r]{The Interval Strictly in between the Right Endpoint of $I$ and the Left Endpoint $I_r$}
}\Else{
$C_r \gets [i_{\max}-1,i_{\max}-1]$ \Comment*[r]{The Right Endpoint of $I$}
}\Return $(C_l,C_r)$
}
\end{algorithm}
\clearpage
The second subroutine is the binary search subroutine of the Adjacent Neighbors Algorithm \ref{alg: adj-neighbors}
\begin{algorithm}[!h]
\SetAlgoLined
\SetKwComment{Comment}{/* }{ */}
\caption{Binary Search within dimension $1$}\label{alg: binary-search}
\SetKwFunction{FBS}{BinarySearch}
\SetKwProg{Fn}{Function}{:}{}
\KwData{$a_1,b_1 \in \mathbb{N}: a_1<b_1$ are the endpoints to search in dimension $1$. $T_1 \in \mathbb{N}$ is a target integer. $\mathcal{I}_1$ is a sorted array of intervals. $\text{``sign"} \in \{-1,1\}$ determines whether to search to left or right of target. }
\KwResult{$(I,\text{``}\text{sign}_o\text{"})$: The interval $I$ with endpoint $x$ closest to $T_1$ searched for in the direction of $\text{``sign"}$. $\text{``}\text{sign}_o\text{"}=-1$ if $x$ is a left endpoint; $\text{``}\text{sign}_o\text{"}=+1$ if $x$ is a right endpoint.}
\Fn{\FBS{$a_1$,$b_1$,$T_1$,$\mathcal{I}_1$,$\text{``sign"}$}}{
$m_1\gets \lfloor \frac{a_1+b_1}{2}\rfloor$\\
\uIf{$\text{``sign"}=-1$}{
$\text{``}\text{sign}_o\text{"} \gets -1$\Comment*[r]{Default to a Left Endpoint}
\While{$a_1<m_1$}{
\uIf{$\max(\mathcal{I}_1[m_1]) < T_1$}{
$a_1 \gets m_1$\\
$\text{``}\text{sign}_o\text{"} \gets +1$
}\uElseIf{$\min(\mathcal{I}_1[m_1]) < T_1$}{
$a_1 \gets m_1$\\
$\text{``}\text{sign}_o\text{"} \gets -1$
}\Else{
$b_1 \gets m_1$
}
$m_1\gets \lfloor \frac{a_1+b_1}{2}\rfloor$
}}\uElseIf{$\text{``sign"}=1$}{
$\text{``}\text{sign}_o\text{"} \gets +1$\Comment*[r]{Default to a Right Endpoint}
\While{$m_1< b_1$}{
\uIf{$T_1<\min(\mathcal{I}_1[m_1])$}{
$b_1 \gets m_1$\\
$\text{``}\text{sign}_o\text{"} \gets -1$
}\uElseIf{$T_1<\max(\mathcal{I}_1[m_1])$}{
$b_1 \gets m_1$\\
$\text{``}\text{sign}_o\text{"} \gets +1$
}\Else{
$a_1 \gets m_1$
}
$m_1\gets \lfloor \frac{a_1+b_1}{2}\rfloor$
}
}
\Return $(\mathcal{I}_1[m_1],\text{``}\text{sign}_o\text{"})$
}
\end{algorithm}
\end{document}